\theoremstyle{definition}
\newtheorem{theorem}{Theorem}
\theoremstyle{plain}
\newtheorem{proposition}{Proposition}[section]
\theoremstyle{remark}
\newtheorem{proof_outl}{Proof Outline}[section]
\title{Maximal fairness}
\author{MaryBeth Defrance}
\affiliation{%
  \institution{University of Ghent}
  \city{Ghent}
  \country{Belgium}
  \postcode{9000}
}
\email{marybeth.defrance@ugent.be}
\author{Tijl De Bie}
\affiliation{%
  \institution{University of Ghent}
  \city{Ghent}
  \country{Belgium}
  \postcode{9000}
}
\email{tijl.debie@ugent.be}
\begin{document}

\begin{abstract}
Fairness in AI has garnered quite some attention in research, and increasingly also in society. The so-called "Impossibility Theorem" has been one of the more striking research results with both theoretical and practical consequences, as it states that satisfying a certain combination of fairness measures is impossible. To date, this negative result has not yet been complemented with a positive one: a characterization of which combinations of fairness notions \emph{are} possible. This work aims to fill this gap by identifying maximal sets of commonly used fairness measures that can be simultaneously satisfied. The fairness measures used are demographic parity, equal opportunity, false positive parity, predictive parity, predictive equality, overall accuracy equality and treatment equality. 
We conclude that in total 12 maximal sets of these fairness measures are possible, among which seven combinations of two measures, and five combinations of three measures.
Our work raises interest questions regarding the practical relevance of each of these 12 \emph{maximal fairness} notions in various scenarios.
\end{abstract}

\begin{CCSXML}
<ccs2012>
   <concept>
       <concept_id>10010147.10010178</concept_id>
       <concept_desc>Computing methodologies~Artificial intelligence</concept_desc>
       <concept_significance>500</concept_significance>
       </concept>
 </ccs2012>
\end{CCSXML}

\ccsdesc[500]{Computing methodologies~Artificial intelligence}

\keywords{Fairness, Fairness in AI, Fairness definitions, confusion tables, Combining Fairness definitions}

\maketitle

\section{Introduction}

The ProPublica article about the risk assessment system COMPAS generated quite some buzz in the field of fairness of AI \cite{COMPAS_article}. It started raising questions on what characteristics an AI system needs to satisfy for it to be fair in a practical, intuitive, or legal sense. One of the main criticisms from the authors of the article were the differing false positive and false negative rates for black people as compared to for white people. In response, the creators of the COMPAS-tool defended their system by arguing that it was properly calibrated, meaning that the predicted probability of recidivism was accurate for both demographic groups, and that this made the system fair. One might argue that ideally these properties would both be satisfied. However, the works of Chouldechova et al. \cite{disparate_impact_fairness}, Hardt et al. \cite{Equal_opportunity}, and Kleinberg et al. \cite{trade_offs_fairness_risk_scores} all led to the same conclusion, colloquially referred to as the "Impossibility Theorem". It states that it is mathematically \emph{impossible} to have calibration, equal false positive and equal false negative rates all at the same time, except in a practically irrelevant degenerate case. (See Sec.~\ref{sec:related} for some further details.)

\paragraph{Contributions}
To the best of our knowledge, this negative result has never been complemented with a positive one, namely the characterisation of maximal sets of fairness notions that can be simultaneously satisfied. Thus, in this paper, we investigate which and how many of the more frequently used fairness measures can be combined, leading to a set of maximal notions of fairness. We do this for the simplest but commonly studied case of binary classification.
Out of a range of seven commonly used fairness measures, we identify a total of 12 maximal combinations, including seven maximal combinations of two and five maximal combinations of three fairness measures.
Our results trigger intriguing questions on the practical relevance and interpretation of each of these maximal fairness notions.

\section{Related Work}\label{sec:related}

Most of the conceptual research about fairness in machine learning is about the introduction of new fairness measures, about the interpretation of previously proposed fairness notions, about techniques to satisfy fairness measures, or about so-called impossibility theorems that show that certain combinations of fairness measures are impossible to achieve. Here we survey the work most directly related to the present paper.

\subsection{Surveys on Fairness definitions}

Verma S. and Rubin J. \cite{Fairness_definitions_explained} discuss different fairness measures for classification with a focus on creating a human-understandable interpretation of them and applying them to a specific use case. A more in-depth discussion about the design of a predictive model concerning pitfalls and biases that occur in the process can be found in Mitchell et al.  \cite{Algorithmic_fairness_choices_ass_def}. This paper also covers different fairness measures and relates the choice between these definitions back to the discussion of design choices, and it lightly touches on the impossibilities of combining certain fairness measures when working with scores instead of binary decisions. The survey of N. Mehrabi et al. contains an extensive list of possible biases and fairness measures \cite{large_bias_survey}. The fairness measures are grouped by type, group, subgroup or individual. It also discusses methods to satisfy them. Ruf and Detyniecki \cite{fairness_compass} strive to create a type of Fairness Compass which would help practitioners decide what fairness measure is best suited in a given problem. Throughout the work, different fairness measures are clearly explained through the use of confusion matrices and with a theoretical use case. 

\subsection{Prior work on combining Fairness definitions}

Most research related to combining fairness measures focusses on the impossibility of certain combinations. 
Chouldechova A. \cite{disparate_impact_fairness} discuss how adherence to a certain criterion can still lead to a considerable disparate impact between groups. It states that it is impossible to achieve equality between groups for false positive rate (FPR), positive predictive value (PPV) and false negative rate (FNR) when the base rates differ between groups. Kleinberg et al.  \cite{trade_offs_fairness_risk_scores} discuss combining three different fairness measures in the context of risk scoring. It is focussed on scoring instead of binary classification. They come to the conclusion that combining those three measures is only possible under unique circumstances, namely if the groups have equal base rates or if the model is capable of perfect prediction. 
Berk et al. \cite{Extensive_trade_offs_fairness} focus on the trade-off that would occur when using multiple incompatible fairness measures such as discussed in the work of Chouldechova A. \cite{disparate_impact_fairness}. They also include different techniques that can be used to achieve this goal.

\section{Maximal Combinations of Fairness Definitions}

Common fairness measures can be expressed by referring to the confusion matrices (as in Table~\ref{tab:ex_conf_matrix}) for the different protected demographic groups. They are defined by choosing a statistic computed on a confusion matrix, for example the false negative rate or some other statistic, and requiring this statistic to be equal across the demographic groups. We will demonstrate this for seven common fairness measures in Sec.~\ref{sec:fairnessdefs}. This approach makes combining fairness metrics easy: simply by combining the constraints they impose on the contingency matrices.
In this paper we assume there are two demographic groups, and we will refer to the statistics computed on the respective confusion matrix using subscripts a and b. Note that in their most common definition, confusion matrices contain counts. However, because this simplifies our derivations somewhat, in this paper we normalize them by dividing each of the four cells by their sum.

\begin{table}[h]
\begin{center}
\caption{A confusion matrix, showing the relation between ground truth labels, predicted labels, true positives, true negatives, false positives, and false negatives. Commonly used fairness notions require a particular statistic computed on the confusion matrices for different demographic groups to be equal to each other.}
\label{tab:ex_conf_matrix}
\begin{tabular}{|l|c|c|c|}
\cline{3-4}
 \multicolumn{2}{c|}{}  & \multicolumn{2}{c|}{Predicted}  \\ \cline{3-4}
  \multicolumn{2}{c|}{}  & Positive & Negative \\ \hline
 \multirow{2}{*}{True} & Positive & TP & FN \\ \cline{2-4}
 & Negative & FP & TN \\ \hline
\end{tabular}
\end{center}
\end{table}

By definition, the (normalized) confusion matrices satisfy certain constraints of their own. These constraints are expressed in the system of equations depicted in Figure~\ref{eq:constraints}. Here, the variables $1-x$ and $1-y$ denote the base rates of group a and group b respectively. The base rate is the fraction of positive samples in the data set for a given group. The inequalities will be left out of subsequent derivations in order to simplify the notations.

\begin{figure}
\[ \begin{cases}
	TP_a + FN_a = 1 - x \\
	TN_a + FP_a = x \\
	TP_b + FN_b = 1 - y \\
	TN_b + FP_b = y \\
	0 <= TP_a <= 1-x, 0 <= TP_b <= 1-y  \\
	0 <= TN_a <= x, 0 <= TN_b <= y \\
	0 <= FP_a <= x, 0 <= FP_b <= y \\
	0 <= FN_a <= 1-x, 0 <= FN_b <= 1-y \\ 
\end{cases}  \]
\caption{Constraints always present in the system of equations, due to the properties of the confusion matrix.}
\Description{The constraints that come from the properties of the confusion matrix}
\label{eq:constraints}
\end{figure}

Before investigating which fairness measures can be combined, we must be precise about what we mean by that.
To do this, let us first point out that any combination of fairness measures can be satisfied if the base rates are equal between both demographic groups, i.e. if $x=y$. Indeed, in those cases it is possible for the matrices to be exact copies of each other. However as base rates are a property of the data and hence cannot be controlled, this possibility must not lead us to judge that any set of fairness definitions is compatible. Another possibility we will see in our derivations below is that certain combinations are possible only if some of the variables in the confusion table are equal to a fixed (and often trivial) value. Again, although this constraint can be satisfied in principle, it leaves very little freedom to the classifier, effectively preventing it from being useful.
Thus, we argue that also such combinations cannot be deemed `possible'.
The following definition determines what properties are necessary to deem a combination of fairness measures as `possible'.

\begin{definition}[Possibility of combining fairness measures] \label{def:possible}
A combination of fairness measures is deemed \emph{possible} if after combining the constraints, all elements in both confusion matrices can still take on values within a subset of $[0,1]$ with non-zero Lebesgue measure, and this for all pairs of base rates $(x,y)$ from a subset of $[0,1]^2$ with non-zero 2-dimensional Lebesgue measure. Or informally speaking, the elements of the confusion matrices should be able to take on values within a non-trivial range, and this for all base rate pairs from within a non-trivial 2-dimensional range.
\end{definition}

\subsection{Implemented Fairness measures}\label{sec:fairnessdefs}

The fairness measures used in this paper are demographic parity, equal opportunity, false positive parity, predictive parity, predictive equality, overall accuracy equality and treatment equality. More detailed information about these measures can be found in Table~\ref{tab:all_fairness_def}. This table includes two equivalent definitions: one expressed in terms of probabilities, and another in terms of a constraint on the confusion matrices. The table also mentions the statistical property it requires to be equal between groups (if it has a standard name). Finally, it mentions the `orientation' of the constraint on the confusion matrices, which may offer insight into the possible combinations of fairness notions. Other frequently used fairness measures exist, particularly in other settings than binary classification, but in the paper our focus is on measures for this setting only. Note that each of these measures only require one constraint to be satisfied, and by combining them the number of constraints is growing.

Below we discuss each of the considered fairness measures in some greater detail. Figure~\ref{fig:orient_defs} also illustrates them more visually, highlighting the orientations on the confusion matrix, and in this way also showing their differences and similarities.

\begin{table}
\begin{center}
\caption{The list of fairness measures considered in this paper, with their definition expressed in probabilities (second column) as well as by means of an expression evaluated on the confusion matrices (third column), and the statistical property it requires to be equal (fourth column). The fourth column mentions the orientation of the constraints on the confusion matrix.}
\label{tab:all_fairness_def}
\begin{tabular}{|l|c|c|c|c|}
\hline
Name & In probabilities & In the confusion matrix & Statistic & Orientation \\ \hline
\multirow[c]{3}{*}{Demographic Parity} & \multirow[l]{3}{*}{$P(\hat{Y} | A=0) = P(\hat{Y} | A=1)$} & \multirow[c]{3}{2.8cm}{$\frac{FP_A+TP_A}{FP_A+TP_A+FN_A+TN_A} = \frac{FP_B+TP_B}{FP_B+TP_B+FN_B+TN_B}$}  & \multirow[c]{3}{2.6cm}{\centering Positive Rate  (PR)} & \multirow[c]{2}{*}{$B_v$ Board} \\
 & &  & & \\ 
 & &  & &  \\ \hline
\multirow[c]{2}{*}{Equal Opportunity} & \multirow[l]{2}{3.3cm}{$P(\hat{Y} = 1 | A = 0, Y=1) = P(\hat{Y} = 1 | A = 1, Y=1)$} & \multirow[c]{2}{*}{$\frac{TP_A}{TP_A+FN_A} = \frac{TP_B}{TP_B+FN_B}$}  & \multirow[c]{2}{2.6cm}{\centering False Negative Rate (FNR)} & H \\
 & &  & & Horizontal \\ \hline
\multirow[c]{2}{*}{False positive parity} & \multirow[l]{2}{3.3cm}{$P(\hat{Y} = 1 | A = 0, Y=0) = P(\hat{Y} = 1 | A = 1, Y=0)$} & \multirow[c]{2}{*}{$\frac{FP_A}{FP_A+TN_A} = \frac{FP_B}{FP_B+TN_B}$} & \multirow[c]{2}{2.6cm}{\centering False Positive Rate (FPR)} & H \\
 & &  & & Horizontal\\ \hline
\multirow[c]{2}{*}{Predictive Parity} & \multirow[l]{2}{3.3cm}{$P(Y=1 | \hat{Y}=1, A=0)=P(Y=1 | \hat{Y}=1, A=1)$} & \multirow[c]{2}{*}{$\frac{TP_A}{TP_A+FP_A} = \frac{TP_B}{TP_B+FP_B}$}  & \multirow[c]{2}{2.6cm}{\centering Positive Prediction Value (PPV)} & V \\ 
 & &  &  & Vertical \\ \hline
\multirow[c]{2}{*}{Predictive Equality} & \multirow[l]{2}{3.3cm}{$P(Y=1 | \hat{Y} = 0, A=0) = P(Y=1 | \hat{Y} = 0, A=1)$} & \multirow[c]{2}{*}{$\frac{FN_A}{FN_A + TN_A} = \frac{FN_B}{FN_B + TN_B}$}  & \multirow[c]{2}{2.6cm}{\centering False Omission Rate (FOR)} & V \\ 
 & &  & & Vertical \\ \hline
\multirow[l]{3}{2.7cm}{Overall accuracy Equality} & \multirow[l]{3}{3cm}{$P(\hat{Y} = Y | A=0) = P(\hat{Y} = Y | A=1) $} & \multirow[c]{3}{2.8cm}{$\frac{TP_A + TN_A}{TP_A+FP_A+TN_A+FN_A} = \frac{TP_B + TN_B}{TP_B+FP_B+TN_B+FN_B} $}  & \multirow[c]{3}{2.6cm}{\centering Accuracy (ACC)} & \multirow[c]{3}{*}{$B_d$ Board}\\ 
 & &  & & \\
 & &  & & \\ \hline
\multirow[c]{2}{*}{Treatment Equality} & \multirow[c]{2}{*}{$ - $} & \multirow[c]{2}{*}{$ \frac{FN_A}{FP_A} = \frac{FN_B}{FP_B}$}  & \multirow[c]{2}{*}{} & D\\  
& &  & & Diagonal\\ \hline

\end{tabular}
\end{center}
\end{table}

\begin{definition}[Demographic Parity \cite{Dem_parity}] Demographic parity is a fairness measure which requires that both groups proportionally receive the positive outcome (TP + FP) equally regardless of the ground truth and thus also regardless of the base rates. This independence makes demographic parity unusual. Because demographic parity is independent of the ground truth it often tends to result in lower accuracies when enforced. Note that a similar notion is conditional statistical parity \cite{Predictive_equality}, which is similar except that it requires equality between sensitive groups when restricted to people that share a certain attribute. Conditional statistical parity will not be discussed in this paper.
\end{definition}
\begin{definition}[Equal Opportunity \cite{Equal_opportunity}] Equal opportunity is both a measure of itself and also one of two conditions that is required for equalised odds. Equal opportunity means that people have the same probability regardless of their sensitive attribute of receiving a negative prediction given that they belong in the positive category.
\end{definition} 
\begin{definition}[False positive parity \cite{Equal_opportunity}] False positive parity is the second condition for equalised odds, alongside equal opportunity. It requires that the probability of receiving a positive prediction when the actual class is negative is equal across sensitive groups. If both false positive parity and equal opportunity are satisfied then equalised odds is satisfied.
\end{definition}
\begin{definition}[Predictive Parity \cite{disparate_impact_fairness}] Predictive parity differs from equal opportunity in that it is concerned with the probabilities of the actual class and not of the predicted class. In other words, rather than horizontal ratios in the confusion matrices for equal opportunity and false positive parity, predictive parity considers a vertical ratio. In order for predictive parity to be satisfied, the probability that a positive prediction is correct must be equal across sensitive groups. 
\end{definition}
\begin{definition}[Predictive Equality \cite{Predictive_equality}] Predictive equality is very analogous to predictive parity.  It requires that the probability for a negative prediction to be correct is equal across sensitive groups. The combination of satisfying both predictive parity and predictive equality is also called conditional use accuracy equality. 
\end{definition}
\begin{definition}[Overall accuracy Equality \cite{Extensive_trade_offs_fairness}] Overall accuracy equality requires that the accuracy of the classifier is equal between sensitive groups. While straightforward and intuitive, this definition on its own is often not sufficient to guarantee intuitive fairness. Indeed, it implicitly assumes an equal impact of positive and negative predictions, which is often unjustified in practice.
\end{definition}
\begin{definition}[Treatment Equality \cite{Extensive_trade_offs_fairness}] Treatment equality is an atypical fairness measure as it cannot be expressed as a probability nor a commonly used statistical property. It is a constraint on the false positives and false negatives, requiring their ratio to be equal between sensitive groups.  
\end{definition}

\begin{figure}
\includegraphics[width = 0.75\linewidth]{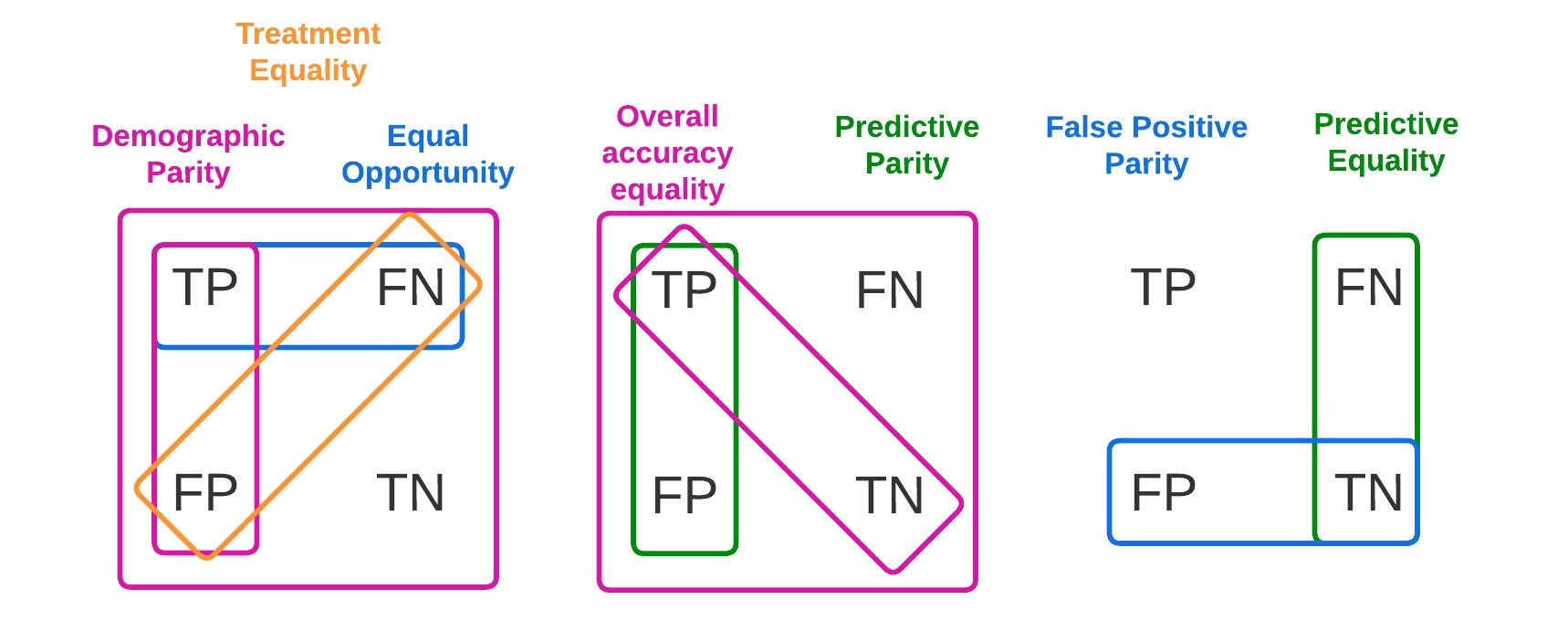}
\caption{Orientations of the fairness measures in the confusion matrix.}
\Description{A visual representations of what areas of the contingency table are effected by the constraint of the fairness measures.}
\label{fig:orient_defs}
\end{figure}

\subsection{Combining two Fairness measures}

We begin with a result on the \emph{pairwise} combinations of the fairness measures discussed in Sec.~\ref{sec:fairnessdefs}.

\begin{figure}[!hb]
\[ 
\begin{cases}
	TP_a + FN_a = 1 - x \\
	TN_a + FP_a = x \\
	TP_b + FN_b = 1 - y \\
	TN_b + FP_b = y \\
	\frac{TP_a}{TP_a + FN_a} = \frac{TP_b}{TP_b + FN_b} \\ 
	\frac{TP_a + FP_a}{TP_a + FP_a + FN_a + TN_a} = \frac{TP_b + FP_b}{TP_b + FP_b + FN_b + TN_b}
\end{cases} 
\Longleftrightarrow
\begin{cases}
	TP_a + FN_a = 1 - x \\
	TN_a + FP_a = x \\
	TP_b + FN_b = 1 - y \\
	TN_b + FP_b = y \\
	\frac{TP_a}{1-x} = \frac{TP_b}{1-y} \\ 
	TP_a + FP_a = TP_b + FP_b
\end{cases}
\]

\[ 
\Longleftrightarrow
\begin{cases}
	TP_a + FN_a = 1 - x \\
	TN_a + FP_a = x \\
	TP_b + FN_b = 1 - y \\
	TN_b + FP_b = y \\
	TP_a = \frac{1-x}{1-y} TP_b \\ 
	FP_a = TP_b + FP_b - \frac{1-x}{1-y} TP_b
\end{cases} 
\Longleftrightarrow
\begin{cases}
	TP_a + FN_a = 1 - x \\
	TN_a + FP_a = x \\
	TP_b + FN_b = 1 - y \\
	TN_b + FP_b = y \\
	TP_a = \frac{1-x}{1-y} TP_b \\ 
	FP_a = (1-\frac{1-x}{1-y}) TP_b + FP_b
\end{cases} 
\Longleftrightarrow
\begin{cases}
	FN_a = 1-x-\frac{1-x}{1-y} TP_b \\ 
	TN_a = x - \frac{x-y}{1-y}TP_b - FP_b \\
	FN_b =  1 - y -TP_b \\ 
	TN_b = y - FP_b \\
	TP_a = \frac{1-x}{1-y} TP_b \\ 
	FP_a = \frac{x-y}{1-y}TP_b + FP_b
\end{cases}
\]
\caption{Combining the fairness measures of Demographic Parity with Equal Opportunity.}
\Description{A system of equations where the constraint of the base rates on the variables are denoted, but also the range in which each variable needs to lie. }
\label{eq:combine_dem_par_eq_opp}
\end{figure}

\begin{figure}
\[
\begin{cases}
	0 \le 1-x-\frac{1-x}{1-y} TP_b \\ 
	1-x-\frac{1-x}{1-y} TP_b \le 1 - x \\	
	0 \le x - \frac{x-y}{1-y}TP_b - FP_b \\
	x - \frac{x-y}{1-y}TP_b - FP_b \le x \\
 
	0 \le \frac{x-y}{1-y}TP_b + FP_b	\\
	\frac{x-y}{1-y}TP_b + FP_b \le x 
\end{cases}
\Longleftrightarrow
\begin{cases}
	\frac{1}{1-y} TP_b \le 1 \\ 
	\frac{1-x}{1-y} TP_b \ge 0 \\	
	FP_b \le x + \frac{-x+y}{1-y}TP_b \\
	\frac{-x+y}{1-y}TP_b  \le FP_b \\
	\frac{-x+y}{1-y}TP_b \le  FP_b	\\
	 FP_b \le x + \frac{-x+y}{1-y}TP_b
\end{cases}
\] 
\caption{Shortened version of calculating the constraints on x and y when combining Demographic Parity with Equal Opportunity, full system of inequalities can be found in Appendix~\ref{subsub:appl_inequalities}. }
\Description{Calculating the constraints on x and y when combining Demographic Parity with Equal Opportunity. Most of the constraints are just properties of the variables and thus insignificant. Four of the constraints however do constrain the value of x, given y and the value of the free variables}
\label{eq:constraint_dem_par_eq_opp}
\end{figure}

\begin{figure}[!h]
\includegraphics[width=0.85\linewidth]{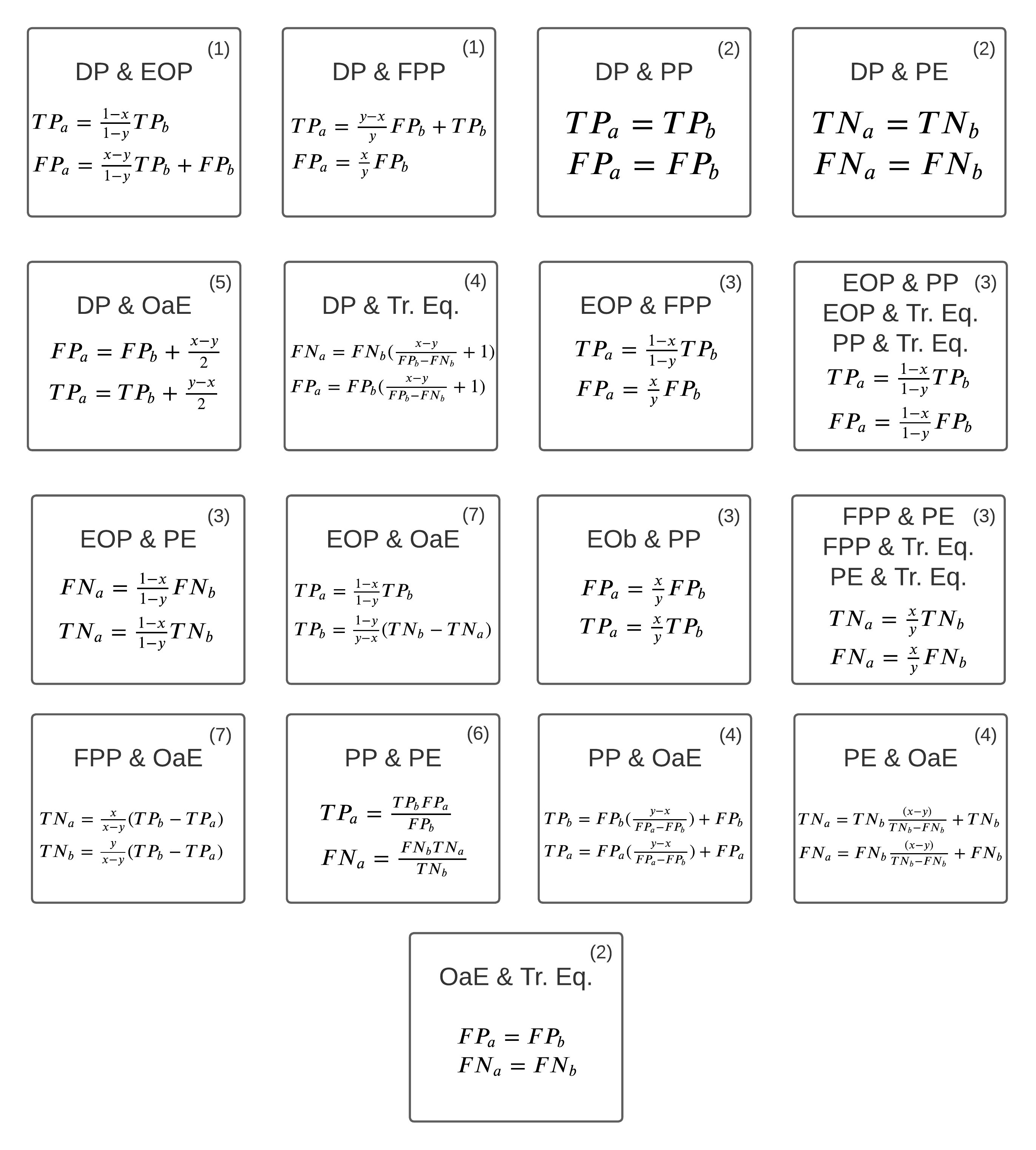}
\caption{Resulting constraints when combining pairs of fairness measures.}
\Description{17 square boxes in which the resulting constraints are denoted in order to satisfy the two fairness measures denoted on the top of the box. In the top right corner of each box a number is added which denotes the kind of group in which a combined measure lies. }
\label{fig:combine_2_def}
\end{figure}
\begin{proposition} \label{prop:comb_2_def}
All pairwise combinations of demographic parity, equal opportunity, false positive parity, predictive parity, predictive equality, overall accuracy equality and treatment equality are possible according to Definition~\ref{def:possible}.
\end{proposition}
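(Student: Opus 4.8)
The plan is to reduce the possibility of each pair to a finite list of strict polynomial inequalities in the two base rates and a pair of ``free'' confusion-matrix entries, and then to certify feasibility of that list by exhibiting a single interior point and invoking openness. First I would exploit the dichotomy already visible in the \emph{orientation} column of Table~\ref{tab:all_fairness_def}: after substituting the four normalisation equations $TP_a+FN_a=1-x$, $TN_a+FP_a=x$, and their group-$b$ analogues, the horizontal and board constraints (demographic parity, equal opportunity, false positive parity, overall accuracy equality) become \emph{linear} in the eight entries --- exactly the linearisation carried out in the first step of Figure~\ref{eq:combine_dem_par_eq_opp} --- whereas the vertical and diagonal constraints (predictive parity, predictive equality, treatment equality) are \emph{bilinear}, since cross-multiplying leaves products such as $TP_a\,FP_b=TP_b\,FP_a$ or $FN_a\,FP_b=FN_b\,FP_a$. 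This organises the $\binom{7}{2}=21$ pairs into linear--linear, linear--bilinear and bilinear--bilinear cases, and the number of genuinely distinct cases is further cut down by the symmetry exchanging the positive and negative classes (which sends $x\mapsto 1-x$, swaps $TP\leftrightarrow TN$ and $FP\leftrightarrow FN$, and thereby identifies equal opportunity with false positive parity and predictive parity with predictive equality). Feasibility is preserved under this measure-preserving relabelling, so it suffices to treat one representative per orbit.

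For a fixed pair I would proceed exactly as in Figures~\ref{eq:combine_dem_par_eq_opp} and~\ref{eq:constraint_dem_par_eq_opp}. The four normalisation equations together with the two (linearised) fairness equalities form six equations in the eight entries, so at a generic point the solution set is two-dimensional; I would solve for six of the entries in terms of two chosen free parameters (the roles played by $TP_b$ and $FP_b$ in the worked example) and the base rates $x,y$. Substituting these expressions into the box constraints of Figure~\ref{eq:constraints} yields the reduced system of inequalities collected, pair by pair, in Figure~\ref{fig:combine_2_def}. Two things then have to be read off from the solved form: that each of the eight entries is a \emph{non-constant} function of the two free parameters, and that the reduced inequalities admit a strict solution at some base-rate pair with $x\neq y$.

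The second point is where the force of Definition~\ref{def:possible} lies, and I would establish it uniformly. Because all the defining functions are polynomials (in the linear cases) or rational with non-vanishing denominators (in the bilinear cases), the set on which every box inequality holds \emph{strictly} is open in the joint space of $(x,y)$ and the two free parameters. Hence it is enough to exhibit, for each representative pair, one explicit point at which $0<\text{(each entry)}<\text{(its upper bound)}$ strictly and $x\neq y$: openness then produces a full-dimensional neighbourhood, which projects onto a positive $2$-dimensional measure set of base rates, and over each such base-rate pair the two free parameters range over a genuinely two-dimensional feasible region. Combined with non-constancy of the entries, the image of this region under each continuous, non-constant entry map is a non-degenerate interval, giving the required positive one-dimensional measure for every entry. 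The main obstacle is the bilinear cases --- those involving predictive parity, predictive equality or treatment equality --- where solving the system and verifying that the two degrees of freedom survive (so that no entry is silently pinned to a fixed value for generic $x,y$) requires more care than the linear book-keeping of the worked example; the symmetry reduction and the explicit per-pair constraints of Figure~\ref{fig:combine_2_def} are what keep this casework finite and tractable.
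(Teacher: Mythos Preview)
Your plan follows the same skeleton as the paper's proof outline: for each pair, adjoin the two fairness equalities to the four normalisation equations, solve for six entries in terms of two free parameters and the base rates, and then check that the box constraints of Figure~\ref{eq:constraints} leave a non-trivial feasible region. The paper simply executes this for all twenty-one pairs (the worked example is demographic parity with equal opportunity; the remaining cases are relegated to Appendix~\ref{sec:2_def}), reading off explicit intervals for the free parameters and verifying by hand that they are non-degenerate on a positive-measure set of base rates.

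Your two additions are genuine economies over this brute-force route. The positive/negative class symmetry you invoke to cut the casework is exactly Proposition~\ref{prop:inversions}, which the paper only states \emph{after} the main result; bringing it forward is natural and roughly halves the work. The interior-point-plus-openness argument is sound --- the strict feasible set is open in the joint $(x,y,p,q)$-space, projects to an open set of base rates, and has open two-dimensional fibres over each --- and it replaces the paper's pair-by-pair inequality chasing with a single topological observation. What your plan still owes, though, is the interior point itself for each representative pair: you promise to ``exhibit'' one but never do. A uniform way to discharge this debt is to start on the diagonal $x=y$ with both confusion matrices equal and strictly interior (where every fairness constraint holds trivially), check that the constraint map is a submersion there, and conclude that the feasible set extends to an open neighbourhood meeting $\{x\neq y\}$. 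With that step (or the explicit witnesses) supplied, your argument is complete and somewhat tidier than the paper's exhaustive computation.
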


\begin{proof_outl}
To prove this for a given pair of fairness measures, we start with the constraints that originate from the confusion matrices from Fig.~\ref{eq:constraints}, and add the respective fairness constraints to yield a combined system of equations. We manipulate this system of equations to express a subset of the variables in terms of the other variables, which remain free. To ensure that the base rates are unconstrained (as required for Definition~\ref{def:possible}), they must be among these other free variables.

An example of calculating the constraints for combining two fairness measures can be found in Figure~\ref{eq:combine_dem_par_eq_opp}, for the case of combining demographic parity with equal opportunity. Using the inequalities from Figure~\ref{eq:constraints} a range can be found for the free variables and the base rates, as is done in Figure~\ref{eq:constraint_dem_par_eq_opp}. This system of inequalities shows a specific range for $FP_b$, the other free variable $TP_b$, and both base rates $x$ and $y$ are not constrained further. The resulting range for $FP_b$ is $[max(0, \frac{-x+y}{1-y}), min(x+\frac{-x+y}{1-y}, y)]$ which simplifies to $[0,x+\frac{-x+y}{1-y}]$ if $x \ge y$, resulting in a non-zero Lebesgue measure, or $[\frac{-x+y}{1-y}, y]$ if $y>x$ then $TP_b$ will have a non-zero Lebesgue measure if $x > y^2$. Thus the Lebesgue measure is non-zero for each free variable for a set of base rates with a non-zero 2-dimensional Lebesgue measure, meaning that the combination of demographic parity and equal opportunity is possible.


The proofs for the other pairwise combinations of fairness measures follow a similar structure. The calculations are in Appendix~\ref{sec:2_def}.\hfill \ensuremath{\Box}
\end{proof_outl}

Note that Proposition~\ref{prop:comb_2_def} indicates that none of the fairness measures are contradictory to each other. It is perhaps unsurprising that what is fair according to one definition is not considered impossibly fair according to another. In fact, two sets of three pairwise combinations are even equivalent with each other, such that they are essentially already combinations of three fairness measures: EOP, PP, and Tr. Eq. form one such group, and FPP, PE, and Tr. Eq. forms the other one.

Two constraints in each system of equations are derived from the constraints of the fairness measures and are unique to them. 
These two resulting constraints for each possible pairwise combination can be found in Figure~\ref{fig:combine_2_def}. It is clear that certain combinations result in constraints that are more easily interpreted. An exception was made for predictive parity and predictive equality as the result becomes fairly convoluted. The constraint can be found in the calculations of combining predictive parity, predictive equality and overall accuracy equality in Section~\ref{subsec:pp_pe_oae}.

Some structure can be found between these two resulting constraints. This structure leads to defining seven types of constraint combinations. These seven types are denoted in the top right corner of the combination in Figure~\ref{fig:combine_2_def}. They serve as a kind of check for the calculations, where the same orientations of the combined fairness definitions would lead to the same type. A summary of the structures and associated measure orientations can be found in Table~\ref{tab:2_dim_types}. The elements in the \textit{Factor} term are denoted by its subscript.
\begin{table}
\begin{center}
\caption{Table showing the structure of the seven different constraint types identified and the combination of orientations associated with the type.}
\Description{This table contains the seven different constraint types and the associated structure of these constraints. The orientations of the fairness measures who's combination of constraints fall into the type are also added.}
\label{tab:2_dim_types}
\begin{tabular}{|l|l|l|}
\hline
Nr. & Orientation of the combined measures & Constraint structure \\ \hline \hline
(1) & $B_vH$ & $X_a = Factor_{BR}*X_b$  \\
 & & $Y_b = Factor_{BR}*X_b + Y_b$ \\ \hline
(2) & $B_vV$, $B_dD$ & $X_a = X_b$ \\
 & & $Y_a = Y_b$ \\ \hline
(3) & $HH, HV, HD, VD$ & $X_a = Factor_{BR}*X_a$  \\
 & & $Y_a = Factor_{BR}*Y_b$ \\ \hline
(4) & $B_vD$, $VB_d$ & $X_a = Factor_{BR, X_b, Y_b} * X_b $ \\
& & $Y_a = Factor_{BR, X_b, Y_b} * Y_b $ \\ \hline 
(5) & $B_vB_d$ & $ X_a = X_b + Factor_{BR} $ \\
& & $Y_a = Y_b + Factor_{BR}$ \\ \hline
(6) & $VV$ & $X_a = Factor_{BR, Y_a, Y_b}*Y_b + Factor_{BR, Y_a, Y_b}*Y_a $ \\
& & $X_b = Factor_{BR, Y_a, Y_b}*Y_b + Factor_{BR, Y_a, Y_b}*Y_a $ \\ \hline
(7) & $HB_d$& $X_a = Factor_{BR}*(Y_b - Y_a)$ \\
& & $X_b = Factor_{BR}*(Y_b - Y_a)$ \\ 
\hline

\end{tabular}
\end{center}
\end{table}

\subsection{Combining three Fairness measures}

We now proceed to characterize the set of possible combinations of three fairness measures.

\begin{proposition} \label{prop:3_comb}
	For combining three measures out of demographic parity, equal opportunity, false positive parity, predictive parity, predictive equality, overall accuracy equality,  and treatment equality only five out of 35 different combinations are possible (according to Definition~\ref{def:possible}).
\end{proposition}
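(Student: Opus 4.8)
The plan is to treat this as an exhaustive case analysis over all $\binom{7}{3}=35$ triples, handling each exactly as in the pairwise case (Proposition~\ref{prop:comb_2_def}): starting from the confusion-matrix constraints of Figure~\ref{eq:constraints}, adjoin the three selected fairness constraints, reduce the combined system by elimination so that a subset of the confusion-matrix entries is expressed in terms of free variables --- crucially keeping the base rates $x$ and $y$ free --- and then use the box inequalities to read off the admissible ranges. A triple is declared \emph{possible} exactly when Definition~\ref{def:possible} is met, i.e. when every confusion-matrix entry ranges over a set of positive Lebesgue measure for a set of base-rate pairs $(x,y)$ of positive planar measure. The whole proposition thus reduces to performing this reduction for each triple and sorting the outcomes into ``possible'' versus ``degenerate''.

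Before grinding through 35 systems, I would cut the work down using two structural observations. First, the pairwise equivalences noted after Proposition~\ref{prop:comb_2_def} hand us two possible triples for free: since EOP, PP and Tr.\ Eq.\ are mutually equivalent (and likewise FPP, PE and Tr.\ Eq.), adjoining the third measure of such a group to the pair adds no new constraint, so $\{$EOP, PP, Tr.\ Eq.$\}$ and $\{$FPP, PE, Tr.\ Eq.$\}$ inherit possibility directly from the pairwise analysis. Second, there is a positive/negative relabelling symmetry --- swapping $TP\leftrightarrow TN$, $FP\leftrightarrow FN$, $x\leftrightarrow 1-x$ and $y\leftrightarrow 1-y$ --- under which demographic parity, overall accuracy equality and treatment equality are fixed while equal opportunity $\leftrightarrow$ false positive parity and predictive parity $\leftrightarrow$ predictive equality. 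This symmetry maps possible triples to possible triples and halves the number of genuinely distinct reductions; it also explains why the five possible sets should organise into two symmetric pairs together with one self-symmetric triple (consistent with the two pairs above mapping onto each other). The orientation types of Table~\ref{tab:2_dim_types} give a third, lighter bookkeeping device: triples inducing the same multiset of constraint orientations should yield the same qualitative outcome, which both groups the cases and serves as a consistency check on the algebra.

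For the triples that turn out \emph{not} to be possible, the argument in each case is to show that the reduced system collapses Definition~\ref{def:possible} in one of two ways: either the three constraints are jointly satisfiable only when $x=y$ (so the feasible base-rate set lies on a measure-zero diagonal), or they force one or more confusion-matrix entries to a fixed constant, leaving that entry a single admissible value (a measure-zero range). The classical Impossibility Theorem provides a built-in sanity check here: the triple $\{$equal opportunity, false positive parity, predictive parity$\}$ --- equalised odds together with calibration --- must land in the impossible column, and recovering it this way validates the whole pipeline.

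The step I expect to be the genuine obstacle is not the elimination but the feasibility verdict. Each reduced system looks superficially restrictive, and the delicate point is to decide rigorously whether the surviving ranges have positive measure or have secretly been pinched to a point or to the line $x=y$; borderline cases (as already seen for demographic parity with equal opportunity, where the admissible range for one free variable is non-degenerate only when $x>y^2$) require care to confirm that a positive-measure region of base rates genuinely survives rather than merely a lower-dimensional slice. Managing this measure-theoretic check uniformly across all 35 cases, while avoiding sign-of-$(x-y)$ and boundary slip-ups, is where the real effort lies; the relabelling symmetry and the orientation-type grouping are what keep it from becoming unwieldy.
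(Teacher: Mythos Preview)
Your approach is essentially the same as the paper's: an exhaustive case analysis over all $\binom{7}{3}=35$ triples, reducing each combined system and checking Definition~\ref{def:possible}, with the two ``free'' triples from the pairwise equivalences handled exactly as you describe. Your use of the positive/negative relabelling symmetry is a sound economy move; the paper invokes this same symmetry, but only \emph{after} the proposition, as a separate structural observation (Proposition~\ref{prop:inversions}) rather than as a labour-saving device inside the proof.

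Two small points to flag. First, your symmetry bookkeeping slips: the five possible triples organise as \emph{three} self-symmetric sets --- $\{$DP, EOP, FPP$\}$, $\{$EOP, FPP, OaE$\}$, $\{$PP, PE, OaE$\}$ --- together with \emph{one} symmetric pair $\{$EOP, PP, Tr.\,Eq.$\}\leftrightarrow\{$FPP, PE, Tr.\,Eq.$\}$, not two pairs plus one fixed triple. Second, you correctly anticipate that the feasibility verdict is the delicate step, but one case will not yield to pure elimination: the triple $\{$PP, PE, OaE$\}$ reduces to a quadratic in one of the confusion-matrix entries (Equation~\eqref{eq:second_degree} in the paper), and the paper resorts to a numerical/graphical argument to confirm that the solution set has positive measure for a non-trivial region of base rates. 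Be prepared for that case to require more than algebraic reduction.
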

\begin{figure}
\input{Proofs_combine_3/Dem_par_eq_odds_shortened}
\caption{Combining demographic parity, equal opportunity and false positive parity resulting in a solution that has a free variable.}
\Description{The calculations made in order for combining demographic parity, equal opportunity and false positive parity, this can also be found in a formula in the appendix }
\label{fig:dem_par_eq_op_eq_odds_b}
\end{figure}

\begin{figure}
\input{Proofs_combine_3/Dem_par_Pr_par_Pr_eq}
\caption{Combining demographic parity, predictive parity and predictive equality resulting in a solution that requires equal base rates}
\Description{The calculations made in order for combining demographic parity, predictive parity and predict equality, this can also be found in calculations in the appendix}
\label{fig:dem_par_pr_par_pr_eq}
\end{figure}

\begin{figure}
\input{Proofs_combine_3/Dem_par_Eq_opp_Pr_par_shortened}
\caption{Combining demographic parity, equal opportunity and predictive parity}
\Description{A shortened version of the calculations made in order for combining demographic parity, equal opportunity and predictive parity, the full length calculations can be found in the appendix. }
\label{fig:dem_par_eq_opp_pr_par}
\end{figure}

\begin{proof_outl}
In Sec. 3.2 we saw that some pairwise combinations are equivalent with each other, in this way already giving rise to two combinations with three fairness measures. Besides these, combining three fairness measures can be done most conveniently by combining the constraints of two pairwise combinations with one fairness measure in common. As an example, Figure~\ref{fig:dem_par_eq_op_eq_odds_b} shows the calculations for combining demographic parity, equal opportunity, and false positive parity. The resulting constraints show that one variable, $TP_b$, is free, the base rates are unconstrained, and none of the variables are fixed to a particular value. This shows that the combination of demographic parity, equal opportunity and false positive parity is possible.

However, not all combinations of three fairness measures are possible. An example is the combination of demographic parity, predictive parity, and predictive equality. The calculations can be found in Figure~\ref{fig:dem_par_pr_par_pr_eq}. The final set of constraints requires $x=y$, which violates the condition of the base rates having a non-zero 2-dimensional Lebesgue measure. It means that these three fairness measures can only be achieved simultaneously when the base rates happen to be equal to each other, which is entirely data-dependent and not within the control of the algorithm. This means that the combination of demographic parity, predictive parity and predictive equality is not possible.

Another example of how a combination can be impossible is demonstrated in the combination of demographic parity, equal opportunity and predictive parity. The calculations thereof are shown in Figure~\ref{fig:dem_par_eq_opp_pr_par}. The solution either requires equal base rates, which violates the requirements to be deemed possible, or it requires for $TP_b$ and $TP_a$ to equal zero. The result is that four out of eight variables in the confusion matrices have a Lesbesgue measure of zero, violating the requirements for being a possible combination.

The final possible kind of outcome, when combining three fairness measures from the ones used in this paper, arises when combining predictive parity, predictive equality and overall accuracy equality. In this case, the following second degree polynomial needs to be satisfied:
\begin{equation} \label{eq:second_degree}
(-2x+1)TN_b^2 + (xy - y + x + 2yTN_a-x^2 +2xTN_a-2TN_a)TN_b + (xy - x + TN_a -2yTN_a + y - y^2)TN_a = 0.
\end{equation}
As this equation is non-transparent, we used a numerical approach to solve it. We then investigated the results in plots as in Figure~\ref{fig:2nd_degree}. Using this approach it is clear that a free variable, e.g. $TN_a$, exists. It is solvable for a range of base rates, although interestingly not for \emph{all} base rates. However, this range of base rates consists of non-trivial real intervals, such that it satisfies Definition~\ref{def:possible}. On the plots it is evident that none of the variables have a Lebesgue measure of zero, such that the combination of predictive parity, predictive equality and overall accuracy equality is possible.

The calculations for all other combinations of three fairness measures can be found in Appendix~\ref{sec:app_3_def}, and the results are summarized in Table~\ref{tab:char_sol_combine_3}. \hfill \ensuremath{\Box}
\end{proof_outl}

\begin{figure}
\centering
\begin{subfigure}{.5\textwidth}
  \centering
  \includegraphics[width=.9\linewidth]{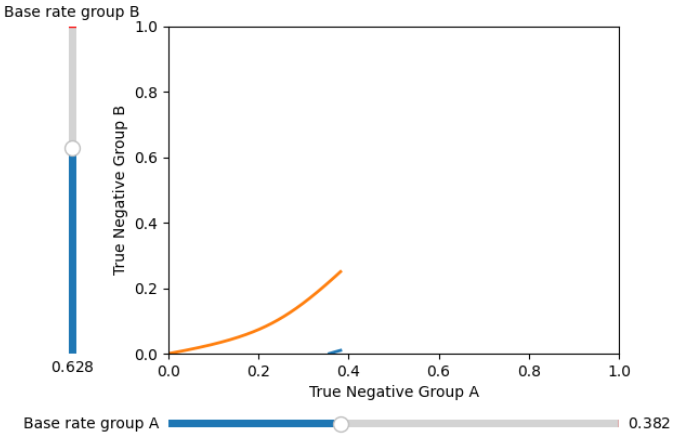}
  \caption{Relation true negative a versus true negative b}
  \label{fig:sub1}
\end{subfigure}%
\begin{subfigure}{.5\textwidth}
  \centering
  \includegraphics[width=.8\linewidth]{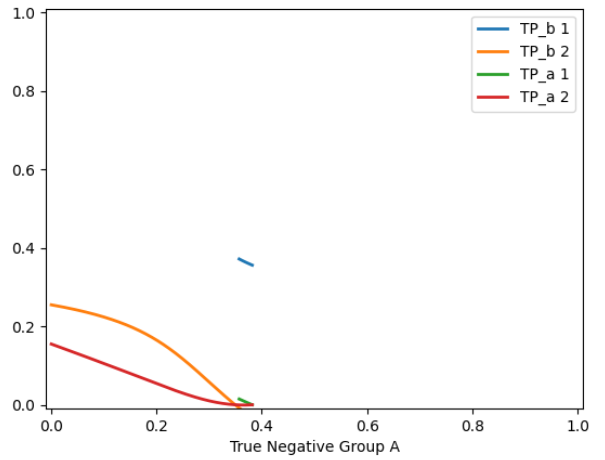}
  \caption{Relation true negative a versus true positive of a and b}
  \label{fig:sub2}
\end{subfigure}
\caption{Relation between the free variable $TN_a$ and $TN_b, TP_a, TP_b$ when combining predictive parity with predictive equality and overall accuracy equality.}
\Description{Two figures showing the program used to calculate the results when combining predictive parity with predictive equality and overall accuracy equality. The left figure shows the two base rate sliding bars designed for input and the graph shows the true negative value for group A related to the true negative group B values. It shows that at a certain point two lines can be drawn relating the two. The right-hand figure shows true negative of group a plotted against the true positive value of b and the true positive value of a. This graph again shows the double possibility of them being related.}
\label{fig:2nd_degree}
\end{figure}

\begin{table}
\begin{center}
\caption{Table containing whether satisfying a certain set of three fairness measures is possible or impossible.}
\label{tab:char_sol_combine_3}
\Description{This table separates all the possible triplet combinations of fairness definitions into a possible or an impossible combinations. The first column denotes possible fairness measures combinations with the final three containing impossible combinations of fairness measures. }
\begin{tabular}{|l|p{2.5cm}|p{2.5cm}|p{2.5cm}|p{2.5cm}|}
\hline.
\textbf{Possible combinations} &  \multicolumn{4}{c|}{\textbf{Impossible combinations}} \\ \hline
DP $|$ EOP $|$ FPP & DP $|$ PP $|$ PE & DP $|$ EOP $|$ PP & DP $|$ PE $|$ Tr. Eq. &  FPP $|$ PP $|$ PE\\ \hline
EOP $|$ FPP $|$ OaE & DP $|$ PP $|$ OaE & DP $|$ EOP $|$ PE &  DP $|$ OaE $|$ Tr. Eq. & FPP $|$ PP $|$ OaE \\ \hline
EOP $|$ PP $|$ Tr. Eq. & DP $|$ PE $|$ OaE & DP $|$ EOP $|$ OaE  & EOP $|$ FPP $|$ PP & FPP $|$ OaE $|$ Tr. Eq. \\ \hline
FPP $|$ PE $|$ Tr. Eq. & DP $|$ PP $|$ Tr. Eq. & DP $|$ FPP $|$ Tr. Eq. & EOP $|$ FPP $|$ Tr. Eq. & FPP $|$ OaE $|$ PE \\ \hline
PP $|$ PE $|$ OaE & EOP $|$ PE $|$ Tr. Eq. & DP $|$ FPP $|$ PP &  EOP $|$ FPP $|$ PE  & FPP $|$ PP $|$ Tr. Eq. \\ \hline
 & EOP $|$ OaE $|$ PP & DP $|$ FPP $|$ PE &  EOP $|$ PE $|$ OaE & PP $|$ PE $|$ Tr. Eq. \\ \hline
 & EOP $|$ OaE $|$ Tr. Eq. & DP $|$ FPP $|$ PE & EOP $|$ PE $|$ PP & PP $|$ OaE $|$ Tr. Eq. \\ \hline
  & PE $|$ OaE $|$ Tr. Eq. & DP $|$ FPP $|$ Tr. Eq. &   &  \\ \hline

\end{tabular}
\end{center}
\end{table}

\subsection{Combining four fairness measures}

As there are only four free variables in the two confusion matrices, due to inherent constraints as in Fig.~\ref{eq:constraints}, combinations of four \emph{independent} fairness notions will not be possible. In other words: for a combination of four fairness measures to be possible, at least one of the constraints should mathematically follow from the combination of other constraints in order for the values in the confusion matrix to have a non-zero Lebesgue measure. This is not the case for the fairness measures considered in this paper.

\begin{proposition}\label{prop:4_comb} 
Out of the fairness measures demographic parity, equal opportunity, false positive parity, predictive parity, predictive equality, overall accuracy equality and treatment equality no sets of four fairness measures are possible (according to Definition~\ref{def:possible}).
\end{proposition}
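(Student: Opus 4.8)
The plan is to deduce the four-measure result from the already-established classification of three-measure combinations (Proposition~\ref{prop:3_comb} and Table~\ref{tab:char_sol_combine_3}), rather than to grind through all $\binom{7}{4}=35$ systems directly. The engine of the reduction is a \emph{monotonicity} principle: if a set $S$ of fairness measures is impossible, then every superset $S' \supseteq S$ is impossible as well. This is immediate from Definition~\ref{def:possible}, since each additional measure contributes one more equality constraint, so for every fixed base-rate pair $(x,y)$ the feasible set of confusion matrices under $S'$ is contained in the one under $S$. Hence any confusion-matrix entry that is already pinned to a single value under $S$ (for almost all base-rate pairs), or any base-rate restriction such as $x=y$ already forced by $S$, is inherited by $S'$; either defect alone makes $S'$ fail Definition~\ref{def:possible}.

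Granting monotonicity, it suffices to show that \emph{every} four-element subset of the seven measures contains at least one impossible triple. From Table~\ref{tab:char_sol_combine_3} the only possible triples are
\[
\{\mathrm{DP},\mathrm{EOP},\mathrm{FPP}\},\ \{\mathrm{EOP},\mathrm{FPP},\mathrm{OaE}\},\ \{\mathrm{EOP},\mathrm{PP},\mathrm{TrEq}\},\ \{\mathrm{FPP},\mathrm{PE},\mathrm{TrEq}\},\ \{\mathrm{PP},\mathrm{PE},\mathrm{OaE}\},
\]
the remaining $30$ being impossible by Proposition~\ref{prop:3_comb}. Now any two distinct triples that sit inside a common four-set must share exactly two of their elements. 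So if some four-set had all four of its triples possible, the list above would have to contain four members that pairwise intersect in two elements. A direct inspection shows this is impossible: the only pair of possible triples meeting in two elements is $\{\mathrm{DP},\mathrm{EOP},\mathrm{FPP}\}$ and $\{\mathrm{EOP},\mathrm{FPP},\mathrm{OaE}\}$ (sharing $\{\mathrm{EOP},\mathrm{FPP}\}$); every other pair shares at most one measure. One therefore cannot even assemble three possible triples that pairwise meet in two elements, let alone four, so no four-set can have all four triples possible.

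Consequently every four-set contains an impossible triple, and by monotonicity every four-set is impossible, which is exactly the assertion of Proposition~\ref{prop:4_comb}. As a concrete check on the single borderline case, the only four-set whose triples include both of the compatible triples above is $\{\mathrm{DP},\mathrm{EOP},\mathrm{FPP},\mathrm{OaE}\}$, and its remaining triple $\{\mathrm{DP},\mathrm{EOP},\mathrm{OaE}\}$ is listed as impossible, so even this set is ruled out explicitly.

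The step I expect to be most delicate is the rigorous justification of monotonicity against the precise wording of Definition~\ref{def:possible}, specifically when an impossible triple fails only on one branch of a reducible solution set (as happens, e.g., for demographic parity, equal opportunity and predictive parity, whose solutions split into an $x=y$ branch and a $TP_a=TP_b=0$ branch). There one must verify that restricting to the generic base rates $x \ne y$ already exposes a pinned entry, and that intersecting with the fourth constraint can only keep it pinned. Everything else is a finite, computation-free combinatorial verification. This route is precisely what lets us avoid the heavier alternative suggested by the degrees-of-freedom count---four constraints exhausting the four free variables of the two confusion matrices---which would otherwise require checking each of the $35$ systems for the accidental constraint redundancy that could, in principle, restore a free variable.
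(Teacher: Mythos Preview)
Your proof is correct and follows essentially the same strategy as the paper: both invoke monotonicity (a possible four-set would force all four of its three-element subsets to be possible) and then appeal to the list of five possible triples in Table~\ref{tab:char_sol_combine_3} to see that this never happens. The paper simply states that the exhaustive verification goes through, whereas you supply the clean pairwise-intersection argument showing that only one pair of possible triples shares two elements; this is a nice refinement but not a genuinely different route.
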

\begin{proof}
If a set of four fairness measures were possible, then all subsets of size three of these four measures would be possible. However, based on Table~\ref{tab:char_sol_combine_3}, it is easy to exhaustively verify that this is not the case: no set of size four exists for which all its size three subsets are possible. Thus, no set of four fairness measures is possible.
\end{proof}

\subsection{Integration of the results}

Combining the propositions above, we can now state the main result of this paper.

\begin{theorem}
	There are twelve sets that form a maximal possible combinations of fairness measures that are possible according to Definition~\ref{def:possible}, out of demographic parity, equal opportunity, false positive parity, predictive parity, predictive equality, overall accuracy equality and treatment equality. We call these twelve combinations the \emph{maximal fairness} notions. Seven of these sets are pairwise combinations and five are a combination of three fairness measures. Of course, all subsets from these twelve sets are also possible.
\end{theorem}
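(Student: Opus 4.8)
The plan is to assemble the three preceding propositions into a purely combinatorial counting argument, after first making explicit that \emph{possibility is downward closed}: if a set $S$ of fairness measures is possible, then every subset of $S$ is possible as well. This monotonicity is immediate from Definition~\ref{def:possible} --- dropping one of the fairness constraints can only enlarge the feasible region in the confusion-matrix variables, so a non-zero Lebesgue measure for $S$ forces a non-zero Lebesgue measure for each subset --- and it is exactly the principle already invoked in the proof of Proposition~\ref{prop:4_comb}. I would state it once at the outset, since every subsequent step rests on it.

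First I would bound the size of any maximal possible set. By Proposition~\ref{prop:4_comb} no four-element set is possible; combined with downward closure, no set of size five, six or seven can be possible either, since any such set contains a four-element subset that would then have to be possible. Hence every possible set has size at most three, and so every maximal possible set has size two or three. Next I would settle the triples: by Proposition~\ref{prop:3_comb} exactly five of the $\binom{7}{3}=35$ three-element sets are possible, namely those in the left column of Table~\ref{tab:char_sol_combine_3}. Each of these is automatically \emph{maximal}, being possible yet, by the size bound, not contained in any larger possible set. This yields the five maximal triples.

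Finally I would count the maximal pairs. By Proposition~\ref{prop:comb_2_def} all $\binom{7}{2}=21$ pairs are possible, so a pair is maximal precisely when it is \emph{not} contained in any of the five possible triples. Collecting the three pairs inside each of the five triples and discarding duplicates, I would verify that they cover exactly fourteen distinct pairs; the remaining $21-14=7$ pairs --- namely \{DP, PP\}, \{DP, PE\}, \{DP, OaE\}, \{DP, Tr.\ Eq.\}, \{EOP, PE\}, \{FPP, PP\} and \{OaE, Tr.\ Eq.\} --- are the maximal pairs. Adding the five maximal triples gives $7+5=12$ maximal fairness notions, as claimed.

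The argument requires no further analysis beyond the three cited propositions: the whole theorem reduces to the bookkeeping of which pairs the five possible triples absorb. Consequently the only genuine risk is a miscount at that inclusion step, so the part I would treat most carefully is the deduplication of the fourteen covered pairs; a small incidence table with the seven measures as rows and the five possible triples as columns would make the count transparent and guard against an off-by-one error.
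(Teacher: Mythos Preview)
Your proof is correct and follows essentially the same strategy as the paper --- assembling Propositions~\ref{prop:comb_2_def}, \ref{prop:3_comb} and \ref{prop:4_comb} into a count of the maximal possible sets. The paper's own proof simply appeals to Figure~\ref{fig:combination_all_def} and reads off the twelve red leaf nodes, whereas you make the underlying combinatorics explicit: the downward-closure observation, the size bound, and the inclusion count showing that the five possible triples absorb exactly fourteen of the twenty-one pairs (the only duplicate being \{EOP, FPP\}, shared by the first two triples). Your enumeration of the seven maximal pairs is correct and matches the leaves of the figure, so the arguments are equivalent; yours is merely more self-contained.
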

\begin{proof}
From Propositions~\ref{prop:comb_2_def}, \ref{prop:3_comb}, and \ref{prop:3_comb} Figure~\ref{fig:combination_all_def} is constructed, where each element is a possible combination of fairness measures. It also shows the subset relationships between the possible sets of fairness measures. The 12 red coloured elements indicate that they are maximal sets with regard to possible fairness measure combinations.
\end{proof}

\begin{figure}
\includegraphics[angle=90, origin=c, height=15.05cm]{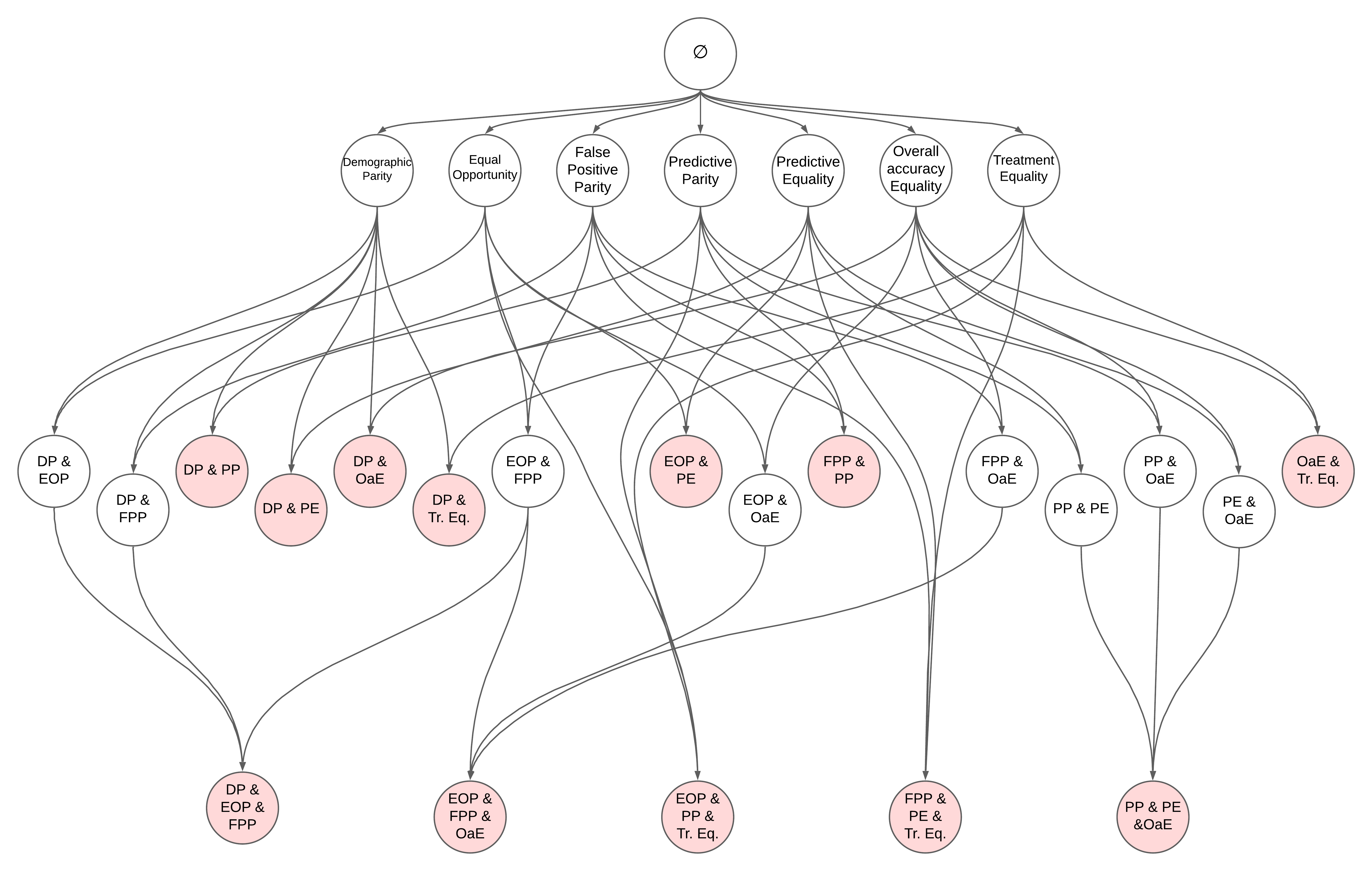}
\caption{All possible combinations of fairness measures.}
\Description{On this figure all possible combinations of fairness measures are combined and represented in a bubble. This means that when combining these measures there is still some freedom for the predictor to work within. The bubbles are connected where a smaller set of measures is connected to a larger set of measures if that smaller set is a subset of the larger set. Some of the bubbles are coloured red denoted that they are a leaf node, meaning that the set cannot be expanded with a new measure without violating the definition of possible.}
\label{fig:combination_all_def}
\end{figure}

Next we try to get some insight into the combinations that are possible.
To this end, Table~\ref{tab:preval_def} shows the number of times each fairness measure can be combined with two others, and the number of times it is impossible because it requires a trivial 2-dimensional range for base rates, or because it requires a trivial range for variables or a trivial 2-dimensional range for base rates. The table reveals an interesting symmetry between equal opportunity and false positive parity and also between predictive parity and predictive equality. This symmetry is due to their mathematical symmetry, as formalized by the following proposition.

\begin{proposition} \label{prop:inversions}
Given any possible combination of fairness measures, another possible combination can be obtained by swapping the semantics of the classes, i.e. by renaming the positive class into the negative and vice versa.
\end{proposition}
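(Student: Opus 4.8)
The plan is to realize the renaming of classes as an explicit, measure-preserving involution $T$ on the space of (normalized) confusion matrices, and to show that this involution carries the constraint set of any combination of fairness measures onto the constraint set of another combination. Swapping the positive and negative labels exchanges the four cells of each matrix according to $TP \leftrightarrow TN$ and $FP \leftrightarrow FN$ (separately for each group), since a predicted/actual positive becomes a predicted/actual negative and vice versa. Under the same relabelling the fraction of new positives equals the fraction of old negatives, so the base rates transform as $(x,y) \mapsto (1-x, 1-y)$. I would first write down $T$ on all eight variables together with the two base rates and observe that it is an involution.

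Next I would check that $T$ preserves the structural constraints of Figure~\ref{eq:constraints}. The two equalities $TP+FN = 1-x$ and $TN+FP = x$ for each group are simply interchanged by $T$ (using $x \mapsto 1-x$), and the box constraints $0 \le \cdot \le \cdot$ are permuted among themselves. Hence $T$ maps the feasible region determined by the confusion-matrix constraints bijectively onto itself.

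The core of the argument, and the step I expect to be the main obstacle, is to verify that $T$ permutes the seven fairness constraints of Table~\ref{tab:all_fairness_def}, which requires care with the ratio-based measures. Applying $T$ to the equal-opportunity constraint $\frac{TP_a}{TP_a+FN_a} = \frac{TP_b}{TP_b+FN_b}$ yields $\frac{TN_a}{TN_a+FP_a} = \frac{TN_b}{TN_b+FP_b}$; since this ratio is the complement of the false positive rate, its equality across groups is equivalent to false positive parity, so $T$ maps equal opportunity to false positive parity and conversely. Likewise $T$ sends predictive parity to the equality of the negative predictive value, which is the complement of the false omission rate and hence equivalent to predictive equality, so these two are exchanged. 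For demographic parity and overall accuracy equality the image constraint is equivalent to the original (using that $TP+FP$ and $FN+TN$ sum to a constant, respectively that $TP+TN$ is symmetric under the swap), and for treatment equality the image $\frac{FP_a}{FN_a} = \frac{FP_b}{FN_b}$ is the reciprocal of the original and therefore equivalent. Thus $T$ induces the permutation $\sigma$ that fixes demographic parity, overall accuracy equality and treatment equality while transposing equal opportunity with false positive parity and predictive parity with predictive equality.

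Finally I would assemble these facts. For any set $S$ of fairness measures, the solution set of the combined system for $\sigma(S)$ is exactly the image under $T$ of the solution set for $S$. Because $T$ is a composition of coordinate transpositions together with the reflections $x \mapsto 1-x$ and $y \mapsto 1-y$, it preserves one-dimensional Lebesgue measure in every coordinate and two-dimensional Lebesgue measure in the base-rate plane. Consequently the non-triviality conditions of Definition~\ref{def:possible}, namely a non-zero Lebesgue measure for each confusion-matrix entry over a set of base-rate pairs of non-zero 2-dimensional measure, transfer verbatim from $S$ to $\sigma(S)$. Hence whenever $S$ is possible, so is $\sigma(S)$, which establishes Proposition~\ref{prop:inversions}.
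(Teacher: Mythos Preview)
Your proposal is correct and follows essentially the same approach as the paper: both arguments rest on identifying how each of the seven fairness constraints transforms under the positive/negative label swap, arriving at the same permutation (equal opportunity $\leftrightarrow$ false positive parity, predictive parity $\leftrightarrow$ predictive equality, the other three fixed). Your version is in fact more complete than the paper's brief proof, since you make explicit the involution $T$ on the full variable space, verify that the structural constraints of Figure~\ref{eq:constraints} are preserved, and argue that $T$ is Lebesgue-measure-preserving so that the non-triviality conditions of Definition~\ref{def:possible} transfer---a step the paper leaves implicit.
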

Note that intuitively, the swapping of the positive and negative classes can be visualized as a point inversion of the contingency tables through its center.

\begin{proof}
In Fig.~\ref{fig:mirror_def} the point inversions of all possible combinations of three fairness measures are illustrated, showing the symmetry relations between them.


A formal proof of correctness follows from the symmetry relations between the \emph{individual} fairness measures. If the positives are switched with negatives in the constraint of equal opportunity, it becomes the constraint for false positive parity, and vice versa. The same relation holds between predictive parity and predictive equality. The demographic parity constraint, however, changes into a constraint that is equivalent to the original one, and the measures of overall accuracy equality and treatment equality remain unaffected, by switching the positive and negative classes. 
\end{proof}

Proposition~\ref{prop:inversions} also provides an explanation for the symmetry between equal opportunity and false positive parity and between predictive parity and predictive equality. If for every possible combination their inversion is also a possible combination, then it means that for every possible combination with predictive parity there will be a possible combination with predictive equality. Something analogous holds for equal opportunity and false positive parity. This explains why in Table~\ref{tab:preval_def} they have the same number of occurrences in each of the categories.

Note that Fig.~\ref{fig:mirror_def} shows that among the five maximal combinations of three fairness measures, three treat the positive and negative classes equally: they remain unaltered by swapping the role of the positive and negative class. These are (DP, EOP, FPP), (EOP, PP, AoE), and (PP, PE, AoE). The other two possible combinations of three fairness measures do treat the classes differently, and they represents each other's symmetrical analogue. These are (EOP, PP, Tr. Eq.) and (FPP, PE, Tr. Eq.).

\begin{table}
\begin{center}
\caption{Table containing the prevalence of each fairness measure for a characteristic when combined with two others.}
\Description{This table contains the prevalence of each fairness measure in different categories when combining three fairness measures. The first category requires that the constraints still allow for one variable to be completely free. The second category requires that the base rates are equal otherwise it is not possible to satisfy all definitions. The third category has constraints where at least one of the variables has only occurs in a trivial range or the base rates are in a trivial 2-dimensional range. Demographic parity has the least occurrences in the first group with one. Equal opportunity has the exact same distribution as false positive parity and together are highly prevalent in the first class. Predictive parity and predictive equality also have the same distribution as each other. Overall accuracy equality is the most prevalent in the first class out of all fairness measures.}
\label{tab:preval_def}
\begin{tabular}{|l|p{2.8cm}|p{2.7cm}|p{3.9cm}|}
\hline
\textbf{Measure} & \textbf{Nr. of times in possible combinations} & \textbf{Nr. of times requiring equal base rates} & \textbf{Nr. of times trivial range for variables or trivial 2-dim. range for base rates} \\ \hline
Demographic Parity & 1 & 3 & 11 \\ \hline
Equal Opportunity & 3 & 2 & 10 \\ \hline
False positive parity & 3 & 2 & 10 \\ \hline
Predictive Parity & 2 & 6 & 7 \\ \hline
Predictive Equality & 2 & 6 & 7 \\ \hline
Overall accuracy Equality & 2 & 2 & 11 \\\hline
Treatment Equality & 2 & 3 & 10 \\\hline

\end{tabular}
\end{center}
\end{table}

\begin{figure}
\includegraphics[origin=c, width=0.75\textwidth]{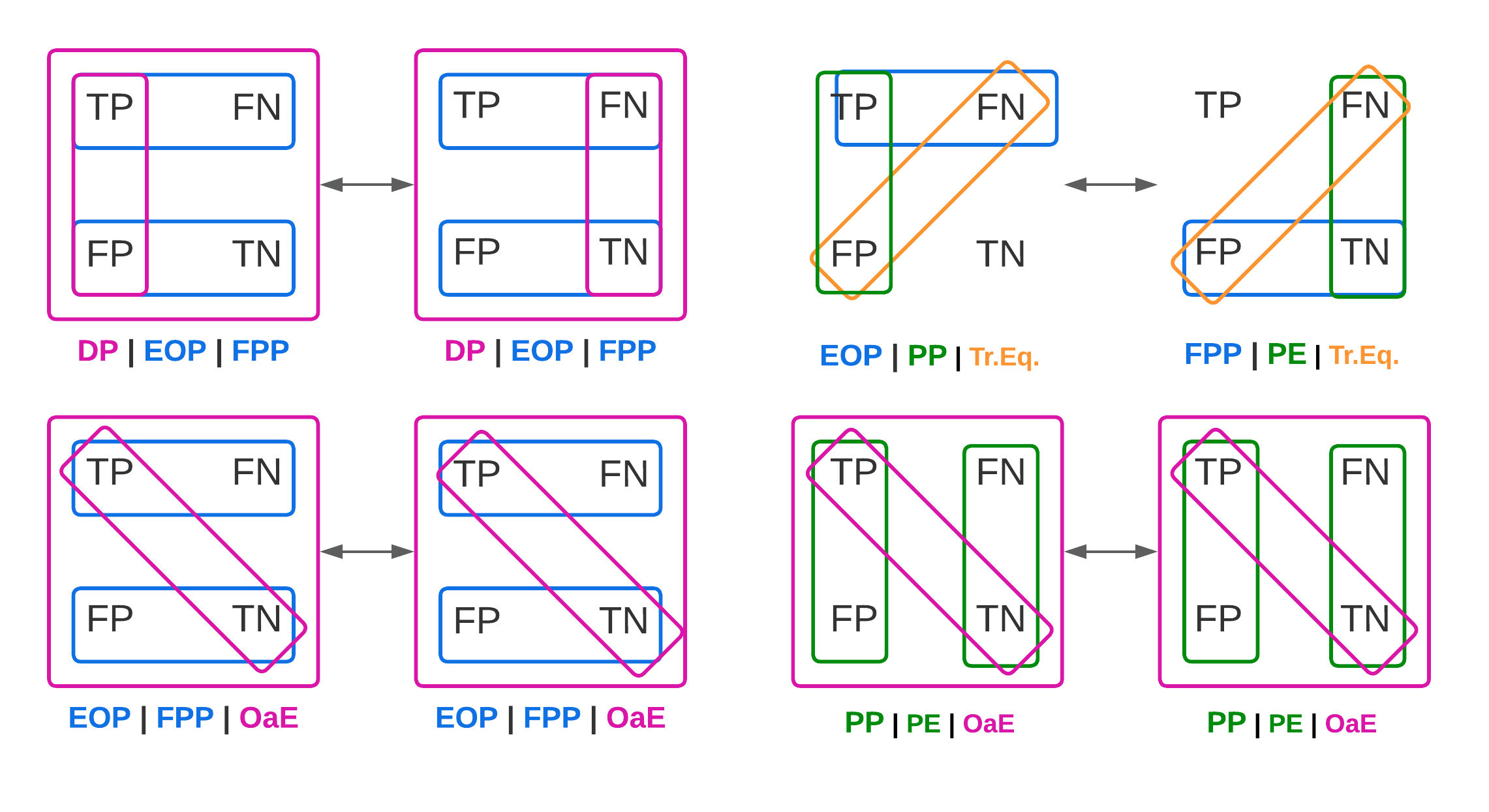}
\caption{Point inversion of the possible combinations of three fairness measures.}
\Description{This figure shows the combinations of three measures that still leave room for some freedom by drawing their orientations and effects on the confusion matrix. Related to that a point of version is done in the middle of the square signifying the confusion matrix and the resulting orientations are drawn from this. This results in 8 squares as three out of 5 combinations when applying the point inversion remain the same combination.}
\label{fig:mirror_def}
\end{figure}

Finally, note that both Figure~\ref{fig:combination_all_def} and Table~\ref{tab:preval_def} indicate that possible combinations with demographic parity are most rare. Demographic parity is a fairness definition that does not relate to the base rates of the groups due to it being independent from the ground truth. This is most likely the reason why it can only be combined with equal opportunity and false positive parity as these definitions are directly related to the base rates.

\section{Conclusion}

We investigated which and how many fairness notions can be combined and imposed simultaneously, for the simple setting of binary classification with two demographic groups. We find that out of seven fairness measures, namely demographic parity, equal opportunity, false positive parity, predictive parity, predictive equality, overall accuracy equality and treatment equality, in total twelve maximal combinations are possible. Five of these maximal combinations consist of three fairness measures, while seven of them consist of just two fairness measures. Each of these sets could be regarded as \emph{maximal fairness} notions, in that it is not possible to impose any further fairness constraints without making the problem infeasible. Of course, all subsets of these twelve maximal sets are also possible combinations of fairness definitions. An overview of the possible combinations, maximal as well as non-maximal, and the subset relations between them, is given in Figure~\ref{fig:combination_all_def}.
Our results confirm and extend related work, such as the theorem colloquially referred to as the fairness impossibility theorem. 

We consider the five maximal sets of three fairness measures as particularly interesting. Of these five, three treat the positive and negative classes symmetrically, in that they remain unaltered after swapping the class labels. The remaining two are each other's symmetrical analogue: one of these two sets of fairness measures is equivalent to the other set after swapping the class labels. This property might provide some initial guidance and insight to determine which maximal fairness notion is most appropriate in a given application: if the costs of different kinds of misclassifications are very different, one of the maximal fairness notions that are not invariant with respect to a label swap might be most appropriate.

An important next step building on these results would be to develop methods that are capable of imposing these (maximal) sets of fairness measures.
Subsequently, it would be interesting to investigate the impact of imposing (maximal) sets of fairness measures on the accuracy and learning capabilities (e.g. learning rate and convergence rate) of machine learning methods.

A separate line of further work would be towards including other fairness measures, in addition to the seven measures included in the present paper. Moreover, we consider extensions towards other problems besides binary classification and just two demographic groups as fruitful avenues for further research.

Finally, and perhaps most importantly, the meaning of the maximal fairness notions derived in the present paper needs to be better understood from a practical, ethical and/or legal point of view. This is an important aspect of research in fairness in AI as this translation to real world concepts is a pre-requisite for wide-spread adoption. 


\bibliographystyle{ACM-Reference-Format}
\bibliography{biblio}


\begin{thebibliography}{11}


\ifx \showCODEN    \undefined \def \showCODEN     #1{\unskip}     \fi
\ifx \showDOI      \undefined \def \showDOI       #1{#1}\fi
\ifx \showISBNx    \undefined \def \showISBNx     #1{\unskip}     \fi
\ifx \showISBNxiii \undefined \def \showISBNxiii  #1{\unskip}     \fi
\ifx \showISSN     \undefined \def \showISSN      #1{\unskip}     \fi
\ifx \showLCCN     \undefined \def \showLCCN      #1{\unskip}     \fi
\ifx \shownote     \undefined \def \shownote      #1{#1}          \fi
\ifx \showarticletitle \undefined \def \showarticletitle #1{#1}   \fi
\ifx \showURL      \undefined \def \showURL       {\relax}        \fi
\providecommand\bibfield[2]{#2}
\providecommand\bibinfo[2]{#2}
\providecommand\natexlab[1]{#1}
\providecommand\showeprint[2][]{arXiv:#2}

\bibitem[Angwin et~al\mbox{.}(2016)]%
        {COMPAS_article}
\bibfield{author}{\bibinfo{person}{Julia Angwin}, \bibinfo{person}{Jeff
  Larson}, \bibinfo{person}{Surya Mattu}, {and} \bibinfo{person}{Lauren
  Kirchner}.} \bibinfo{year}{2016}\natexlab{}.
\newblock \showarticletitle{Machine Bias}.
\newblock \bibinfo{journal}{\emph{ProPublica}} (\bibinfo{date}{May}
  \bibinfo{year}{2016}).
\newblock
\urldef\tempurl%
\url{https://www.propublica.org/article/machine-bias-risk-assessments-in-criminal-sentencing}
\showURL{%
\tempurl}


\bibitem[Berk et~al\mbox{.}(2021)]%
        {Extensive_trade_offs_fairness}
\bibfield{author}{\bibinfo{person}{Richard Berk}, \bibinfo{person}{Hoda
  Heidari}, \bibinfo{person}{Shahin Jabbari}, \bibinfo{person}{Michael Kearns},
  {and} \bibinfo{person}{Aaron Roth}.} \bibinfo{year}{2021}\natexlab{}.
\newblock \showarticletitle{Fairness in Criminal Justice Risk Assessments: The
  State of the Art}.
\newblock \bibinfo{journal}{\emph{Sociological Methods \& Research}}
  \bibinfo{volume}{50}, \bibinfo{number}{1} (\bibinfo{year}{2021}),
  \bibinfo{pages}{3--44}.
\newblock
\urldef\tempurl%
\url{https://doi.org/10.1177/0049124118782533}
\showDOI{\tempurl}
\showeprint{https://doi.org/10.1177/0049124118782533}


\bibitem[Chouldechova(2016)]%
        {disparate_impact_fairness}
\bibfield{author}{\bibinfo{person}{Alexandra Chouldechova}.}
  \bibinfo{year}{2016}\natexlab{}.
\newblock \bibinfo{title}{Fair prediction with disparate impact: A study of
  bias in recidivism prediction instruments}.
\newblock
\newblock
\urldef\tempurl%
\url{https://doi.org/10.48550/ARXIV.1610.07524}
\showDOI{\tempurl}


\bibitem[Corbett-Davies et~al\mbox{.}(2017)]%
        {Predictive_equality}
\bibfield{author}{\bibinfo{person}{Sam Corbett-Davies}, \bibinfo{person}{Emma
  Pierson}, \bibinfo{person}{Avi Feller}, \bibinfo{person}{Sharad Goel}, {and}
  \bibinfo{person}{Aziz Huq}.} \bibinfo{year}{2017}\natexlab{}.
\newblock \showarticletitle{Algorithmic Decision Making and the Cost of
  Fairness}. In \bibinfo{booktitle}{\emph{Proceedings of the 23rd ACM SIGKDD
  International Conference on Knowledge Discovery and Data Mining}} (Halifax,
  NS, Canada) \emph{(\bibinfo{series}{KDD '17})}.
  \bibinfo{publisher}{Association for Computing Machinery},
  \bibinfo{address}{New York, NY, USA}, \bibinfo{pages}{797–806}.
\newblock
\showISBNx{9781450348874}
\urldef\tempurl%
\url{https://doi.org/10.1145/3097983.3098095}
\showDOI{\tempurl}


\bibitem[Dwork et~al\mbox{.}(2012)]%
        {Dem_parity}
\bibfield{author}{\bibinfo{person}{Cynthia Dwork}, \bibinfo{person}{Moritz
  Hardt}, \bibinfo{person}{Toniann Pitassi}, \bibinfo{person}{Omer Reingold},
  {and} \bibinfo{person}{Richard Zemel}.} \bibinfo{year}{2012}\natexlab{}.
\newblock \showarticletitle{Fairness through Awareness}
  \emph{(\bibinfo{series}{ITCS '12})}. \bibinfo{publisher}{Association for
  Computing Machinery}, \bibinfo{address}{New York, NY, USA},
  \bibinfo{pages}{214–226}.
\newblock
\showISBNx{9781450311151}
\urldef\tempurl%
\url{https://doi.org/10.1145/2090236.2090255}
\showDOI{\tempurl}


\bibitem[Hardt et~al\mbox{.}(2016)]%
        {Equal_opportunity}
\bibfield{author}{\bibinfo{person}{Moritz Hardt}, \bibinfo{person}{Eric Price},
  \bibinfo{person}{Eric Price}, {and} \bibinfo{person}{Nati Srebro}.}
  \bibinfo{year}{2016}\natexlab{}.
\newblock \showarticletitle{Equality of Opportunity in Supervised Learning}. In
  \bibinfo{booktitle}{\emph{Advances in Neural Information Processing
  Systems}}, \bibfield{editor}{\bibinfo{person}{D.~Lee},
  \bibinfo{person}{M.~Sugiyama}, \bibinfo{person}{U.~Luxburg},
  \bibinfo{person}{I.~Guyon}, {and} \bibinfo{person}{R.~Garnett}} (Eds.),
  Vol.~\bibinfo{volume}{29}. \bibinfo{publisher}{Curran Associates, Inc.}
\newblock
\urldef\tempurl%
\url{https://proceedings.neurips.cc/paper/2016/file/9d2682367c3935defcb1f9e247a97c0d-Paper.pdf}
\showURL{%
\tempurl}


\bibitem[Kleinberg et~al\mbox{.}(2016)]%
        {trade_offs_fairness_risk_scores}
\bibfield{author}{\bibinfo{person}{Jon~M. Kleinberg}, \bibinfo{person}{Sendhil
  Mullainathan}, {and} \bibinfo{person}{Manish Raghavan}.}
  \bibinfo{year}{2016}\natexlab{}.
\newblock \showarticletitle{Inherent Trade-Offs in the Fair Determination of
  Risk Scores}.
\newblock \bibinfo{journal}{\emph{CoRR}}  \bibinfo{volume}{abs/1609.05807}
  (\bibinfo{year}{2016}).
\newblock
\showeprint[arXiv]{1609.05807}
\urldef\tempurl%
\url{http://arxiv.org/abs/1609.05807}
\showURL{%
\tempurl}


\bibitem[Mehrabi et~al\mbox{.}(2021)]%
        {large_bias_survey}
\bibfield{author}{\bibinfo{person}{Ninareh Mehrabi}, \bibinfo{person}{Fred
  Morstatter}, \bibinfo{person}{Nripsuta Saxena}, \bibinfo{person}{Kristina
  Lerman}, {and} \bibinfo{person}{Aram Galstyan}.}
  \bibinfo{year}{2021}\natexlab{}.
\newblock \showarticletitle{A Survey on Bias and Fairness in Machine Learning}.
\newblock \bibinfo{journal}{\emph{ACM Comput. Surv.}} \bibinfo{volume}{54},
  \bibinfo{number}{6}, Article \bibinfo{articleno}{115} (\bibinfo{date}{jul}
  \bibinfo{year}{2021}), \bibinfo{numpages}{35}~pages.
\newblock
\showISSN{0360-0300}
\urldef\tempurl%
\url{https://doi.org/10.1145/3457607}
\showDOI{\tempurl}


\bibitem[Mitchell et~al\mbox{.}(2021)]%
        {Algorithmic_fairness_choices_ass_def}
\bibfield{author}{\bibinfo{person}{Shira Mitchell}, \bibinfo{person}{Eric
  Potash}, \bibinfo{person}{Solon Barocas}, \bibinfo{person}{Alexander
  D'Amour}, {and} \bibinfo{person}{Kristian Lum}.}
  \bibinfo{year}{2021}\natexlab{}.
\newblock \showarticletitle{Algorithmic Fairness: Choices, Assumptions, and
  Definitions}.
\newblock \bibinfo{journal}{\emph{Annual Review of Statistics and Its
  Application}} \bibinfo{volume}{8}, \bibinfo{number}{1}
  (\bibinfo{year}{2021}), \bibinfo{pages}{141--163}.
\newblock
\urldef\tempurl%
\url{https://doi.org/10.1146/annurev-statistics-042720-125902}
\showDOI{\tempurl}
\showeprint{https://doi.org/10.1146/annurev-statistics-042720-125902}


\bibitem[Ruf and Detyniecki(2021)]%
        {fairness_compass}
\bibfield{author}{\bibinfo{person}{Boris Ruf} {and} \bibinfo{person}{Marcin
  Detyniecki}.} \bibinfo{year}{2021}\natexlab{}.
\newblock \showarticletitle{Towards the Right Kind of Fairness in {AI}}.
\newblock \bibinfo{journal}{\emph{CoRR}}  \bibinfo{volume}{abs/2102.08453}
  (\bibinfo{year}{2021}).
\newblock
\showeprint[arXiv]{2102.08453}
\urldef\tempurl%
\url{https://arxiv.org/abs/2102.08453}
\showURL{%
\tempurl}


\bibitem[Verma and Rubin(2018)]%
        {Fairness_definitions_explained}
\bibfield{author}{\bibinfo{person}{Sahil Verma} {and} \bibinfo{person}{Julia
  Rubin}.} \bibinfo{year}{2018}\natexlab{}.
\newblock \showarticletitle{Fairness Definitions Explained}. In
  \bibinfo{booktitle}{\emph{2018 IEEE/ACM International Workshop on Software
  Fairness (FairWare)}}. \bibinfo{pages}{1--7}.
\newblock
\urldef\tempurl%
\url{https://doi.org/10.1145/3194770.3194776}
\showDOI{\tempurl}


\end{thebibliography}

\appendix

 \section{Combining two definitions}
 \label{sec:2_def}

\subsection{Equal Opportunity and Demographic Parity}
$$ 
\begin{cases}
	TP_a + FN_a = 1 - x \\
	TN_a + FP_a = x \\
	TP_b + FN_b = 1 - y \\
	TN_b + FP_b = y \\
	\frac{TP_a}{TP_a + FN_a} = \frac{TP_b}{TP_b + FN_b} \\ 
	\frac{TP_a + FP_a}{TP_a + FP_a + FN_a + TN_a} = \frac{TP_b + FP_b}{TP_b + FP_b + FN_b + TN_b}
\end{cases} 
\Longleftrightarrow
\begin{cases}
	TP_a + FN_a = 1 - x \\
	TN_a + FP_a = x \\
	TP_b + FN_b = 1 - y \\
	TN_b + FP_b = y \\
	\frac{TP_a}{1-x} = \frac{TP_b}{1-y} \\ 
	TP_a + FP_a = TP_b + FP_b
\end{cases}$$ $$ 
\Longleftrightarrow
\begin{cases}
	TP_a + FN_a = 1 - x \\
	TN_a + FP_a = x \\
	TP_b + FN_b = 1 - y \\
	TN_b + FP_b = y \\
	TP_a = \frac{1-x}{1-y} TP_b \\ 
	FP_a = TP_b + FP_b - \frac{1-x}{1-y} TP_b
\end{cases} 
\Longleftrightarrow
\begin{cases}
	TP_a + FN_a = 1 - x \\
	TN_a + FP_a = x \\
	TP_b + FN_b = 1 - y \\
	TN_b + FP_b = y \\
	TP_a = \frac{1-x}{1-y} TP_b \\ 
	FP_a = (1-\frac{1-x}{1-y}) TP_b + FP_b
\end{cases} 
\Longleftrightarrow
\begin{cases}
	FN_a = 1-x- \frac{1-x}{1-y} TP_b \\ 
	TN_a = x - \frac{x-y}{1-y}TP_b - FP_b \\
	FN_b =  1 - y -TP_b \\ 
	TN_b = y - FP_b \\
	TP_a = \frac{1-x}{1-y} TP_b \\ 
	FP_a = \frac{x-y}{1-y}TP_b + FP_b
\end{cases}
$$ 
\subsubsection{Applying inequalities}
\label{subsub:appl_inequalities}
\[
\begin{cases}
	0 \le 1-x-\frac{1-x}{1-y} TP_b \\ 
	1-x-\frac{1-x}{1-y} TP_b \le 1 - x \\	
	0 \le x - \frac{x-y}{1-y}TP_b - FP_b \\
	x - \frac{x-y}{1-y}TP_b - FP_b \le x \\
	0 \le 1 - y -TP_b \\ 
	1 - y -TP_b \le 1-y \\ 
	0 \le y - FP_b \\
	y - FP_b \le y \\
	0 \le \frac{1-x}{1-y} TP_b \\ 
	\frac{1-x}{1-y} TP_b \le 1 - x \\ 
	0 \le \frac{x-y}{1-y}TP_b + FP_b	\\
	\frac{x-y}{1-y}TP_b + FP_b \le x 
\end{cases}
\Longleftrightarrow
\begin{cases}
	\frac{1}{1-y} TP_b \le 1 \\ 
	\frac{1-x}{1-y} TP_b \ge 0 \\	
	FP_b \le x + \frac{-x+y}{1-y}TP_b \\
	\frac{-x+y}{1-y}TP_b  \le FP_b \\
	TP_b \le 1 - y \\ 
	TP_b \ge 0 \\ 
	FP_b \le y \\
	FP_b \ge 0 \\
	0 \le \frac{1-x}{1-y} \\ 
	\frac{1}{1-y} TP_b \le 1 \\ 
	\frac{y-x}{1-y} \le  \frac{FP_b}{TP_b}	\\
	 FP_b + \frac{x-y}{1-y}TP_b \le x 
\end{cases}
 \Longleftrightarrow
\begin{cases}
	\frac{y-x}{1-y} \le  \frac{FP_b}{TP_b}	\\
	 FP_b + \frac{x-y}{1-y}TP_b \le x 
\end{cases}\]

\subsection{Demographic Parity and False positive parity}
$$ \begin{cases}
	TP_a + FN_a = 1 - x \\
	TN_a + FP_a = x \\
	TP_b + FN_b = 1 - y \\
	TN_b + FP_b = y \\
	TP_a + FP_a = TP_b + FP_b \\
	FP_a = \frac{x}{y} FP_b
\end{cases} \Longleftrightarrow 
\begin{cases}
	TP_a + FN_a = 1 - x \\
	TN_a + FP_a = x \\
	TP_b + FN_b = 1 - y \\
	TN_b + FP_b = y \\
	FP_a = \frac{x}{y} FP_b	\\
	TP_a = TP_b + (1-\frac{x}{y}) FP_b
\end{cases} \Longleftrightarrow
\begin{cases}
	FN_a = 1 - x - \frac{y-x}{y}FP_b - TP_b \\
	TN_a = x  - \frac{x}{y} FP_b\\
	FN_b = 1 - y - TP_b \\
	TN_b = y - FP_b \\
	FP_a = \frac{x}{y} FP_b \\ 
	TP_a = \frac{y-x}{y}FP_b + TP_b
\end{cases}
$$

\subsubsection{Added constraints through inequalities}
\[
\begin{cases}
	\frac{x-y}{y} \le  \frac{TP_b}{FP_b} \\
	1-x \ge \frac{y-x}{y}FP_b + TP_b \\
\end{cases} 
 \]

\subsection{Demographic Parity and Predictive Parity}
$$ \begin{cases}
	TP_a + FN_a = 1 - x \\
	TN_a + FP_a = x \\
	TP_b + FN_b = 1 - y \\
	TN_b + FP_b = y \\
	TP_a*FP_b = TP_b*FP_a \\ 
	TP_a + FP_a = TP_b + FP_b
\end{cases} \Longleftrightarrow
\begin{cases}
	TP_a + FN_a = 1 - x \\
	TN_a + FP_a = x \\
	TP_b + FN_b = 1 - y \\
	TN_b + FP_b = y \\
	TP_a = \frac{TP_b*FP_a}{FP_b} \\ 
	\frac{TP_b*FP_a}{FP_b} + FP_a = TP_b + FP_b
\end{cases}
 \Longleftrightarrow
\begin{cases}
	TP_a + FN_a = 1 - x \\
	TN_a + FP_a = x \\
	TP_b + FN_b = 1 - y \\
	TN_b + FP_b = y \\
	TP_a = \frac{TP_b*FP_a}{FP_b} \\ 
	\frac{FP_b + TP_b}{FP_b}*FP_a = TP_b + FP_b
\end{cases} $$

$$ \Longleftrightarrow
\begin{cases}
	TP_a + FN_a = 1 - x \\
	TN_a + FP_a = x \\
	TP_b + FN_b = 1 - y \\
	TN_b + FP_b = y \\
	TP_a = \frac{TP_b*FP_a}{FP_b} \\ 
	FP_a = FP_b \lor FP_b + TP_b = 0
\end{cases}
\Longleftrightarrow
\begin{cases}
	FN_a = 1 - x - TP_b \\
	TN_a = x - FP_b\\
	FN_b = 1 - y - TP_b \\
	TN_b = y - FP_b \\
	TP_a = TP_b\\ 
	FP_a = FP_b
\end{cases} \lor
\begin{cases}
	FN_a = 1 - x - FP_a \\
	TN_a = x - FP_a \\
	FN_b = 1 - y \\
	TN_b = y \\
	TP_a = FP_a \\ 
	FP_b = TP_b = 0
\end{cases}$$

\subsubsection{Added constraints through inequalities}
\[
\begin{cases}
	1 - x \ge TP_b \\
	x \ge FP_b
\end{cases}
\]

\subsection{Demographic parity and predictive equality}
$$ \begin{cases}
	TP_a + FN_a = 1 - x \\
	TN_a + FP_a = x \\
	TP_b + FN_b = 1 - y \\
	TN_b + FP_b = y \\
	FN_a * TN_b = FN_b * TN_a \\
	TP_a + FP_a = TP_b + FP_b 
\end{cases} \Longleftrightarrow
\begin{cases}
	TP_a + FN_a = 1 - x \\
	TN_a + FP_a = x \\
	TP_b + FN_b = 1 - y \\
	TN_b + FP_b = y \\
	FN_a = \frac{FN_b*TN_a}{TN_b} \\
	1 - x - FN_a + x - TN_a = 1 - y - FN_b + y - TN_b
\end{cases}
$$ 
$$ \Longleftrightarrow 
\begin{cases}
	TP_a + FN_a = 1 - x \\
	TN_a + FP_a = x \\
	TP_b + FN_b = 1 - y \\
	TN_b + FP_b = y \\
	FN_a = \frac{FN_b*TN_a}{TN_b} \\
	FN_a + TN_a = FN_b + TN_b
\end{cases} \Longleftrightarrow
\begin{cases}
	TP_a + FN_a = 1 - x \\
	TN_a + FP_a = x \\
	TP_b + FN_b = 1 - y \\
	TN_b + FP_b = y \\
	FN_a = \frac{FN_b*TN_a}{TN_b} \\
	\frac{FN_b*TN_a}{TN_b} + TN_a = FN_b + TN_b
\end{cases}
 \Longleftrightarrow 
\begin{cases}
	TP_a + FN_a = 1 - x \\
	TN_a + FP_a = x \\
	TP_b + FN_b = 1 - y \\
	TN_b + FP_b = y \\
	FN_a = \frac{FN_b*TN_a}{TN_b} \\
	\frac{FN_b + TN_b}{TN_b}*TN_a = FN_b + TN_b
\end{cases} $$
$$\Longleftrightarrow
\begin{cases}
	TP_a + FN_a = 1 - x \\
	TN_a + FP_a = x \\
	TP_b + FN_b = 1 - y \\
	TN_b + FP_b = y \\
	TN_a = TN_b \lor FN_b = TN_b = 0 \\
	FN_a = \frac{FN_b*TN_a}{TN_b}
\end{cases}
\Longleftrightarrow
\begin{cases}
	TP_a = 1 - x - FN_a \\
	FP_a = x - FN_a \\
	TP_b = 1 - y \\
	FP_b = y \\
	FN_b = TN_b = 0 \\
	FN_a = TN_a
\end{cases} \lor
\begin{cases}
	TP_a = 1 - x - FN_b \\
	FP_a = x - TN_b \\
	TP_b = 1 - y - FN_b \\
	FP_b = y - TN_b \\
	TN_a = TN_b \\
	FN_a = FN_b
\end{cases}
$$
\subsubsection{Added constraints through inequalities}
\[
\begin{cases}	
	FN_b \le 1 - x \\
	TN_b \le x
\end{cases} 
\]

\subsection{Demographic Parity and Overall accuracy Equality}
$$ \begin{cases}
	TP_a + FN_a = 1 - x \\
	TN_a + FP_a = x \\
	TP_b + FN_b = 1 - y \\
	TN_b + FP_b = y \\
	TP_a + FP_a = TP_b + FP_b \\
	TP_a + TN_a = TP_b + TN_b
\end{cases} \Longleftrightarrow
\begin{cases}
	TP_a + FN_a = 1 - x \\
	TN_a + FP_a = x \\
	TP_b + FN_b = 1 - y \\
	TN_b + FP_b = y \\
	TP_a - TP_b = FP_b - FP_a \\
	TP_a - TP_b = TN_b - TN_a 
\end{cases}
$$ 

$$ \Longleftrightarrow \begin{cases}
	TP_a + FN_a = 1 - x \\
	TN_a + FP_a = x \\
	TP_b + FN_b = 1 - y \\
	TN_b + FP_b = y \\
	TN_b - TN_a = FP_b - FP_a \\
	TP_a - TP_b = FP_b - FP_a
\end{cases} \Longleftrightarrow
\begin{cases}
	TP_a + FN_a = 1 - x \\
	TN_a + FP_a = x \\
	TP_b + FN_b = 1 - y \\
	TN_b + FP_b = y \\
	y - FP_b - x + FP_a = FP_b - FP_a \\
	TP_a - TP_b = FP_b - FP_a 
\end{cases}
$$ 

$$ \Longleftrightarrow \begin{cases}
	TP_a + FN_a = 1 - x \\
	TN_a + FP_a = x \\
	TP_b + FN_b = 1 - y \\
	TN_b + FP_b = y \\
	y - x = 2FP_b - 2FP_a \\
	TP_a - TP_b = FP_b - FP_a
\end{cases} \Longleftrightarrow
\begin{cases}
	TP_a + FN_a = 1 - x \\
	TN_a + FP_a = x \\
	TP_b + FN_b = 1 - y \\
	TN_b + FP_b = y \\
	FP_b - FP_a = \frac{y-x}{2} \\
	TP_a - TP_b =  \frac{y-x}{2}
\end{cases} \Longleftrightarrow
\begin{cases}
	FN_a = 1 - x - TP_b - \frac{y-x}{2} \\
	TN_a = x - FP_b - \frac{x-y}{2} \\
	FN_b = 1 - y - TP_b \\
	TN_b = y - FP_b \\
	FP_a = FP_b + \frac{x-y}{2} \\
	TP_a = TP_b + \frac{y-x}{2}
\end{cases}
$$ 
\subsubsection{Added constraints through inequalities}
\[
\begin{cases}
	TP_b \le 1 - \frac{x+y}{2} \\
	TP_b \ge \frac{x-y}{2} \\
	FP_b \le \frac{x+y}{2} \\
	FP_b \ge \frac{y-x}{2} \\
\end{cases}
\]

\subsection{Demographic parity and Treatment Equality}
$$ 
\begin{cases}
	TP_a + FN_a = 1 - x \\
	TN_a + FP_a = x \\
	TP_b + FN_b = 1 - y \\
	TN_b + FP_b = y \\
	TP_a + FP_a = TP_b + FP_b \\ 
	\frac{FN_a}{FP_a} = \frac{FN_b}{FP_b}
\end{cases} \Longleftrightarrow
\begin{cases}
	TP_a + FN_a = 1 - x \\
	TN_a + FP_a = x \\
	TP_b + FN_b = 1 - y \\
	TN_b + FP_b = y \\
	FP_a = \frac{FN_a*FP_b}{FN_b} \\
	1 - x - FN_a + \frac{FN_a*FP_b}{FN_b}  = 1 - y - FN_b + FP_b
\end{cases}
$$
$$ \Longleftrightarrow
\begin{cases}
	TP_a + FN_a = 1 - x \\
	TN_a + FP_a = x \\
	TP_b + FN_b = 1 - y \\
	TN_b + FP_b = y \\
	FP_a = \frac{FN_a*FP_b}{FN_b} \\ 
	\frac{FP_b - FN_b}{FN_b} FN_a = x - y + FP_b - FN_b
\end{cases} \Longleftrightarrow
\begin{cases}
	TP_a = 1 - x - FN_b (\frac{x-y}{FP_b - FN_b} + 1) \\
	TN_a = x - FP_b (\frac{x-y}{FP_b - FN_b} + 1) \\
	TP_b = 1 - y - FN_b \\
	TN_b = y - FP_b \\
	FN_a = FN_b (\frac{x-y}{FP_b - FN_b} + 1) \\
	FP_a = FP_b (\frac{x-y}{FP_b - FN_b} + 1)
\end{cases} \lor 
\begin{cases}
	TP_a = 1 - x - FN_a \\
	TN_a = x - FN_a \\
	TP_b = 1 - y - FN_b \\
	TN_b = y - FN_b\\
	FP_b = FN_b \\
	x = y \\
	FP_a = FN_a
\end{cases}
$$
\subsubsection{Added constraints through inequalities}
\[ 
\begin{cases}
	\frac{x-y}{FP_b - FN_b} \ge -1 \\
	1 - x \ge FN_b \\
	x \ge FP_b
\end{cases}
\]

\subsection{Equalised Odds}
$$
\begin{cases}
	TP_a + FN_a = 1 - x \\
	TN_a + FP_a = x \\
	TP_b + FN_b = 1 - y \\
	TN_b + FP_b = y \\
	TP_a = \frac{1-x}{1-y} TP_b \\
	\frac{FP_a}{FP_a + TN_a} = \frac{FP_b}{FP_b + TN_b} 
\end{cases} \Longleftrightarrow
\begin{cases}
	TP_a + FN_a = 1 - x \\
	TN_a + FP_a = x \\
	TP_b + FN_b = 1 - y \\
	TN_b + FP_b = y \\
	TP_a = \frac{1-x}{1-y} TP_b \\
	\frac{FP_a}{x} = \frac{FP_b}{y}
\end{cases}
\Longleftrightarrow
\begin{cases}
	FN_a = 1 - x- \frac{1-x}{1-y} TP_b \\
	TN_a = x - \frac{x}{y} FP_b \\	
	FN_b = 1 - y - TP_b \\
	TN_b = y - FP_b \\	
	TP_a = \frac{1-x}{1-y} TP_b \\ 
	FP_a = \frac{x}{y} FP_b
\end{cases} $$

\subsection{Equal Opportunity and Predictive Parity}
$$ \begin{cases}
	TP_a + FN_a = 1 - x \\
	TN_a + FP_a = x \\
	TP_b + FN_b = 1 - y \\
	TN_b + FP_b = y \\
	TP_a = \frac{1-x}{1-y} TP_b \\ 
	\frac{TP_a}{TP_a + FP_a} = \frac{TP_b}{TP_b + FP_b}
\end{cases} \Longleftrightarrow 
\begin{cases}
	TP_a + FN_a = 1 - x \\
	TN_a + FP_a = x \\
	TP_b + FN_b = 1 - y \\
	TN_b + FP_b = y \\
	TP_a = \frac{1-x}{1-y} TP_b \\ 
	TP_a * FP_b = TP_b * FP_a
\end{cases} \Longleftrightarrow \begin{cases}
	TP_a + FN_a = 1 - x \\
	TN_a + FP_a = x \\
	TP_b + FN_b = 1 - y \\
	TN_b + FP_b = y \\
	TP_a = \frac{1-x}{1-y} TP_b \\ 
	\frac{1-x}{1-y} * TP_b * FP_b = TP_b * FP_a
\end{cases} 
$$
$$\Longleftrightarrow \begin{cases}
	TP_a + FN_a = 1 - x \\
	TN_a + FP_a = x \\
	TP_b + FN_b = 1 - y \\
	TN_b + FP_b = y \\
	TP_a = \frac{1-x}{1-y} TP_b \\ 
	\frac{1-x}{1-y} FP_b = FP_a \lor TP_b = 0
\end{cases} \Longleftrightarrow \begin{cases}
	FN_a = 1 - x - \frac{1-x}{1-y} TP_b \\
	TN_a = x - \frac{1-x}{1-y} FP_b \\
	FN_b = 1 - y - TP_b \\
	TN_b = y - FP_b \\
	TP_a = \frac{1-x}{1-y} TP_b \\ 
	FP_a = \frac{1-x}{1-y} FP_b
\end{cases} \lor 
\begin{cases}
	FN_a = 1 - x \\
	TN_a = x - FP_a \\
	FN_b = 1 - y \\
	TN_b = y - FP_b \\
	TP_a = 0 \\ 
	TP_b = 0
\end{cases}
$$ 
\subsubsection{Added constraint through inequalities}
\[ FP_b \le x\frac{1-y}{1-x}FP_b \]

\subsection{Equal Opportunity and Predictive Equality}
$$ \begin{cases}
	TP_a + FN_a = 1 - x \\
	TN_a + FP_a = x \\
	TP_b + FN_b = 1 - y \\
	TN_b + FP_b = y \\
	TP_a = \frac{1-x}{1-y} TP_b \\ 
	\frac{FN_a}{FN_a + TN_a} = \frac{FN_b}{FN_b + TN_b}
\end{cases} \Longleftrightarrow 
\begin{cases}
	TP_a + FN_a = 1 - x \\
	TN_a + FP_a = x \\
	TP_b + FN_b = 1 - y \\
	TN_b + FP_b = y \\
	TP_a = \frac{1-x}{1-y} TP_b \\ 
	FN_a* TN_b = FN_b*TN_a
\end{cases}
\Longleftrightarrow \begin{cases}
	TP_a + FN_a = 1 - x \\
	TN_a + FP_a = x \\
	TP_b + FN_b = 1 - y \\
	TN_b + FP_b = y \\
	1 - x - FN_a = \frac{1-x}{1-y} (1-y-FN_b) \\ 
	FN_a* TN_b = FN_b*TN_a
\end{cases}$$ 

$$
 \Longleftrightarrow 
\begin{cases}
	TP_a + FN_a = 1 - x \\
	TN_a + FP_a = x \\
	TP_b + FN_b = 1 - y \\
	TN_b + FP_b = y \\
	FN_a = \frac{1-x}{1-y}FN_b \\
	\frac{1-x}{1-y}FN_b * TN_b = FN_b*TN_a
\end{cases}
\Longleftrightarrow \begin{cases}
	TP_a + FN_a = 1 - x \\
	TN_a + FP_a = x \\
	TP_b + FN_b = 1 - y \\
	TN_b + FP_b = y \\
	FN_a = \frac{1-x}{1-y}FN_b \\ 
	\frac{1-x}{1-y} TN_b = TN_a \lor FN_b = 0
\end{cases} $$
$$
\Longleftrightarrow 
\begin{cases}
	TP_a = 1 - x - \frac{1-x}{1-y} FN_b  \\
	FP_a = x - \frac{1-x}{1-y} TN_b  \\
	TP_b = 1 - y - FN_b \\
	FP_b = y - TN_b \\
	FN_a = \frac{1-x}{1-y} FN_b \\
	TN_a = \frac{1-x}{1-y} TN_b 
\end{cases} \lor 
\begin{cases}
	TP_a = 1 - x  \\
	FP_a = x - TN_a \\
	TP_b = 1 - y \\
	FP_b = y - TN_b \\
	FN_a = 0 \\
	FN_b = 0 
\end{cases}
$$ 
\subsubsection{Added constraint through inequalities}
\[
TN_b \le \frac{1-y}{1-x}*x
\]

\subsection{Equal opportunity and Overall accuracy equality}
$$ \begin{cases}
	TP_a + FN_a = 1 - x \\
	TN_a + FP_a = x \\
	TP_b + FN_b = 1 - y \\
	TN_b + FP_b = y \\
	TP_a = \frac{1-x}{1-y} TP_b \\ 
	TP_a + TN_a = TP_b + TN_b 
\end{cases} 
\Longleftrightarrow 
\begin{cases}
	TP_a + FN_a = 1 - x \\
	TN_a + FP_a = x \\
	TP_b + FN_b = 1 - y \\
	TN_b + FP_b = y \\
	TP_a = \frac{1-x}{1-y} TP_b \\ 
	\frac{1-x}{1-y} TP_b - TP_b = TN_b - TN_a 
\end{cases} $$ $$
\Longleftrightarrow 
\begin{cases}
	TP_a + FN_a = 1 - x \\
	TN_a + FP_a = x \\
	TP_b + FN_b = 1 - y \\
	TN_b + FP_b = y \\
	TP_a = \frac{1-x}{1-y} TP_b \\ 
	\frac{y-x}{1-y} TP_b = TN_b - TN_a 
\end{cases} 
\Longleftrightarrow 
\begin{cases}
	FN_a = 1 - x - \frac{1-x}{y-x} (TN_b - TN_a) \\
	FP_a = x - TN_a \\
	FN_b = 1 - y - \frac{1-y}{y-x} (TN_b - TN_a) \\
	FP_b = y - TN_b\\
	TP_a = \frac{1-x}{y-x} (TN_b - TN_a) \\ 
	TP_b = \frac{1-y}{y-x} (TN_b - TN_a)
\end{cases} $$ 

\subsubsection{Added constraints through inequalities}
\[ 
\begin{cases}
	y > x \\
	TN_a \le TN_b \\
	y - x \ge TN_b - TN_a \\
\end{cases} \lor
\begin{cases}
	x > y \\
	TN_b \le TN_b \\
	y - x \le TN_b - TN_a \\
\end{cases}
\]

\subsection{Equal opportunity and Treatment Equality}
$$ \begin{cases}
	TP_a + FN_a = 1 - x \\
	TN_a + FP_a = x \\
	TP_b + FN_b = 1 - y \\
	TN_b + FP_b = y \\
	TP_a = \frac{1-x}{1-y} TP_b \\ 
	\frac{FN_a}{FP_a} = \frac{FN_b}{FP_b} 
\end{cases} 
\Longleftrightarrow 
\begin{cases}
	TP_a + FN_a = 1 - x \\
	TN_a + FP_a = x \\
	TP_b + FN_b = 1 - y \\
	TN_b + FP_b = y \\
	FN_a = \frac{1-x}{1-y} FN_b \\ 
	\frac{1-x}{1-y} \frac{FN_b}{FP_a} = \frac{FN_b}{FP_b}  
\end{cases} 
\Longleftrightarrow 
\begin{cases}
	TP_a + FN_a = 1 - x \\
	TN_a + FP_a = x \\
	TP_b + FN_b = 1 - y \\
	TN_b + FP_b = y \\
	FN_a = \frac{1-x}{1-y} FN_b \\ 
	FP_a = \frac{1-x}{1-y} FP_b \lor FN_b = 0
\end{cases} $$ $$
\Longleftrightarrow 
\begin{cases}
	TP_a = 1 - x - \frac{1-x}{1-y} FN_b \\
	TN_a = x - \frac{1-x}{1-y} FP_b \\
	TP_b = 1 - y - FN_b \\
	TN_b = y - FP_b \\
	FN_a = \frac{1-x}{1-y} FN_b \\ 
	FP_a = \frac{1-x}{1-y} FP_b 
\end{cases} \lor
\begin{cases}
	TP_a = 1 - x \\
	TN_a = x - FP_a \\
	TP_b = 1 - y  \\
	TN_b = y - FP_b \\
	FN_a = 0 \\ 
	FN_b = 0 
\end{cases}
 $$ 
 
\subsubsection{Added constraints through inequalities}
\[
\frac{1-y}{1-x}x \ge FP_b 
\]

\subsection{False positive parity and Predictive Parity}
$$ 
\begin{cases}
	TP_a + FN_a = 1 - x \\
	TN_a + FP_a = x \\
	TP_b + FN_b = 1 - y \\
	TN_b + FP_b = y \\
	FP_a = \frac{x}{y}FP_b \\ 
	TP_a * FP_b = TP_b * FP_a
\end{cases} \Longleftrightarrow
\begin{cases}
	TP_a + FN_a = 1 - x \\
	TN_a + FP_a = x \\
	TP_b + FN_b = 1 - y \\
	TN_b + FP_b = y \\
	FP_a = \frac{x}{y}FP_b \\ 
	TP_a * FP_b = TP_b * \frac{x}{y} FP_b
\end{cases} \Longleftrightarrow
\begin{cases}
	FN_a = 1 - x - TP_a \\
	TN_a = x -  \frac{x}{y}FP_b \\
	FN_b = 1 - y - TP_b \\
	TN_b = y - FP_b \\
	FP_a = \frac{x}{y}FP_b \\ 
	TP_a = \frac{x}{y}TP_b \lor FP_b = 0
\end{cases}
$$

\subsubsection{Added constraints through inequalities}
\[
\frac{y}{x}(1-x) \ge TP_b
\]

\subsection{False positive parity and Predictive Equality}

$$ \begin{cases}
	TP_a + FN_a = 1 - x \\
	TN_a + FP_a = x \\
	TP_b + FN_b = 1 - y \\
	TN_b + FP_b = y \\
	FP_a = \frac{x}{y}FP_b \\ 
	FN_a*TN_b = FN_b*TN_a
\end{cases} \Longleftrightarrow
\begin{cases}
	TP_a + FN_a = 1 - x \\
	TN_a + FP_a = x \\
	TP_b + FN_b = 1 - y \\
	TN_b + FP_b = y \\
	x - TN_a = \frac{x}{y}(y-TN_b) \\ 
	FN_a * TN_b = FN_b * TN_a
\end{cases}
$$

$$ \Longleftrightarrow
\begin{cases}
	TP_a + FN_a = 1 - x \\
	TN_a + FP_a = x \\
	TP_b + FN_b = 1 - y \\
	TN_b + FP_b = y \\
	TN_a = \frac{x}{y} TN_b \\
	FN_a * TN_b = FN_b * \frac{x}{y} * TN_b
\end{cases} \Longleftrightarrow
\begin{cases}
	TP_a = 1 - x - FN_a \\
	FP_a = x - \frac{x}{y} TN_b  \\
	TP_b = 1 - y - FN_b \\
	FP_b = y - TN_b \\
	TN_a = \frac{x}{y} TN_b \\
	FN_a = \frac{x}{y} FN_b \lor TN_b=0
\end{cases}
$$

\subsubsection{Added constraints through inequalities}
\[
\frac{y}{x}(1-x)\ge FN_b
\]

\subsection{False positive parity and Overall accuracy Equality}

$$
\begin{cases}
	TP_a + FN_a = 1 - x \\
	TN_a + FP_a = x \\
	TP_b + FN_b = 1 - y \\
	TN_b + FP_b = y \\
	TN_a = \frac{x}{y}TN_b \\ 
	TP_a + TN_a = TP_b + TN_b
\end{cases} \Longleftrightarrow
\begin{cases}
	TP_a + FN_a = 1 - x \\
	TN_a + FP_a = x \\
	TP_b + FN_b = 1 - y \\
	TN_b + FP_b = y \\
	TN_a = \frac{x}{y}TN_b \\ 
	TP_a + \frac{x}{y}TN_b = TP_b + TN_b
\end{cases} \Longleftrightarrow
\begin{cases}
	TP_a + FN_a = 1 - x \\
	TN_a + FP_a = x \\
	TP_b + FN_b = 1 - y \\
	TN_b + FP_b = y \\
	TN_a = \frac{x}{y}TN_b \\ 
	\frac{x-y}{y}TN_b = TP_b - TP_a
\end{cases}
$$

$$ \Longleftrightarrow
\begin{cases}
	TP_a + FN_a = 1 - x \\
	TN_a + FP_a = x \\
	TP_b + FN_b = 1 - y \\
	TN_b + FP_b = y \\
	TN_b = \frac{y}{x-y} (TP_b - TPa) \\
	TN_a = \frac{x}{y} \frac{y}{x-y} (TP_b-TP_a)
\end{cases} \Longleftrightarrow
\begin{cases}
	FN_a = 1 - x - TP_a\\
	FP_a = x - \frac{x}{x-y} (TP_b-TP_a) \\
	FN_b = 1 - y - TP_b \\
	FP_b = y - \frac{y}{x-y} (TP_b - TPa) \\
	TN_b = \frac{y}{x-y} (TP_b - TPa) \\
	TN_a = \frac{x}{x-y} (TP_b-TP_a)
\end{cases}
$$

\subsubsection{Added constraints through inequalities}
\[
\begin{cases}
	x > y \\
	TP_a \le TP_b \\
	x - y \ge TP_b - TP_a \\
\end{cases} \lor
\begin{cases}
	x < y \\
	TP_a \ge TP_b \\
	x - y \le TP_b - TP_a \\
\end{cases}
\]

\subsection{False positive parity and Treatment Equality}
$$ 
\begin{cases}
	TP_a + FN_a = 1 - x \\
	TN_a + FP_a = x \\
	TP_b + FN_b = 1 - y \\
	TN_b + FP_b = y \\
	FP_a = \frac{x}{y}FP_b \\ 
	\frac{FN_a}{FP_a} = \frac{FN_b}{FP_b}
\end{cases} \Longleftrightarrow
\begin{cases}
	TP_a + FN_a = 1 - x \\
	TN_a + FP_a = x \\
	TP_b + FN_b = 1 - y \\
	TN_b + FP_b = y \\
	FP_a = \frac{x}{y}FP_b \\ 
	FN_a * FP_b = FN_b*\frac{x}{y}*FP_b
\end{cases} \Longleftrightarrow
\begin{cases}
	TP_a = 1 - x - FN_a  \\
	TN_a = x - \frac{x}{y}FP_b \\
	TP_b = 1 - y - FN_b \\
	TN_b = y - FP_b \\
	FP_a = \frac{x}{y}FP_b \\ 
	FN_a = \frac{x}{y}*FN_b \lor FP_b = 0
\end{cases}
$$

\subsubsection{Added constraints through inequalities}
\[
\frac{y}{x}(1-x) \ge FN_b
\]

\subsection{Predictive Parity and Predictive Equality}

$$ 
\begin{cases}
	TP_a + FN_a = 1 - x \\
	TN_a + FP_a = x \\
	TP_b + FN_b = 1 - y \\
	TN_b + FP_b = y \\
	TP_a = \frac{TP_b FP_a}{FP_b} \\
	FN_a = \frac{FN_b TN_a}{TN_b}
\end{cases}
$$

\subsection{Predictive Parity and Overall accuracy Equality}
$$ 
\begin{cases}
	TP_a + FN_a = 1 - x \\
	TN_a + FP_a = x \\
	TP_b + FN_b = 1 - y \\
	TN_b + FP_b = y \\
	TP_a * FP_b = TP_b * FP_a \\
	TP_a + TN_a = TP_b + TN_b
\end{cases} \Longleftrightarrow
\begin{cases}
	TP_a + FN_a = 1 - x \\
	TN_a + FP_a = x \\
	TP_b + FN_b = 1 - y \\
	TN_b + FP_b = y \\
	TP_a = \frac{TP_b * FP_a}{FP_b} \\
	\frac{TP_b * FP_a}{FP_b} + x - FP_a = TP_b + y - FP_b
\end{cases}
$$

$$ \Longleftrightarrow
\begin{cases}
	TP_a + FN_a = 1 - x \\
	TN_a + FP_a = x \\
	TP_b + FN_b = 1 - y \\
	TN_b + FP_b = y \\
	TP_a = \frac{TP_b * FP_a}{FP_b} \\
	\frac{FP_a - FP_b}{FP_b} TP_b = y - x + FP_a - FP_b
\end{cases} \Longleftrightarrow
\begin{cases}
	TP_a + FN_a = 1 - x \\
	TN_a + FP_a = x \\
	TP_b + FN_b = 1 - y \\
	TN_b + FP_b = y \\
	TP_a = \frac{TP_b * FP_a}{FP_b} \\
	TP_b = FP_b \frac{y-x}{FP_a - FP_b} + FP_b \lor (FP_a = FP_b \land x=y)
\end{cases}
$$
$$ \Longleftrightarrow
\begin{cases}
	FN_a = 1 - x - FP_a \frac{y-x}{FP_a - FP_b} - FP_a  \\
	TN_a = x - FP_a \\
	FN_b = 1 - y - FP_b \frac{y-x}{FP_a - FP_b} - FP_b \\
	TN_b = y - FP_b \\
	TP_b = FP_b \frac{y-x}{FP_a - FP_b} + FP_b \\
	TP_a = FP_a \frac{y-x}{FP_a - FP_b} + FP_a
\end{cases} \lor 
\begin{cases}
	FN_a = 1 - x - TP_b \\
	TN_a = x - FP_b \\
	FN_b = 1 - y - TP_b \\
	TN_b = y - FP_b \\
	TP_a = TP_b \\
	FP_a = FP_b \land x=y
\end{cases}
$$

\subsubsection{Added constraints through inequalities}
\[
\begin{cases}
	1-y \ge 1-x \\
	FP_a \ge FP_b \\
	\frac{1 - y - FP_a}{1 - x - FP_a} \ge \frac{FP_b}{FP_a} \\
	\frac{1 -y + FP_b}{1 - x + FP_b} \ge \frac{FP_b}{FP_a}
\end{cases} \lor
\begin{cases}
	1-y \le 1-x \\
	FP_a \le FP_b \\
	\frac{1 - y - FP_a}{1 - x - FP_a} \le \frac{FP_b}{FP_a} \\
	\frac{1 -y + FP_b}{1 - x + FP_b} \le \frac{FP_b}{FP_a}
\end{cases}
\]

\subsection{Predictive Parity and Treatment Equality}
$$
\begin{cases}
	TP_a + FN_a = 1 - x \\
	TN_a + FP_a = x \\
	TP_b + FN_b = 1 - y \\
	TN_b + FP_b = y \\
	TP_a * FP_b = TP_b * FP_a \\
	FN_a * FP_b = FN_b*FP_a
\end{cases} \Longleftrightarrow
\begin{cases}
	TP_a + FN_a = 1 - x \\
	TN_a + FP_a = x \\
	TP_b + FN_b = 1 - y \\
	TN_b + FP_b = y \\
	FN_a = \frac{FN_b*FP_a}{FP_b} \\
	(1 - x - \frac{FN_b*FP_a}{FP_b})*FP_b = (1-y-FN_b)*FP_a
\end{cases}
$$
$$ \Longleftrightarrow
\begin{cases}
	TP_a + FN_a = 1 - x \\
	TN_a + FP_a = x \\
	TP_b + FN_b = 1 - y \\
	TN_b + FP_b = y \\
	FN_a = \frac{FN_b*FP_a}{FP_b} \\
	FP_b -xFP_b - FN_b*FP_a = FP_a -yFP_a - FN_b*FP_a
\end{cases} \Longleftrightarrow
\begin{cases}
	TP_a + FN_a = 1 - x \\
	TN_a + FP_a = x \\
	TP_b + FN_b = 1 - y \\
	TN_b + FP_b = y \\
	FN_a = \frac{FN_b*FP_a}{FP_b} \\
	FP_b -xFP_b = FP_a -yFP_a
\end{cases} 
$$
$$ \Longleftrightarrow
\begin{cases}
	TP_a + FN_a = 1 - x \\
	TN_a + FP_a = x \\
	TP_b + FN_b = 1 - y \\
	TN_b + FP_b = y \\
	FN_a = \frac{FN_b*FP_a}{FP_b} \\
	FP_a = \frac{1-x}{1-y}FP_b
\end{cases} \Longleftrightarrow
\begin{cases}
	TP_a = 1 - x - \frac{1-x}{1-y}FN_b \\
	TN_a = x - \frac{1-x}{1-y}FP_b \\
	TP_b = 1 - y - FN_b \\
	TN_b = y - FP_b \\
	FN_a = \frac{1-x}{1-y}FN_b \\
	FP_a = \frac{1-x}{1-y}FP_b
\end{cases}
$$

\subsubsection{Added constraints through inequalities}
\[
\frac{1-y}{1-x}x \ge FP_b 
\]

\subsection{Predictive Equality and Overall accuracy Equality}
$$
\begin{cases}
	TP_a + FN_a = 1 - x \\
	TN_a + FP_a = x \\
	TP_b + FN_b = 1 - y \\
	TN_b + FP_b = y \\
	FN_a * TN_b = FN_b*TN_a \\
	TP_a + TN_a = TP_b + TN_b
\end{cases} \Longleftrightarrow
\begin{cases}
	TP_a + FN_a = 1 - x \\
	TN_a + FP_a = x \\
	TP_b + FN_b = 1 - y \\
	TN_b + FP_b = y \\
	FN_a = \frac{FN_b*TN_a}{TN_b} \\
	1 - x - FN_a + TN_a = 1 - y - FN_b + TN_b
\end{cases}
$$

$$ \Longleftrightarrow
\begin{cases}
	TP_a + FN_a = 1 - x \\
	TN_a + FP_a = x \\
	TP_b + FN_b = 1 - y \\
	TN_b + FP_b = y \\
	FN_a = \frac{FN_b*TN_a}{TN_b} \\
	\frac{TN_b - FN_b}{TN_b} TN_a = x - y +TN_b - FN_b
\end{cases} \Longleftrightarrow
\begin{cases}
	TP_a + FN_a = 1 - x \\
	TN_a + FP_a = x \\
	TP_b + FN_b = 1 - y \\
	TN_b + FP_b = y \\
	TN_a = TN_b \frac{x-y}{TN_b - FN_b} + TN_b \lor (TN_b = FN_b \land x=y) \\
	FN_a = \frac{FN_b*TN_a}{TN_b} 
\end{cases}
$$
$$
\Longleftrightarrow
\begin{cases}
	TP_a = 1 - x - FN_b\frac{x-y}{TN_b-FN_b} - FN_b\\
	FP_a = x - TN_b \frac{x-y}{TN_b - FN_b} - TN_b \\
	TP_b = 1 - y - FN_b\\
	FP_b = y - TN_b \\
	TN_a = TN_b \frac{x-y}{TN_b - FN_b} + TN_b \\
	FN_a = FN_b\frac{x-y}{TN_b-FN_b} + FN_b
\end{cases} \lor
\begin{cases}
	TP_a = 1 - x - TN_a \\
	FP_a = x - TN_a\\
	TP_b = 1 - y - FN_b \\
	FP_b = y - FN_b \\
	TN_b = FN_b \land x=y \\
	FN_a = TN_a
\end{cases}
$$

\subsubsection{Added constraints through inequalities}
\[
\begin{cases}
	TN_b \ge FN_b \\
	y \ge x \\
	\frac{x-TN_b}{y-TN_b} \le \frac{TN_b}{FN_b} \\
	\frac{1-x-FN_b}{1-y-FN_b} \ge \frac{FN_b}{TN_b}
\end{cases} \lor
\begin{cases}
	TN_b \le FN_b \\
	y \le x \\
	\frac{x-TN_b}{y-TN_b} \ge \frac{TN_b}{FN_b} \\
	\frac{1-x-FN_b}{1-y-FN_b} \le \frac{FN_b}{TN_b}
\end{cases}
\]

\subsection{Predictive Equality and Treatment Equality}
$$
\begin{cases}
	TP_a + FN_a = 1 - x \\
	TN_a + FP_a = x \\
	TP_b + FN_b = 1 - y \\
	TN_b + FP_b = y \\
	FN_a * TN_b = FN_b*TN_a \\
	\frac{FN_a}{FP_a} = \frac{FN_b}{FP_b}
\end{cases} \Longleftrightarrow
\begin{cases}
	TP_a + FN_a = 1 - x \\
	TN_a + FP_a = x \\
	TP_b + FN_b = 1 - y \\
	TN_b + FP_b = y \\
	FN_a = \frac{FN_b*TN_a}{TN_b} \\
	FN_a = \frac{FN_b*FP_a}{FP_b}
\end{cases} \Longleftrightarrow
\begin{cases}
	TP_a + FN_a = 1 - x \\
	TN_a + FP_a = x \\
	TP_b + FN_b = 1 - y \\
	TN_b + FP_b = y \\
	\frac{FN_b*FP_a}{FP_b} = \frac{FN_b*TN_a}{TN_b} \\
	FN_a = \frac{FN_b*TN_a}{TN_b}
\end{cases} 
$$

$$ \Longleftrightarrow
\begin{cases}
	TP_a + FN_a = 1 - x \\
	TN_a + FP_a = x \\
	TP_b + FN_b = 1 - y \\
	TN_b + FP_b = y \\
	TN_a*FP_b = TN_b*FP_a \lor FN_b=0\\
	FN_a = \frac{FN_b*TN_a}{TN_b}
\end{cases} \Longleftrightarrow
\begin{cases}
	TP_a + FN_a = 1 - x \\
	TN_a + FP_a = x \\
	TP_b + FN_b = 1 - y \\
	TN_b + FP_b = y \\
	TN_a(y - TN_b) = TN_b(x - TN_a) \lor FN_b=0 \\
	FN_a = \frac{FN_b*TN_a}{TN_b}
\end{cases}$$
$$ \Longleftrightarrow
\begin{cases}
	TP_a + FN_a = 1 - x \\
	TN_a + FP_a = x \\
	TP_b + FN_b = 1 - y \\
	TN_b + FP_b = y \\
	TN_a = \frac{x}{y} TN_b \lor FN_b=0 \\
	FN_a = \frac{FN_b*TN_a}{TN_b}
\end{cases} \Longleftrightarrow
\begin{cases}
	TP_a = 1 - x - \frac{x}{y} FN_b \\
	FP_a = x - \frac{x}{y} TN_b \\
	TP_b = 1 - y - FN_b \\
	FP_b = y - TN_b \\
	TN_a = \frac{x}{y} TN_b\\
	FN_a = \frac{x}{y} FN_b
\end{cases} \lor 
\begin{cases}
	TP_a = 1 - x \\
	TN_a = x - FP_a \\
	TP_b = 1 - y \\
	TN_b = y - FP_b \\
	FN_b=0 \\
	FN_a = 0
\end{cases}
$$

\subsubsection{Added constraints through inequalities}
\[ \frac{y}{x}(1-x) \ge TN_b
\]

\subsection{Overall accuracy Equality and Treatment Equality}
$$
\begin{cases}
	TP_a + FN_a = 1 - x \\
	TN_a + FP_a = x \\
	TP_b + FN_b = 1 - y \\
	TN_b + FP_b = y \\
	FP_a + FN_a = FP_b + FN_b \\
	FN_a = \frac{FN_b*FP_a}{FP_b}
\end{cases} \Longleftrightarrow
\begin{cases}
	TP_a + FN_a = 1 - x \\
	TN_a + FP_a = x \\
	TP_b + FN_b = 1 - y \\
	TN_b + FP_b = y \\
	FP_a + \frac{FN_b*FP_a}{FP_b} = FP_b + FN_b \\
	FN_a = \frac{FN_b*FP_a}{FP_b}
\end{cases} \Longleftrightarrow
\begin{cases}
	TP_a + FN_a = 1 - x \\
	TN_a + FP_a = x \\
	TP_b + FN_b = 1 - y \\
	TN_b + FP_b = y \\
	FP_a \frac{FP_b + FN_b}{FP_b} = FP_b + FN_b \\
	FN_a = \frac{FN_b*FP_a}{FP_b}
\end{cases} $$
$$ \Longleftrightarrow
\begin{cases}
	TP_a + FN_a = 1 - x \\
	TN_a + FP_a = x \\
	TP_b + FN_b = 1 - y \\
	TN_b + FP_b = y \\
	FP_a = FP_b \lor FP_b = FN_b = 0 \\
	FN_a = \frac{FN_b*FP_a}{FP_b}
\end{cases} \Longleftrightarrow
\begin{cases}
	TP_a = 1 - x - FN_b \\
	TN_a = x - FP_b \\
	TP_b = 1 - y - FN_b \\
	TN_b = y - FP_b \\
	FP_a = FP_b \\
	FN_a = FN_b
\end{cases} \lor
\begin{cases}
	TP_a = 1 - x \\
	TN_a = x -FP_a \\
	TP_b = 1 - y \\
	TN_b = y \\
	FP_b = FN_b = 0 \\
	FN_a = 0
\end{cases}
$$

\subsubsection{Added constraints through inequalities}
\[
\begin{cases}
	x \ge FP_b \\
	1 - x \ge FN_b
\end{cases}
\]

\section{Combining three definitions}
\label{sec:app_3_def}

\subsection{Demographic Parity and Equalised Odds}
\label{subsec:dem_par_eq_odds}
$$ \begin{cases}
	TP_a + FN_a = 1 - x \\
	TN_a + FP_a = x \\
	TP_b + FN_b = 1 - y \\
	TN_b + FP_b = y \\
	TP_a = \frac{1-x}{1-y}TP_b \\
	FP_a = \frac{x-y}{1-y}TP_b + FP_b \\
	TP_a = \frac{y-x}{y}FP_b + TP_b \\
	FP_a = \frac{x}{y} FP_b
\end{cases} \Longleftrightarrow
\begin{cases}
	TP_a + FN_a = 1 - x \\
	TN_a + FP_a = x \\
	TP_b + FN_b = 1 - y \\
	TN_b + FP_b = y \\
	TP_a = \frac{1-x}{1-y}TP_b \\
	FP_a = \frac{x}{y}FP_b \\
	\frac{x}{y}FP_b = \frac{x-y}{1-y}TP_b + FP_b \\
	\frac{1-x}{1-y}TP_b = \frac{y-x}{y}FP_b + TP_b
\end{cases} \Longleftrightarrow
\begin{cases}
	TP_a + FN_a = 1 - x \\
	TN_a + FP_a = x \\
	TP_b + FN_b = 1 - y \\
	TN_b + FP_b = y \\
	TP_a = \frac{1-x}{1-y}TP_b \\
	FP_a = \frac{x}{y}FP_b \\
	\frac{x-y}{y}FP_b = \frac{x-y}{1-y}TP_b    \\
	\frac{1-x-1+y}{1-y}TP_b = \frac{y-x}{y}FP_b
\end{cases}
$$

$$ \Longleftrightarrow 
\begin{cases}
	TP_a + FN_a = 1 - x \\
	TN_a + FP_a = x \\
	TP_b + FN_b = 1 - y \\
	TN_b + FP_b = y \\
	TP_a = \frac{1-x}{1-y}TP_b \\
	FP_a = \frac{x}{y}FP_b \\
	FP_b = \frac{y}{1-y}TP_b \lor x=y \\
	\frac{y-x}{1-y}TP_b = \frac{y-x}{y}FP_b
\end{cases} \Longleftrightarrow 
\begin{cases}
	TP_a + FN_a = 1 - x \\
	TN_a + FP_a = x \\
	TP_b + FN_b = 1 - y \\
	TN_b + FP_b = y \\
	TP_a = \frac{1-x}{1-y}TP_b \\
	FP_a = \frac{x}{y}FP_b \\
	FP_b = \frac{y}{1-y}TP_b \\
	\frac{y-x}{1-y}TP_b = \frac{y-x}{y}\frac{y}{1-y}TP_b
\end{cases} \lor
\begin{cases}
	TP_a + FN_a = 1 - x \\
	TN_a + FP_a = x \\
	TP_b + FN_b = 1 - y \\
	TN_b + FP_b = y \\
	TP_a = TP_b \\
	FP_a = FP_b \\
	x=y \\
	0 = 0
\end{cases}
$$
$$ \Longleftrightarrow
\begin{cases}
	FN_a = 1 - x - \frac{1-x}{1-y}TP_b  \\
	TN_a = x - \frac{x}{1-y}TP_b \\
	FN_b = 1 - y - TP_b \\
	TN_b = y - \frac{y}{1-y}TP_b \\
	TP_a = \frac{1-x}{1-y}TP_b \\
	FP_a = \frac{x}{1-y}TP_b \\
	FP_b = \frac{y}{1-y}TP_b \\
	TP_b = TP_b
\end{cases} \lor
\begin{cases}
	FN_a = 1 - x - TP_b \\
	TN_a = x - FP_b \\
	FN_b = 1 - y - TP_b \\
	TN_b = y - FP_b  \\
	TP_a = TP_b \\
	FP_a = FP_b \\
	x=y \\
	0 = 0
\end{cases} 
$$

\subsection{Demographic Parity, Equal Opportunity and Predictive Parity}
$$ \begin{cases}
	TP_a + FN_a = 1 - x \\
	TN_a + FP_a = x \\
	TP_b + FN_b = 1 - y \\
	TN_b + FP_b = y \\
	TP_a = \frac{1-x}{1-y}TP_b \\
	FP_a = \frac{x-y}{1-y}TP_b + FP_b \\
	TP_a = TP_b \\
	FP_a = FP_b 
\end{cases} \Longleftrightarrow
\begin{cases}
	TP_a + FN_a = 1 - x \\
	TN_a + FP_a = x \\
	TP_b + FN_b = 1 - y \\
	TN_b + FP_b = y \\
	TP_a = \frac{1-x}{1-y}TP_b \\
	FP_a = \frac{x-y}{1-y}TP_b + FP_a \\
	TP_a = TP_b \\
	FP_a = FP_b 
\end{cases} \Longleftrightarrow
\begin{cases}
	TP_a + FN_a = 1 - x \\
	TN_a + FP_a = x \\
	TP_b + FN_b = 1 - y \\
	TN_b + FP_b = y \\
	TP_a = \frac{1-x}{1-y}TP_b \\
	\frac{x-y}{1-y}TP_b = 0\\
	TP_a = TP_b \\
	FP_a = FP_b 
\end{cases}
$$

$$ \Longleftrightarrow
\begin{cases}
	TP_a + FN_a = 1 - x \\
	TN_a + FP_a = x \\
	TP_b + FN_b = 1 - y \\
	TN_b + FP_b = y \\
	TP_a = \frac{1-x}{1-y}TP_b \\
	x-y = 0 \lor TP_b = 0\\
	TP_a = TP_b \\
	FP_a = FP_b 
\end{cases} \Longleftrightarrow
\begin{cases}
	FN_a = 1 - x - TP_b \\
	TN_a = x - FP_b  \\
	FN_b = 1 - y - TP_b \\
	TN_b = y - FP_b \\
	x = y \\
	TP_a = TP_b \\
	TP_a = TP_b \\
	FP_a = FP_b 
\end{cases} \lor
\begin{cases}
	FN_a = 1 - x \\
	TN_a = x - FP_b  \\
	FN_b = 1 - y \\
	TN_b = y - FP_b \\
	TP_b = 0 \\
	TP_a = 0 \\
	TP_a = 0 \\
	FP_a = FP_b 
\end{cases}
$$

\subsection{Demographic Parity, Equal Opportunity and Predictive Equality}

$$
\begin{cases}
	TP_a + FN_a = 1 - x \\
	TN_a + FP_a = x \\
	TP_b + FN_b = 1 - y \\
	TN_b + FP_b = y \\
	TP_a = \frac{1-x}{1-y} TP_b \\
	FP_a = \frac{x-y}{1-y}TP_b + FP_b \\
	TN_a = TN_b \\
	FN_a = FN_b 
\end{cases} \Longleftrightarrow
\begin{cases}
	TP_a + FN_a = 1 - x \\
	TN_a + FP_a = x \\
	TP_b + FN_b = 1 - y \\
	TN_b + FP_b = y \\
	TP_a = \frac{1-x}{1-y} TP_b \\
	FP_a = \frac{x-y}{1-y}TP_b + FP_b \\
	-FP_a = - x + y - FP_b \\
	-TP_a = -1 + x + 1 - y - TP_b
\end{cases}
 \Longleftrightarrow
\begin{cases}
	TP_a + FN_a = 1 - x \\
	TN_a + FP_a = x \\
	TP_b + FN_b = 1 - y \\
	TN_b + FP_b = y \\
	FP_a = x - y + FP_b \\
	TP_a = - x + y + TP_b \\
	- x + y + TP_b = \frac{1-x}{1-y} TP_b \\
	x - y + FP_b = \frac{x-y}{1-y}TP_b + FP_b 
\end{cases} $$

$$ \Longleftrightarrow
\begin{cases}
	TP_a + FN_a = 1 - x \\
	TN_a + FP_a = x \\
	TP_b + FN_b = 1 - y \\
	TN_b + FP_b = y \\
	FP_a = x - y + FP_b \\
	TP_a = - x + y + TP_b \\
	- x + y = \frac{y-x}{1-y} TP_b \\
	x - y = \frac{x-y}{1-y}TP_b
\end{cases} \Longleftrightarrow
\begin{cases}
	TP_a + FN_a = 1 - x \\
	TN_a + FP_a = x \\
	TP_b + FN_b = 1 - y \\
	TN_b + FP_b = y \\
	FP_a = x - y + FP_b \\
	TP_a = - x + y + TP_b \\
	1 = \frac{1}{1-y} TP_b \lor x=y \\
	1 = \frac{1}{1-y} TP_b \lor x=y
\end{cases}
 \Longleftrightarrow
\begin{cases}
	FN_a = 0 \\
	TN_a = y - FP_b \\
	FN_b = 0 \\
	TN_b = y - FP_b \\
	FP_a = x - y + FP_b \\
	TP_a = 1 - x  \\
	TP_b = 1 - y
\end{cases} \lor
\begin{cases}
	FN_a = 1 - x - TP_b \\
	TN_a = x - FP_b \\
	FN_b = 1 - y - TP_b \\
	TN_b = y - FP_b \\
	FP_a = FP_b \\
	TP_a = TP_b \\
	x=y
\end{cases}
$$

\subsection{Demographic Parity, Equal Opportunity and Overall accuracy Equality}
$$
\begin{cases}
	TP_a + FN_a = 1 - x \\
	TN_a + FP_a = x \\
	TP_b + FN_b = 1 - y \\
	TN_b + FP_b = y \\
	TP_a = \frac{1-x}{1-y} TP_b \\
	FP_a = \frac{x-y}{1-y}TP_b + FP_b \\
	FP_a = FP_b + \frac{x-y}{2} \\
	TP_a = TP_b + \frac{y-x}{2}
\end{cases} \Longleftrightarrow
\begin{cases}
	TP_a + FN_a = 1 - x \\
	TN_a + FP_a = x \\
	TP_b + FN_b = 1 - y \\
	TN_b + FP_b = y \\
	TP_a = \frac{1-x}{1-y} TP_b \\
	FP_a = \frac{x-y}{1-y}TP_b + FP_b \\
	\frac{x-y}{1-y}TP_b + FP_b = FP_b + \frac{x-y}{2} \\
	\frac{1-x}{1-y} TP_b = TP_b + \frac{y-x}{2}
\end{cases} \Longleftrightarrow
\begin{cases}
	TP_a + FN_a = 1 - x \\
	TN_a + FP_a = x \\
	TP_b + FN_b = 1 - y \\
	TN_b + FP_b = y \\
	TP_a = \frac{1-x}{1-y} TP_b \\
	FP_a = \frac{x-y}{1-y}TP_b + FP_b \\
	\frac{x-y}{1-y} TP_b = \frac{x-y}{2} \\
	\frac{y-x}{1-y} TP_b = \frac{y-x}{2}
\end{cases}
$$
$$ \Longleftrightarrow
\begin{cases}
	TP_a + FN_a = 1 - x \\
	TN_a + FP_a = x \\
	TP_b + FN_b = 1 - y \\
	TN_b + FP_b = y \\
	TP_a = \frac{1-x}{1-y} TP_b \\
	FP_a = \frac{x-y}{1-y}TP_b + FP_b \\
	\frac{x-y}{1-y} TP_b = \frac{x-y}{2} 
\end{cases} \Longleftrightarrow
\begin{cases}
	TP_a + FN_a = 1 - x \\
	TN_a + FP_a = x \\
	TP_b + FN_b = 1 - y \\
	TN_b + FP_b = y \\
	TP_a = \frac{1-x}{1-y} TP_b \\
	FP_a = \frac{x-y}{1-y}TP_b + FP_b \\
	TP_b = \frac{1-y}{2} \lor x=y
\end{cases}
 \Longleftrightarrow
\begin{cases}
	FN_a = \frac{1-x}{2} \\
	TN_a = x - \frac{x-y}{2} - FP_b  \\
	FN_b = \frac{1-y}{2} \\
	TN_b = y - FP_b \\
	TP_a = \frac{1-x}{2} \\
	FP_a = \frac{x-y}{2} + FP_b \\
	TP_b = \frac{1-y}{2}
\end{cases} \lor
\begin{cases}
	FN_a = 1 - x - TP_b \\
	TN_a = x - FP_b \\
	FN_b = 1 - y - TP_b  \\
	TN_b = y - FP_b \\
	TP_a = TP_b \\
	FP_a = FP_b \\
	x=y
\end{cases}
$$

\subsection{Demographic Parity, Equal Opportunity and Treatment Equality}
$$
\begin{cases}
	TP_a + FN_a = 1 - x \\
	TN_a + FP_a = x \\
	TP_b + FN_b = 1 - y \\
	TN_b + FP_b = y \\
	TP_a = \frac{1-x}{1-y} TP_b \\
	FP_a = \frac{x-y}{1-y}TP_b + FP_b \\
	FN_a = FN_b (\frac{x-y}{FP_b-FN_b} + 1) \\
	FP_a = FP_b (\frac{x-y}{FP_b-FN_b} + 1)
\end{cases} \Longleftrightarrow
\begin{cases}
	TP_a + FN_a = 1 - x \\
	TN_a + FP_a = x \\
	TP_b + FN_b = 1 - y \\
	TN_b + FP_b = y \\
	FN_a = \frac{1-x}{1-y} FN_b \\
	FP_a = \frac{x-y}{1-y} (1 - y - FN_b) + FP_b \\
	FN_a = FN_b (\frac{x-y}{FP_b-FN_b} + 1) \\
	FP_a = FP_b (\frac{x-y}{FP_b-FN_b} + 1)
\end{cases}
$$
$$ \Longleftrightarrow
\begin{cases}
	TP_a + FN_a = 1 - x \\
	TN_a + FP_a = x \\
	TP_b + FN_b = 1 - y \\
	TN_b + FP_b = y \\
	FN_a = \frac{1-x}{1-y} FN_b \\
	FP_a = x - y - \frac{x-y}{1-y} FN_b + FP_b \\
	\frac{1-x}{1-y} FN_b = FN_b (\frac{x-y}{FP_b-FN_b} + 1) \\
	x - y - \frac{x-y}{1-y} FN_b + FP_b = FP_b (\frac{x-y}{FP_b-FN_b} + 1)
\end{cases} \Longleftrightarrow
\begin{cases}
	TP_a + FN_a = 1 - x \\
	TN_a + FP_a = x \\
	TP_b + FN_b = 1 - y \\
	TN_b + FP_b = y \\
	FN_a = \frac{1-x}{1-y} FN_b \\
	FP_a = x - y - \frac{x-y}{1-y} FN_b + FP_b \\
	\frac{1-x}{1-y}= (\frac{x-y}{FP_b-FN_b} + 1) \lor FN_b = 0 \\
	x - y - \frac{x-y}{1-y} FN_b= FP_b (\frac{x-y}{FP_b-FN_b})
\end{cases}
$$
$$\Longleftrightarrow
\begin{cases}
	TP_a + FN_a = 1 - x \\
	TN_a + FP_a = x \\
	TP_b + FN_b = 1 - y \\
	TN_b + FP_b = y \\
	FN_a = \frac{1-x}{1-y} FN_b \\
	FP_a = x - y - \frac{x-y}{1-y} FN_b + FP_b \\
	-\frac{1}{1-y}= \frac{1}{FP_b-FN_b} \lor FN_b = 0 \lor x=y \\
	(x-y)(1-\frac{1}{1-y}FN_b) = FP_b\frac{x-y}{FP_b-FN_b}	
\end{cases} \Longleftrightarrow
\begin{cases}
	TP_a + FN_a = 1 - x \\
	TN_a + FP_a = x \\
	TP_b + FN_b = 1 - y \\
	TN_b + FP_b = y \\
	FN_a = \frac{1-x}{1-y} FN_b \\
	FP_a = x - y - \frac{x-y}{1-y} FN_b + FP_b \\
	-1+y= FP_b-FN_b \lor FN_b = 0 \lor x=y \\
	1 - \frac{1}{1-y}FN_b = \frac{FP_b}{FP_b-FN_b} \lor x=y
\end{cases}
$$
$$ \Longleftrightarrow
\begin{cases}
	TP_a + FN_a = 1 - x \\
	TN_a + FP_a = x \\
	TP_b + FN_b = 1 - y \\
	TN_b + FP_b = y \\
	FN_a = \frac{1-x}{1-y} FN_b \\
	FP_a = x - y - \frac{x-y}{1-y} FN_b + FP_b \\
	-1+y= FP_b-FN_b \lor FN_b = 0 \lor x=y \\
	FP_b - FN_b - \frac{1}{1-y}(FP_b-FN_b)FN_b = FP_b \lor x=y
\end{cases} \Longleftrightarrow
\begin{cases}
	TP_a + FN_a = 1 - x \\
	TN_a + FP_a = x \\
	TP_b + FN_b = 1 - y \\
	TN_b + FP_b = y \\
	FN_a = \frac{1-x}{1-y} FN_b \\
	FP_a = x - y - \frac{x-y}{1-y} FN_b + FP_b \\
	-1+y= FP_b-FN_b \lor FN_b = 0 \lor x=y \\
	-1 - \frac{1}{1-y} (FP_b - FN_b) = 0\lor x=y \lor FN_b = 0
\end{cases}
$$

$$ \Longleftrightarrow
\begin{cases}
	TP_a + FN_a = 1 - x \\
	TN_a + FP_a = x \\
	TP_b + FN_b = 1 - y \\
	TN_b + FP_b = y \\
	FN_a = \frac{1-x}{1-y} FN_b \\
	FP_a = x - y - \frac{x-y}{1-y} FN_b + FP_b \\
	-1+y= FP_b-FN_b \lor FN_b = 0 \lor x=y \\
	FP_b - FN_b = -1+y\lor x=y \lor FN_b = 0
\end{cases} \Longleftrightarrow
\begin{cases}
	TP_a + FN_a = 1 - x \\
	TN_a + FP_a = x \\
	TP_b + FN_b = 1 - y \\
	TN_b + FP_b = y \\
	FN_a = \frac{1-x}{1-y} FN_b \\
	FP_a = x - y - \frac{x-y}{1-y} FN_b + FP_b \\
	FP_b = -1 + y + FN_b \lor FN_b = 0 \lor x=y 
\end{cases}
$$ 

$$ \Longleftrightarrow
\begin{cases}
	TP_a = 1 - x - \frac{1-x}{1-y} FN_b  \\
	TN_a = y + \frac{x-y}{1-y} FN_b \\
	FN_b = 1 - y \\
	TN_b = y \\
	FN_a = \frac{1-x}{1-y} FN_b \\
	FP_a = x - y - \frac{x-y}{1-y} FN_b \\
	FP_b = - TP_b = 0
\end{cases} \lor
\begin{cases}
	TP_a = 1 - x \\
	TN_a = y + FP_b \\
	TP_b = 1 - y \\
	TN_b = y - FP_b \\
	FN_a = 0 \\
	FP_a = x - y - FP_b \\
	FN_b = 0 
\end{cases} \lor
\begin{cases}
	TP_a = 1 - x - FN_b \\
	TN_a = x - FP_b \\
	TP_b = 1 - y - FN_b \\
	TN_b = y - FP_b \\
	FN_a = FN_b \\
	FP_a = FP_b \\
	x=y 
\end{cases} 
$$

\subsection{Demographic Parity, False positive parity and Predictive Parity}
$$
\begin{cases}
	TP_a + FN_a = 1 - x \\
	TN_a + FP_a = x \\
	TP_b + FN_b = 1 - y \\
	TN_b + FP_b = y \\
	TP_a = \frac{y-x}{y}FP_b + TP_b \\
	FP_a = \frac{x}{y} FP_b \\
	TP_a = TP_b \\
	FP_a = FP_b
\end{cases} \Longleftrightarrow
\begin{cases}
	TP_a + FN_a = 1 - x \\
	TN_a + FP_a = x \\
	TP_b + FN_b = 1 - y \\
	TN_b + FP_b = y \\
	TP_b = \frac{y-x}{y}FP_b + TP_b \\
	FP_b = \frac{x}{y} FP_b \\
	TP_a = TP_b \\
	FP_a = FP_b
\end{cases} \Longleftrightarrow
\begin{cases}
	TP_a + FN_a = 1 - x \\
	TN_a + FP_a = x \\
	TP_b + FN_b = 1 - y \\
	TN_b + FP_b = y \\
	0 = \frac{y-x}{y}FP_b \\
	1 = \frac{x}{y} \lor FP_b = 0 \\
	TP_a = TP_b \\
	FP_a = FP_b
\end{cases}
$$
$$ \Longleftrightarrow
\begin{cases}
	TP_a + FN_a = 1 - x \\
	TN_a + FP_a = x \\
	TP_b + FN_b = 1 - y \\
	TN_b + FP_b = y \\
	x = y \lor FP_b = 0 \\
	x = y \lor FP_b = 0 \\
	TP_a = TP_b \\
	FP_a = FP_b
\end{cases} \Longleftrightarrow
\begin{cases}
	FN_a = 1 - x - TP_b \\
	TN_a = x - FP_b \\
	FN_b = 1 - y - TP_b \\
	TN_b = y - FP_b \\
	x = y \\
	TP_a = TP_b \\
	FP_a = FP_b
\end{cases} \lor
\begin{cases}
	FN_a = 1 - x - TP_b \\
	TN_a = x \\
	FN_b = 1 - y - TP_b \\
	TN_b = y \\
	FP_b = 0 \\
	TP_a = TP_b \\
	FP_a = 0
\end{cases}
$$

\subsection{Demographic Parity, False positive parity and Predictive Equality}
$$
\begin{cases}
	TP_a + FN_a = 1 - x \\
	TN_a + FP_a = x \\
	TP_b + FN_b = 1 - y \\
	TN_b + FP_b = y \\
	TP_a = \frac{y-x}{y}FP_b + TP_b \\
	TN_a = \frac{x}{y} TN_b \\
	TN_a = TN_b \\ 
	FN_a = FN_b
\end{cases} \Longleftrightarrow
\begin{cases}
	TP_a + FN_a = 1 - x \\
	TN_a + FP_a = x \\
	TP_b + FN_b = 1 - y \\
	TN_b + FP_b = y \\
	TP_a = \frac{y-x}{y}FP_b + TP_b \\
	TN_b = \frac{x}{y} TN_b \\
	TN_a = TN_b \\ 
	-TP_a = -1 + x + 1 - y - TP_b
\end{cases}
 \Longleftrightarrow
\begin{cases}
	TP_a + FN_a = 1 - x \\
	TN_a + FP_a = x \\
	TP_b + FN_b = 1 - y \\
	TN_b + FP_b = y \\
	TP_a = \frac{y-x}{y}FP_b + TP_b \\
	1 = \frac{x}{y} \lor TN_b = 0 \\
	TN_a = TN_b \\ 
	TP_a = - x + y + TP_b
\end{cases}
$$
$$ \Longleftrightarrow
\begin{cases}
	TP_a + FN_a = 1 - x \\
	TN_a + FP_a = x \\
	TP_b + FN_b = 1 - y \\
	TN_b + FP_b = y \\
	- x + y + TP_b = \frac{y-x}{y}FP_b + TP_b \\
	x = y \lor TN_b = 0 \\
	TN_a = TN_b \\ 
	TP_a = - x + y + TP_b
\end{cases}
\Longleftrightarrow
\begin{cases}
	TP_a + FN_a = 1 - x \\
	TN_a + FP_a = x \\
	TP_b + FN_b = 1 - y \\
	TN_b + FP_b = y \\
	- x + y  = \frac{y-x}{y}FP_b\\
	x = y \lor TN_b = 0 \\
	TN_a = TN_b \\ 
	TP_a = - x + y + TP_b
\end{cases} \Longleftrightarrow
\begin{cases}
	TP_a + FN_a = 1 - x \\
	TN_a + FP_a = x \\
	TP_b + FN_b = 1 - y \\
	TN_b + FP_b = y \\
	1 = \frac{1}{y}FP_b \lor x=y\\
	x = y \lor TN_b = 0 \\
	TN_a = TN_b \\ 
	TP_a = - x + y + TP_b
\end{cases}
$$
$$  \Longleftrightarrow
\begin{cases}
	TP_a + FN_a = 1 - x \\
	TN_a + FP_a = x \\
	TP_b + FN_b = 1 - y \\
	TN_b + FP_b = y \\
	y = FP_b \lor x=y\\
	x = y \lor TN_b = 0 \\
	TN_a = TN_b \\ 
	TP_a = - x + y + TP_b
\end{cases}
\Longleftrightarrow
\begin{cases}
	FN_a = 1 - x - TP_b \\
	FP_a = x - TN_b \\
	FN_b = 1 - y - TP_b \\
	FP_b = y - TN_b \\
	x = y \\
	x = y\\
	TN_a = TN_b \\ 
	TP_a = TP_b
\end{cases} \lor
\begin{cases}
	FN_a = 1 - y - TP_b \\
	FP_a = x \\
	FN_b = 1 - y - TP_b \\
	0 = 0 \\
	FP_b = y  \\
	TN_b = 0 \\
	TN_a = 0 \\ 
	TP_a = - x + y + TP_b
\end{cases}
$$

\subsection{Demographic Parity, False positive parity and Overall accuracy Equality}
$$
\begin{cases}
	TP_a + FN_a = 1 - x \\
	TN_a + FP_a = x \\
	TP_b + FN_b = 1 - y \\
	TN_b + FP_b = y \\
	TP_a = \frac{y-x}{y}FP_b + TP_b \\
	FP_a = \frac{x}{y} FP_b \\
	TP_a = TP_b + \frac{y-x}{2} \\
	FP_a = FP_b + \frac{x-y}{2}
\end{cases} \Longleftrightarrow
\begin{cases}
	TP_a + FN_a = 1 - x \\
	TN_a + FP_a = x \\
	TP_b + FN_b = 1 - y \\
	TN_b + FP_b = y \\
	TP_a = \frac{y-x}{y}FP_b + TP_b \\
	FP_a = \frac{x}{y} FP_b \\
	\frac{y-x}{y}FP_b + TP_b = TP_b + \frac{y-x}{2} \\
	\frac{x}{y} FP_b = FP_b + \frac{x-y}{2}
\end{cases} \Longleftrightarrow
\begin{cases}
	TP_a + FN_a = 1 - x \\
	TN_a + FP_a = x \\
	TP_b + FN_b = 1 - y \\
	TN_b + FP_b = y \\
	TP_a = \frac{y-x}{y}FP_b + TP_b \\
	FP_a = \frac{x}{y} FP_b \\
	\frac{y-x}{y}FP_b = \frac{y-x}{2} \\
	\frac{x-y}{y} FP_b = \frac{x-y}{2}
\end{cases}
$$

$$ \Longleftrightarrow
\begin{cases}
	TP_a + FN_a = 1 - x \\
	TN_a + FP_a = x \\
	TP_b + FN_b = 1 - y \\
	TN_b + FP_b = y \\
	TP_a = \frac{y-x}{y}FP_b + TP_b \\
	FP_a = \frac{x}{y} FP_b \\
	\frac{1}{y}FP_b = \frac{1}{2} \lor x=y \\
\end{cases} \Longleftrightarrow
\begin{cases}
	TP_a + FN_a = 1 - x \\
	TN_a + FP_a = x \\
	TP_b + FN_b = 1 - y \\
	TN_b + FP_b = y \\
	TP_a = \frac{y-x}{y}FP_b + TP_b \\
	FP_a = \frac{x}{y} FP_b \\
	FP_b = \frac{y}{2} \lor x=y 
\end{cases}
$$
$$ \Longleftrightarrow
\begin{cases}
	FN_a = 1 - x - \frac{y-x}{2} - TP_b \\
	TN_a = \frac{x}{2} \\
	FN_b = 1 - y - TP_b \\
	TN_b = \frac{y}{2}  \\
	TP_a = \frac{y-x}{2}+ TP_b \\
	FP_a = \frac{x}{2}\\
	FP_b = \frac{y}{2} 
\end{cases} \lor 
\begin{cases}
	FN_a = 1 - x - TP_b \\
	TN_a = x - FP_b \\
	FN_b = 1 - y - TP_b \\
	TN_b = y - FP_b \\
	TP_a = TP_b \\
	FP_a = FP_b \\
	x=y
\end{cases}
$$

\subsection{Demographic Parity, False positive parity and Treatment Equality}

$$ \begin{cases}
	TP_a + FN_a = 1 - x \\
	TN_a + FP_a = x \\
	TP_b + FN_b = 1 - y \\
	TN_b + FP_b = y \\
	TP_a = \frac{y-x}{y}FP_b + TP_b \\
	FP_a = \frac{x}{y}FP_b \\
	FN_a = FN_b*(\frac{x-y}{FP_b-FN_b} + 1) \\
	FP_a = FP_b*(\frac{x-y}{FP_b-FN_b} + 1)
\end{cases} \Longleftrightarrow
\begin{cases}
	TP_a + FN_a = 1 - x \\
	TN_a + FP_a = x \\
	TP_b + FN_b = 1 - y \\
	TN_b + FP_b = y \\
	-FN_a = -1 +x + 1 - y - FN_b + \frac{y-x}{y}FP_b \\
	FP_a = \frac{x}{y}FP_b \\
	\frac{x}{y}FP_b = FP_b*(\frac{x-y}{FP_b-FN_b} + 1) \\
	FN_a = FN_b*(\frac{x-y}{FP_b-FN_b} + 1) 
\end{cases}
$$
$$\Longleftrightarrow
\begin{cases}
	TP_a + FN_a = 1 - x \\
	TN_a + FP_a = x \\
	TP_b + FN_b = 1 - y \\
	TN_b + FP_b = y \\
	FN_a = y - x + FN_b - \frac{y-x}{y}FP_b \\
	FP_a = \frac{x}{y}FP_b \\
	\frac{x}{y} = (\frac{x-y}{FP_b-FN_b} + 1) \lor FP_b = 0 \\
	y - x + FN_b - \frac{y-x}{y}FP_b = FN_b*(\frac{x-y}{FP_b-FN_b} + 1) 
\end{cases} \Longleftrightarrow
\begin{cases}
	TP_a + FN_a = 1 - x \\
	TN_a + FP_a = x \\
	TP_b + FN_b = 1 - y \\
	TN_b + FP_b = y \\
	FN_a = y - x + FN_b - \frac{y-x}{y}FP_b \\
	FP_a = \frac{x}{y}FP_b \\
	\frac{x-y}{y} = \frac{x-y}{FP_b-FN_b} \lor FP_b = 0 \\
	y - x - \frac{y-x}{y}FP_b = FN_b*(\frac{x-y}{FP_b-FN_b}) 
\end{cases}
$$
$$ \Longleftrightarrow
\begin{cases}
	TP_a + FN_a = 1 - x \\
	TN_a + FP_a = x \\
	TP_b + FN_b = 1 - y \\
	TN_b + FP_b = y \\
	FN_a = y - x + FN_b - \frac{y-x}{y}FP_b \\
	FP_a = \frac{x}{y}FP_b \\
	FP_b - FN_b = y \lor FP_b = 0 \lor x-y = 0\\
	1 - \frac{1}{y}FP_b = - FN_b * \frac{1}{FP_b - FN_b} \lor x=y 
\end{cases} \Longleftrightarrow
\begin{cases}
	TP_a + FN_a = 1 - x \\
	TN_a + FP_a = x \\
	TP_b + FN_b = 1 - y \\
	TN_b + FP_b = y \\
	FN_a = y - x + FN_b - \frac{y-x}{y}FP_b \\
	FP_a = \frac{x}{y}FP_b \\
	FP_b - FN_b = y \lor FP_b = 0 \lor x-y = 0\\
	y - FP_b = - yFN_b * \frac{1}{FP_b - FN_b} \lor x=y 
\end{cases}
$$
$$\Longleftrightarrow
\begin{cases}
	TP_a + FN_a = 1 - x \\
	TN_a + FP_a = x \\
	TP_b + FN_b = 1 - y \\
	TN_b + FP_b = y \\
	FN_a = y - x + FN_b - \frac{y-x}{y}FP_b \\
	FP_a = \frac{x}{y}FP_b \\
	FP_b - FN_b = y \lor FP_b = 0 \lor x-y = 0\\
	yFP_b - yFN_b - FP_b^2 + FN_bFP_b = - yFN_b \lor x=y 
\end{cases} \Longleftrightarrow
\begin{cases}
	TP_a + FN_a = 1 - x \\
	TN_a + FP_a = x \\
	TP_b + FN_b = 1 - y \\
	TN_b + FP_b = y \\
	FN_a = y - x + FN_b - \frac{y-x}{y}FP_b \\
	FP_a = \frac{x}{y}FP_b \\
	FP_b - FN_b = y \lor FP_b = 0 \lor x-y = 0\\
	yFP_b - FP_b^2 + FN_bFP_b = 0 \lor x=y 
\end{cases}$$
$$ \Longleftrightarrow
\begin{cases}
	TP_a + FN_a = 1 - x \\
	TN_a + FP_a = x \\
	TP_b + FN_b = 1 - y \\
	TN_b + FP_b = y \\
	FN_a = y - x + FN_b - \frac{y-x}{y}FP_b \\
	FP_a = \frac{x}{y}FP_b \\
	FP_b - FN_b = y \lor FP_b = 0 \lor x = y\\
	y- FP_b + FN_b= 0 \lor x=y \lor FP_b = 0
\end{cases} \Longleftrightarrow
\begin{cases}
	TP_a + FN_a = 1 - x \\
	TN_a + FP_a = x \\
	TP_b + FN_b = 1 - y \\
	TN_b + FP_b = y \\
	FN_a = y - x + FN_b - \frac{y-x}{y}FP_b \\
	FP_a = \frac{x}{y}FP_b \\
	FP_b - FN_b = y \lor FP_b = 0 \lor x = y\\
	FP_b - FN_b = y \lor x=y \lor FP_b = 0
\end{cases}$$
$$ \Longleftrightarrow
\begin{cases}
	TP_a + FN_a = 1 - x \\
	TN_a + FP_a = x \\
	TP_b + FN_b = 1 - y \\
	TN_b + FP_b = y \\
	FN_a = y - x + FN_b - \frac{y-x}{y}FP_b \\
	FP_a = \frac{x}{y}FP_b \\
	FN_b = FP_b - y \lor FP_b = 0 \lor x = y
\end{cases} \Longleftrightarrow
\begin{cases}
	TP_a = 1 - x - FN_a \\
	TN_a = x - FP_a \\
	TP_b = 1 - y - FN_b \\
	TN_b = y - FP_b \\
	FN_b = FP_b - y \\ 
	FN_a = - x (1 - \frac{1}{y}FP_b) \\
	FP_a = \frac{x}{y}FP_b \\
\end{cases} \lor 
\begin{cases}
	TP_a + FN_a = 1 - x \\
	TN_a + FP_a = x \\
	TP_b + FN_b = 1 - y \\
	TN_b + FP_b = y \\
	FN_a = y - x + FN_b - \frac{y-x}{y}FP_b \\
	FP_a = \frac{x}{y}FP_b \\
	FP_a = 0 \lor FP_a = FP_b\\
\end{cases}
$$
$$\Longleftrightarrow
\begin{cases}
	TP_a = 1 - x \\
	TN_a = 0 \\
	TP_b = 1 - y \\
	TN_b = 0\\
	FN_b = 0 \\ 
	FN_a = 0 , FP_b = y \\
	FP_a = x \\
\end{cases} \lor 
\begin{cases}
	TP_a = 1 - y - FN_b \\
	TN_a = x \\
	TP_b = 1 - y - FN_b \\
	TN_b = y \\
	FP_b = 0 \\ 
	FN_a = y - x + FN_b \\
	FP_a = 0 \\
\end{cases} \lor
\begin{cases}
	TP_a = 1 - x - FN_b \\
	TN_a = x - FP_b \\
	TP_b = 1 - y - FN_b \\
	TN_b = y - FP_b \\
	x = y \\
	FN_a = FN_b\\
	FP_a = FP_b \\
\end{cases}
$$

\subsection{Demographic Parity, Predictive Parity and Predictive Equality}

$$
\begin{cases}
	TP_a + FN_a = 1 - x \\
	TN_a + FP_a = x \\
	TP_b + FN_b = 1 - y \\
	TN_b + FP_b = y \\
	TP_a = TP_b \\
	FP_a = FP_b \\
	TN_a = TN_b \\
	FN_a = FN_b
\end{cases} \Longleftrightarrow
\begin{cases}
	TP_a + FN_a = 1 - x \\
	TN_a + FP_a = x \\
	TP_b + FN_b = 1 - y \\
	TN_b + FP_b = y \\
	TP_a = TP_b \\
	FP_a = FP_b \\
	x - FP_a = y - FP_b \\
	1 - x - TP_a = 1 - y - TP_b
\end{cases} \Longleftrightarrow
\begin{cases}
	TP_a + FN_a = 1 - x \\
	TN_a + FP_a = x \\
	TP_b + FN_b = 1 - y \\
	TN_b + FP_b = y \\
	TP_a = TP_b \\
	FP_a = FP_b \\
	x - FP_a = y - FP_a \\
	- x - TP_a = - y - TP_a
\end{cases}
\Longleftrightarrow
\begin{cases}
	FN_a = 1 - x - TP_b \\
	TN_a = x - FP_b \\
	FN_b = 1 - y - TP_b \\
	TN_b = y - FP_b \\
	TP_a = TP_b \\
	FP_a = FP_b \\
	x = y \\
	x = y
\end{cases}
$$

\subsection{Demographic Parity, Predictive Parity and Overall Accuracy equality}
$$
\begin{cases}
	TP_a + FN_a = 1 - x \\
	TN_a + FP_a = x \\
	TP_b + FN_b = 1 - y \\
	TN_b + FP_b = y \\
	TP_a = TP_b \\
	FP_a = FP_b \\
	FP_a = FP_b + \frac{x-y}{2} \\
	TP_a = TP_b + \frac{y-x}{2}
\end{cases} \Longleftrightarrow
\begin{cases}
	TP_a + FN_a = 1 - x \\
	TN_a + FP_a = x \\
	TP_b + FN_b = 1 - y \\
	TN_b + FP_b = y \\
	TP_a = TP_b \\
	FP_a = FP_b \\
	FP_a = FP_a + \frac{x-y}{2} \\
	TP_a = TP_a + \frac{y-x}{2}
\end{cases} \Longleftrightarrow
\begin{cases}
	TP_a + FN_a = 1 - x \\
	TN_a + FP_a = x \\
	TP_b + FN_b = 1 - y \\
	TN_b + FP_b = y \\
	TP_a = TP_b \\
	FP_a = FP_b \\
	0 = \frac{x-y}{2} \\
	0 = \frac{y-x}{2}
\end{cases}
\Longleftrightarrow
\begin{cases}
	FN_a = 1 - x - TP_b \\
	TN_a = x - FP_b \\
	FN_b = 1 - y - TP_b \\
	TN_b = y - FP_b \\
	TP_a = TP_b \\
	FP_a = FP_b \\
	x = y \\
	x = y
\end{cases}
$$

\subsection{Demographic Parity, Predictive Parity and Treatment Equality}
$$
\begin{cases}
	TP_a + FN_a = 1 - x \\
	TN_a + FP_a = x \\
	TP_b + FN_b = 1 - y \\
	TN_b + FP_b = y \\
	TP_a = TP_b \\
	FP_a = FP_b \\
	FN_a = FN_b*(\frac{x-y}{FP_b-FN_b} + 1) \\
	FP_a = FP_b*(\frac{x-y}{FP_b-FN_b} + 1)
\end{cases} \Longleftrightarrow
\begin{cases}
	TP_a + FN_a = 1 - x \\
	TN_a + FP_a = x \\
	TP_b + FN_b = 1 - y \\
	TN_b + FP_b = y \\
	1-x-FN_a = 1-y-FN_b \\
	FP_a = FP_b \\
	FN_a = FN_b*(\frac{x-y}{FP_b-FN_b} + 1) \\
	FP_b = FP_b*(\frac{x-y}{FP_b-FN_b} + 1)
\end{cases}
\Longleftrightarrow
\begin{cases}
	TP_a + FN_a = 1 - x \\
	TN_a + FP_a = x \\
	TP_b + FN_b = 1 - y \\
	TN_b + FP_b = y \\
	FN_a = y-x+FN_b \\
	FP_a = FP_b \\
	y-x+FN_b = FN_b*(\frac{x-y}{FP_b-FN_b} + 1) \\
	1 = \frac{x-y}{FP_b-FN_b}+1 \lor FP_b = 0
\end{cases} 
$$
$$\Longleftrightarrow
\begin{cases}
	TP_a + FN_a = 1 - x \\
	TN_a + FP_a = x \\
	TP_b + FN_b = 1 - y \\
	TN_b + FP_b = y \\
	FN_a = y-x+FN_b \\
	FP_a = FP_b \\
	y-x = FN_b*(\frac{x-y}{FP_b-FN_b}) \\
	x=y \lor FP_b = 0
\end{cases}
\Longleftrightarrow
\begin{cases}
	TP_a = 1 - x - FN_b \\
	TN_a = x - FP_b\\
	TP_b = 1 - y - FN_b \\
	TN_b = y - FP_b \\
	FN_a = FN_b \\
	FP_a = FP_b \\
	0 = 0 \\
	x=y
\end{cases} \lor
\begin{cases}
	TP_a = 1 - y - FN_b \\
	TN_a = x\\
	TP_b = 1 - y - FN_b \\
	TN_b = y \\
	FN_a = y-x+FN_b \\
	FP_a = FP_b = 0 \\
	y-x = -(x-y) \\
	FP_b = 0
\end{cases}
$$

\subsection{Demographic Parity, Predictive Equality and Overall accuracy Equality}
$$
\begin{cases}
	TP_a + FN_a = 1 - x \\
	TN_a + FP_a = x \\
	TP_b + FN_b = 1 - y \\
	TN_b + FP_b = y \\
	TN_a = TN_b \\
	FN_a = FN_b \\
	FP_a = FP_b + \frac{x-y}{2} \\
	TP_a = TP_b + \frac{y-x}{2}
\end{cases} \Longleftrightarrow 
\begin{cases}
	TP_a + FN_a = 1 - x \\
	TN_a + FP_a = x \\
	TP_b + FN_b = 1 - y \\
	TN_b + FP_b = y \\
	x - FP_a = y - FP_b \\
	1 - x - TP_a = 1 - y - TP_b \\
	FP_a = FP_b + \frac{x-y}{2} \\
	TP_a = TP_b + \frac{y-x}{2}
\end{cases} \Longleftrightarrow 
\begin{cases}
	TP_a + FN_a = 1 - x \\
	TN_a + FP_a = x \\
	TP_b + FN_b = 1 - y \\
	TN_b + FP_b = y \\
	FP_a = x - y + FP_b \\
	TP_a = y -x + TP_b \\
	x - y + FP_b = FP_b + \frac{x-y}{2} \\
	y -x + TP_b = TP_b + \frac{y-x}{2}
\end{cases}
$$
$$ \Longleftrightarrow 
\begin{cases}
	TP_a + FN_a = 1 - x \\
	TN_a + FP_a = x \\
	TP_b + FN_b = 1 - y \\
	TN_b + FP_b = y \\
	FP_a = x - y + FP_b \\
	TP_a = y -x + TP_b \\
	x - y = \frac{x-y}{2} \\
	y -x = \frac{y-x}{2}
\end{cases} \Longleftrightarrow 
\begin{cases}
	TP_a + FN_a = 1 - x \\
	TN_a + FP_a = x \\
	TP_b + FN_b = 1 - y \\
	TN_b + FP_b = y \\
	FP_a = x - y + FP_b \\
	TP_a = y -x + TP_b \\
	x = y \\
	x = y
\end{cases} \Longleftrightarrow 
\begin{cases}
	FN_a = 1 - x - TP_b \\
	TN_a = x - FP_b \\
	FN_b = 1 - y - TP_b \\
	TN_b = y - FP_b \\
	FP_a = FP_b \\
	TP_a = TP_b \\
	x = y 
\end{cases}
$$

\subsection{Demographic Parity, Predictive Equality and Treatment Equality}
$$
\begin{cases}
	TP_a + FN_a = 1 - x \\
	TN_a + FP_a = x \\
	TP_b + FN_b = 1 - y \\
	TN_b + FP_b = y \\
	FP_a = x - y + FP_b \\
	FN_a = FN_b \\
	FN_a = FN_b (\frac{x-y}{FP_b - FN_b} + 1) \\
	FP_a = FP_b (\frac{x-y}{FP_b - FN_b} + 1)
\end{cases} \Longleftrightarrow
\begin{cases}
	TP_a + FN_a = 1 - x \\
	TN_a + FP_a = x \\
	TP_b + FN_b = 1 - y \\
	TN_b + FP_b = y \\
	FP_a = x - y + FP_b \\
	FN_a = FN_b \\
	FN_b = FN_b (\frac{x-y}{FP_b - FN_b} + 1) \\
	x - y + FP_b = FP_b (\frac{x-y}{FP_b - FN_b} + 1)
\end{cases}
\Longleftrightarrow
\begin{cases}
	TP_a + FN_a = 1 - x \\
	TN_a + FP_a = x \\
	TP_b + FN_b = 1 - y \\
	TN_b + FP_b = y \\
	FP_a = x - y + FP_b \\
	FN_a = FN_b \\
	1 = \frac{x-y}{FP_b - FN_b} + 1 \lor FN_b = 0 \\
	x - y = FP_b \frac{x-y}{FP_b - FN_b}
\end{cases} 
$$
$$ \Longleftrightarrow
\begin{cases}
	TP_a + FN_a = 1 - x \\
	TN_a + FP_a = x \\
	TP_b + FN_b = 1 - y \\
	TN_b + FP_b = y \\
	FP_a = x - y + FP_b \\
	FN_a = FN_b \\
	0 = \frac{x-y}{FP_b - FN_b} \lor FN_b = 0 \\
	1 = \frac{FP_b}{FP_b - FN_b} \lor x = y
\end{cases} \Longleftrightarrow
\begin{cases}
	TP_a + FN_a = 1 - x \\
	TN_a + FP_a = x \\
	TP_b + FN_b = 1 - y \\
	TN_b + FP_b = y \\
	FP_a = x - y + FP_b \\
	FN_a = FN_b \\
	x = y \lor FN_b = 0 \\
	FP_b - FN_b = FP_b \lor x = y
\end{cases} \Longleftrightarrow
\begin{cases}
	TP_a + FN_a = 1 - x \\
	TN_a + FP_a = x \\
	TP_b + FN_b = 1 - y \\
	TN_b + FP_b = y \\
	FP_a = x - y + FP_b \\
	FN_a = FN_b \\
	x = y \lor FN_b = 0 \\
	FN_b = 0 \lor x = y
\end{cases}
$$
$$\Longleftrightarrow
\begin{cases}
	TP_a = 1 - x - FN_b \\
	TN_a = x - FP_b \\
	TP_b = 1 - y - FN_b \\
	TN_b = y - FP_b \\
	FP_a = FP_b \\
	FN_a = FN_b \\
	x = y 
\end{cases} \lor
\begin{cases}
	TP_a = 1 - x \\
	TN_a = y - FP_b \\
	TP_b = 1 - y \\
	TN_b = y - FP_b \\
	FP_a = x - y + FP_b \\
	FN_a = 0 \\
	FN_b = 0 
\end{cases} 
$$

\subsection{Demographic Parity, Overall accuracy Equality and Treatment Equality}
$$
\begin{cases}
	TP_a + FN_a = 1 - x \\
	TN_a + FP_a = x \\
	TP_b + FN_b = 1 - y \\
	TN_b + FP_b = y \\
	FP_a = FP_b + \frac{x-y}{2} \\	
	TP_a = TP_b + \frac{y-x}{2} \\
	FN_a = FN_b (\frac{x-y}{FP_b - FN_b} + 1) \\
	FP_a = FP_b (\frac{x-y}{FP_b - FN_b} + 1)
\end{cases} \Longleftrightarrow
\begin{cases}
	TP_a + FN_a = 1 - x \\
	TN_a + FP_a = x \\
	TP_b + FN_b = 1 - y \\
	TN_b + FP_b = y \\
	FP_a = FP_b + \frac{x-y}{2} \\	
	FN_a = y - x + FN_b - \frac{y-x}{2} \\
	y - x + FN_b - \frac{y-x}{2} = FN_b (\frac{x-y}{FP_b - FN_b} + 1) \\
	FP_b + \frac{x-y}{2} = FP_b (\frac{x-y}{FP_b - FN_b} + 1)
\end{cases}
$$
$$\Longleftrightarrow
\begin{cases}
	TP_a + FN_a = 1 - x \\
	TN_a + FP_a = x \\
	TP_b + FN_b = 1 - y \\
	TN_b + FP_b = y \\
	FP_a = FP_b + \frac{x-y}{2} \\	
	FN_a = y - x + FN_b - \frac{y-x}{2} \\
	\frac{y-x}{2} = FN_b \frac{x-y}{FP_b - FN_b} \\
	\frac{x-y}{2} = FP_b \frac{x-y}{FP_b - FN_b}
\end{cases} \Longleftrightarrow
\begin{cases}
	TP_a + FN_a = 1 - x \\
	TN_a + FP_a = x \\
	TP_b + FN_b = 1 - y \\
	TN_b + FP_b = y \\
	FP_a = FP_b + \frac{x-y}{2} \\	
	FN_a = y - x + FN_b - \frac{y-x}{2} \\
	- \frac{1}{2} = FN_b \frac{1}{FP_b - FN_b} \lor x=y \\
	\frac{1}{2} = FP_b \frac{1}{FP_b - FN_b} \lor x=y
\end{cases}
$$
$$ \Longleftrightarrow
\begin{cases}
	TP_a + FN_a = 1 - x \\
	TN_a + FP_a = x \\
	TP_b + FN_b = 1 - y \\
	TN_b + FP_b = y \\
	FP_a = FP_b + \frac{x-y}{2} \\	
	FN_a = y - x + FN_b - \frac{y-x}{2} \\
	\frac{-FP_b + FN_b}{2} = FN_b \lor x=y \\
	\frac{FP_b - FN_b}{2} = FP_b \lor x=y
\end{cases}
\Longleftrightarrow
\begin{cases}
	TP_a + FN_a = 1 - x \\
	TN_a + FP_a = x \\
	TP_b + FN_b = 1 - y \\
	TN_b + FP_b = y \\
	FP_a = FP_b + \frac{x-y}{2} \\	
	FN_a = FN_b + \frac{y-x}{2} \\
	\frac{-FP_b}{2} = \frac{FN_b}{2} \lor x=y \\
	\frac{- FN_b}{2} = \frac{FP_b}{2} \lor x=y
\end{cases}
\Longleftrightarrow
\begin{cases}
	TP_a + FN_a = 1 - x \\
	TN_a + FP_a = x \\
	TP_b + FN_b = 1 - y \\
	TN_b + FP_b = y \\
	FP_a = FP_b + \frac{x-y}{2} \\	
	FN_a = FN_b + \frac{y-x}{2} \\
	-FP_b = FN_b = 0 \lor x=y \\
	- FN_b = FP_b = 0 \lor x=y
\end{cases}
$$
$$\Longleftrightarrow
\begin{cases}
	TP_a = 1 - x - \frac{y-x}{2} \\
	TN_a = x - \frac{x-y}{2} \\
	TP_b = 1 - y \\
	TN_b = y \\
	FP_a = \frac{x-y}{2} \\	
	FN_a = \frac{y-x}{2} \\
	-FP_b = FN_b = 0 
\end{cases} \lor
\begin{cases}
	TP_a = 1 - x - FN_a \\
	TN_a = x - FP_a \\
	TP_b = 1 - y - FN_b \\
	TN_b = y - FP_b \\
	FP_a = FP_b \\	
	FN_a = FN_b \\
	x=y 
\end{cases}
$$

\subsection{Equalised odds and Predictive Parity/Treatment Equality}
$$
\begin{cases}
	TP_a + FN_a = 1 - x \\
	TN_a + FP_a = x \\
	TP_b + FN_b = 1 - y \\
	TN_b + FP_b = y \\
	TP_a = \frac{1-x}{1-y} TP_b \\
	FP_a = \frac{x}{y} FP_b \\
	TP_a = \frac{1-x}{1-y} TP_b \\
	FP_a = \frac{1-x}{1-y} FP_b
\end{cases} \Longleftrightarrow
\begin{cases}
	TP_a + FN_a = 1 - x \\
	TN_a + FP_a = x \\
	TP_b + FN_b = 1 - y \\
	TN_b + FP_b = y \\
	TP_a = \frac{1-x}{1-y} TP_b \\
	\frac{1-x}{1-y} FP_b = \frac{x}{y} FP_b \\
	FP_a = \frac{1-x}{1-y} FP_b
\end{cases}
 \Longleftrightarrow
\begin{cases}
	TP_a + FN_a = 1 - x \\
	TN_a + FP_a = x \\
	TP_b + FN_b = 1 - y \\
	TN_b + FP_b = y \\
	TP_a = \frac{1-x}{1-y} TP_b \\
	\frac{1-x}{1-y} = \frac{x}{y} \lor FP_b = 0\\
	FP_a = \frac{1-x}{1-y} FP_b
\end{cases}$$
$$ \Longleftrightarrow
\begin{cases}
	TP_a + FN_a = 1 - x \\
	TN_a + FP_a = x \\
	TP_b + FN_b = 1 - y \\
	TN_b + FP_b = y \\
	TP_a = \frac{1-x}{1-y} TP_b \\
	y - xy = x - xy \lor FP_b = 0\\
	FP_a = \frac{1-x}{1-y} FP_b
\end{cases}
\Longleftrightarrow
\begin{cases}
	TP_a + FN_a = 1 - x \\
	TN_a + FP_a = x \\
	TP_b + FN_b = 1 - y \\
	TN_b + FP_b = y \\
	TP_a = \frac{1-x}{1-y} TP_b \\
	y = x \lor FP_b = 0\\
	FP_a = \frac{1-x}{1-y} FP_b
\end{cases}
 \Longleftrightarrow
\begin{cases}
	FN_a = 1 - x - TP_b \\
	TN_a = x - FP_b \\
	FN_b = 1 - y - TP_b\\
	TN_b = y - FP_b \\
	TP_a = TP_b \\
	y = x \\
	FP_a = FP_b
\end{cases} \lor
\begin{cases}
	FN_a = 1 - x - \frac{1-x}{1-y} TP_b \\
	TN_a = x \\
	FN_b = 1 - y - TP_b \\
	TN_b = y \\
	TP_a = \frac{1-x}{1-y} TP_b \\
	FP_b = 0\\
	FP_a = 0
\end{cases} 
$$

\subsection{Equalised odds and Predictive Equality}
$$
\begin{cases}
	TP_a + FN_a = 1 - x \\
	TN_a + FP_a = x \\
	TP_b + FN_b = 1 - y \\
	TN_b + FP_b = y \\
	FN_a = \frac{1-x}{1-y} FN_b \\
	TN_a = \frac{x}{y} TN_b \\
	FN_a = \frac{1-x}{1-y} FN_b \\
	TN_a = \frac{1-x}{1-y} TN_b
\end{cases} \Longleftrightarrow
\begin{cases}
	TP_a + FN_a = 1 - x \\
	TN_a + FP_a = x \\
	TP_b + FN_b = 1 - y \\
	TN_b + FP_b = y \\
	FN_a = \frac{1-x}{1-y} FN_b \\
	TN_a = \frac{x}{y} TN_b \\
	\frac{x}{y} TN_b = \frac{1-x}{1-y} TN_b
\end{cases} \Longleftrightarrow
\begin{cases}
	TP_a + FN_a = 1 - x \\
	TN_a + FP_a = x \\
	TP_b + FN_b = 1 - y \\
	TN_b + FP_b = y \\
	FN_a = \frac{1-x}{1-y} FN_b \\
	TN_a = \frac{x}{y} TN_b \\
	\frac{x}{y} = \frac{1-x}{1-y} \lor TN_b = 0
\end{cases}
$$
$$ \Longleftrightarrow
\begin{cases}
	TP_a + FN_a = 1 - x \\
	TN_a + FP_a = x \\
	TP_b + FN_b = 1 - y \\
	TN_b + FP_b = y \\
	FN_a = \frac{1-x}{1-y} FN_b \\
	TN_a = \frac{x}{y} TN_b \\
	x = y \lor TN_b = 0
\end{cases} \Longleftrightarrow
\begin{cases}
	TP_a = 1 - x - FN_b \\
	FP_a = x - TN_b \\
	TP_b = 1 - y - FN_b \\
	FP_b = y - TN_b \\
	FN_a = FN_b \\
	TN_a = TN_b \\
	x = y 
\end{cases}\lor 
\begin{cases}
	TP_a = 1 - x - \frac{1-x}{1-y} FN_b \\
	FP_a = x \\
	TP_b = 1 - y - FN_b \\
	FP_b = y \\
	FN_a = \frac{1-x}{1-y} FN_b \\
	TN_a = 0 \\
	TN_b = 0
\end{cases}
$$

\subsection{Equalised odds and Overall accuracy Equality}
$$
\begin{cases}
	TP_a + FN_a = 1 - x \\
	TN_a + FP_a = x \\
	TP_b + FN_b = 1 - y \\
	TN_b + FP_b = y \\
	TP_a = \frac{1-x}{1-y} TP_b \\
	FP_a = \frac{x}{y} FP_b \\
	TP_a = \frac{1-x}{1-y} TP_b \\
	\frac{y-x}{1-y}TP_b = TN_b - TN_a
\end{cases} \Longleftrightarrow
\begin{cases}
	TP_a + FN_a = 1 - x \\
	TN_a + FP_a = x \\
	TP_b + FN_b = 1 - y \\
	TN_b + FP_b = y \\
	TP_a = \frac{1-x}{1-y} TP_b \\
	FP_a = \frac{x}{y} FP_b \\
	\frac{1-x}{1-y} TP_b = \frac{1-x}{1-y} TP_b \\
	\frac{y-x}{1-y}TP_b = \frac{y-x}{y}TN_b
\end{cases} \Longleftrightarrow
\begin{cases}
	TP_a + FN_a = 1 - x \\
	TN_a + FP_a = x \\
	TP_b + FN_b = 1 - y \\
	TN_b + FP_b = y \\
	TP_a = \frac{1-x}{1-y} TP_b \\
	FP_a = \frac{x}{y} FP_b \\
	TP_b = \frac{1-y}{y}TN_b \lor x=y
\end{cases}
$$
$$ \Longleftrightarrow
\begin{cases}
	FN_a = 1 - x - \frac{1-x}{y} TN_b\\
	TN_a = x - \frac{x}{y} FP_b \\
	FN_b = 1 - y - \frac{1-y}{y}TN_b \\
	TN_b = y - FP_b\\
	TP_a = \frac{1-x}{y} TN_b \\
	FP_a = \frac{x}{y} FP_b \\
	TP_b = \frac{1-y}{y}TN_b
\end{cases} \lor
\begin{cases}
	FN_a = 1 - x - TP_b \\
	TN_a = x - FP_b\\
	FN_b = 1 - y - TP_b \\
	TN_b = y - FP_b \\
	TP_a = TP_b \\
	FP_a = FP_b \\
	x=y
\end{cases} 
$$

\subsection{Equal Opportunity, Predictive Equality and Overall accuracy Equality}
$$
\begin{cases}
	TP_a + FN_a = 1 - x \\
	TN_a + FP_a = x \\
	TP_b + FN_b = 1 - y \\
	TN_b + FP_b = y \\
	FN_a = \frac{1-x}{1-y} FN_b \\
	TN_a = \frac{1-x}{1-y} TN_b \\
	FN_a = \frac{1-x}{1-y} FN_b \\
	TN_a = TN_b - \frac{y-x}{1-y} TP_b       
\end{cases} \Longleftrightarrow
\begin{cases}
	TP_a + FN_a = 1 - x \\
	TN_a + FP_a = x \\
	TP_b + FN_b = 1 - y \\
	TN_b + FP_b = y \\
	FN_a = \frac{1-x}{1-y} FN_b \\
	TN_a = \frac{1-x}{1-y} TN_b \\
	\frac{1-x}{1-y} TN_b = TN_b - \frac{y-x}{1-y} TP_b       
\end{cases} \Longleftrightarrow
\begin{cases}
	TP_a + FN_a = 1 - x \\
	TN_a + FP_a = x \\
	TP_b + FN_b = 1 - y \\
	TN_b + FP_b = y \\
	FN_a = \frac{1-x}{1-y} FN_b \\
	TN_a = \frac{1-x}{1-y} TN_b \\
	\frac{1-x -1 + y}{1-y} TN_b = - \frac{y-x}{1-y} TP_b       
\end{cases}
$$
$$\Longleftrightarrow
\begin{cases}
	TP_a + FN_a = 1 - x \\
	TN_a + FP_a = x \\
	TP_b + FN_b = 1 - y \\
	TN_b + FP_b = y \\
	FN_a = \frac{1-x}{1-y} FN_b \\
	TN_a = \frac{1-x}{1-y} TN_b \\
	\frac{y-x}{1-y} TN_b = - \frac{y-x}{1-y} TP_b       
\end{cases} \Longleftrightarrow
\begin{cases}
	TP_a = 1 - x - FN_a \\
	FP_a = x - TN_a \\
	TP_b = 1 - y - FN_b\\
	FP_b = y - TN_b \\
	FN_a = \frac{1-x}{1-y} FN_b \\
	TN_a = \frac{1-x}{1-y} TN_b \\
	TN_b = - TP_b \lor x = y
\end{cases} \Longleftrightarrow
\begin{cases}
	TP_a = 1 - x - \frac{1-x}{1-y} FN_b \\
	FP_a = x - \frac{1-x}{1-y} TN_b \\
	TP_b = 1 - y - FN_b\\
	FP_b = y - TN_b \\
	FN_a = \frac{1-x}{1-y} FN_b \\
	TN_a = \frac{1-x}{1-y} TN_b \\
	TN_b = - TP_b = 0  \lor x = y
\end{cases}
$$

\subsection{Equal Opportunity, Predictive Equality and Predictive Parity/Treatment Equality}
$$
\begin{cases}
	TP_a + FN_a = 1 - x \\
	TN_a + FP_a = x \\
	TP_b + FN_b = 1 - y \\
	TN_b + FP_b = y \\
	TP_a = \frac{1-x}{1-y} TP_b \\
	FP_a = \frac{1-x}{1-y} FP_b \\
	TP_a = \frac{1-x}{1-y} TP_b \\
	TN_a = \frac{1-x}{1-y} TN_b
\end{cases} \Longleftrightarrow
\begin{cases}
	TP_a + FN_a = 1 - x \\
	TN_a + FP_a = x \\
	TP_b + FN_b = 1 - y \\
	TN_b + FP_b = y \\
	TP_a = \frac{1-x}{1-y} TP_b \\
	x - TN_a = \frac{1-x}{1-y} (y-TN_b) \\
	TN_a = \frac{1-x}{1-y} TN_b
\end{cases} \Longleftrightarrow
\begin{cases}
	TP_a + FN_a = 1 - x \\
	TN_a + FP_a = x \\
	TP_b + FN_b = 1 - y \\
	TN_b + FP_b = y \\
	TP_a = \frac{1-x}{1-y} TP_b \\
	TN_a = x - \frac{1-x}{1-y} (y-TN_b) \\
	x - \frac{1-x}{1-y} (y-TN_b) = \frac{1-x}{1-y} TN_b
\end{cases} 
$$
$$\Longleftrightarrow
\begin{cases}
	TP_a + FN_a = 1 - x \\
	TN_a + FP_a = x \\
	TP_b + FN_b = 1 - y \\
	TN_b + FP_b = y \\
	TP_a = \frac{1-x}{1-y} TP_b \\
	TN_a = x - \frac{1-x}{1-y} (y-TN_b) \\
	x = \frac{1-x}{1-y}y 
\end{cases} \Longleftrightarrow
\begin{cases}
	TP_a + FN_a = 1 - x \\
	TN_a + FP_a = x \\
	TP_b + FN_b = 1 - y \\
	TN_b + FP_b = y \\
	TP_a = \frac{1-x}{1-y} TP_b \\
	TN_a = x - \frac{1-x}{1-y} (y-TN_b) \\
	x=y
\end{cases} \Longleftrightarrow
\begin{cases}
	FN_a = 1 - x - TP_b \\
	FP_a = x - TN_b \\
	FN_b = 1 - y - TP_b \\
	FP_b = y - TN_b \\
	TP_a = TP_b \\
	TN_a = TN_b \\
	x = y 
\end{cases}
$$

\subsection{Equal Opportunity, Overall accuracy Equality and Predictive Parity/Treatment Equality}

$$
\begin{cases}
	TP_a + FN_a = 1 - x \\
	TN_a + FP_a = x \\
	TP_b + FN_b = 1 - y \\
	TN_b + FP_b = y \\
	TP_a = \frac{1-x}{1-y}TP_b \\
	FP_a = \frac{1-x}{1-y} FP_b \\
	TP_a = \frac{1-x}{1-y} TP_b \\
	TN_a = TN_b - \frac{y-x}{1-y} TP_b       
\end{cases} \Longleftrightarrow
\begin{cases}
	TP_a + FN_a = 1 - x \\
	TN_a + FP_a = x \\
	TP_b + FN_b = 1 - y \\
	TN_b + FP_b = y \\
	TP_a = \frac{1-x}{1-y}TP_b \\
	x-TN_a = \frac{1-x}{1-y} (y-TN_b) \\
	TN_a = TN_b - \frac{y-x}{1-y} TP_b          
\end{cases} 
$$
$$ \Longleftrightarrow
\begin{cases}
	TP_a + FN_a = 1 - x \\
	TN_a + FP_a = x \\
	TP_b + FN_b = 1 - y \\
	TN_b + FP_b = y \\
	TP_a = \frac{1-x}{1-y}TP_b \\
	TN_a = x - \frac{1-x}{1-y} (y-TN_b) \\
	x - \frac{1-x}{1-y} y + \frac{1-x}{1-y} TN_b = TN_b - \frac{y-x}{1-y} TP_b             
\end{cases} \Longleftrightarrow
\begin{cases}
	TP_a + FN_a = 1 - x \\
	TN_a + FP_a = x \\
	TP_b + FN_b = 1 - y \\
	TN_b + FP_b = y \\
	TP_a = \frac{1-x}{1-y}TP_b \\
	TN_a = x - \frac{1-x}{1-y} (y-TN_b) \\
	 - TN_b + \frac{1-x}{1-y} TN_b = - \frac{y-x}{1-y} TP_b  -x + \frac{1-x}{1-y} y 
\end{cases}
$$

$$ \Longleftrightarrow
\begin{cases}
	TP_a + FN_a = 1 - x \\
	TN_a + FP_a = x \\
	TP_b + FN_b = 1 - y \\
	TN_b + FP_b = y \\
	TP_a = \frac{1-x}{1-y}TP_b \\
	TN_a = x - \frac{1-x}{1-y} (y-TN_b) \\
	\frac{y-x}{1-y} TN_b = - \frac{y-x}{1-y} TP_b  -x + \frac{1-x}{1-y} y 
\end{cases} \Longleftrightarrow
\begin{cases}
	TP_a + FN_a = 1 - x \\
	TN_a + FP_a = x \\
	TP_b + FN_b = 1 - y \\
	TN_b + FP_b = y \\
	TP_a = \frac{1-x}{1-y}TP_b \\
	TN_a = x - \frac{1-x}{1-y} (y-TN_b) \\
	TN_b = - TP_b  -\frac{1-y}{y-x}x + \frac{1-x}{1-y} \frac{1-y}{y-x} y \lor x = y
\end{cases}
$$
$$ \Longleftrightarrow
\begin{cases}
	TP_a + FN_a = 1 - x \\
	TN_a + FP_a = x \\
	TP_b + FN_b = 1 - y \\
	TN_b + FP_b = y \\
	TP_a = \frac{1-x}{1-y}TP_b \\
	TN_a = x - \frac{1-x}{1-y} (y-TN_b) \\
	TN_b = - TP_b  -\frac{1-y}{y-x}x + \frac{1-x}{y-x} y \lor x = y
\end{cases} \Longleftrightarrow
\begin{cases}
	TP_a + FN_a = 1 - x \\
	TN_a + FP_a = x \\
	TP_b + FN_b = 1 - y \\
	TN_b + FP_b = y \\
	TP_a = \frac{1-x}{1-y}TP_b \\
	TN_a = x - \frac{1-x}{1-y} (y-TN_b) \\
	TN_b = - TP_b + \frac{1}{y-x} (-x + y) \lor x = y
\end{cases} 
$$
$$ \Longleftrightarrow
\begin{cases}
	FN_a = 1 - x - TP_a \\
	TN_a = x - FP_a \\
	FN_b = 1 - y - TP_b \\
	TN_b = y - FP_b\\
	TP_a = \frac{1-x}{1-y}TP_b \\
	TN_a = 1 - \frac{1-x}{1-y} TP_b \\
	TN_b =  1 - TP_b
\end{cases} \lor
\begin{cases}
	FN_a = 1 - x - TP_a \\
	TN_a = x - FP_a \\
	FN_b = 1 - y - TP_b \\
	TN_b = y - FP_b\\
	TP_a = TP_b \\
	TN_a = TN_b \\
	x = y
\end{cases}
$$
$$ \Longleftrightarrow
\begin{cases}
	FN_a = 0 \\
	FP_a = 0 \\
	FN_b = 0 \\
	TN_b = 0\\
	TP_a = 1 - x \\
	TN_a = x \\
	TN_b =  y \\
	TP_b = 1 - y
\end{cases} \lor
\begin{cases}
	FN_a = 1 - x - TP_b \\
	TN_a = x - FP_b \\
	FN_b = 1 - y - TP_b \\
	TN_b = y - FP_b\\
	TP_a = TP_b \\
	TN_a = TN_b \\
	x = y
\end{cases}
$$

\subsection{False positive parity, Predictive Parity and  Predictive Equality/Treatment Equality}
$$
\begin{cases}
	TP_a + FN_a = 1 - x \\
	TN_a + FP_a = x \\
	TP_b + FN_b = 1 - y \\
	TN_b + FP_b = y \\
	FP_a = \frac{x}{y} FP_b \\
	TP_a = \frac{x}{y} TP_b \\
	FP_a = \frac{x}{y} FP_b \\
	FN_a = \frac{x}{y} FN_b
\end{cases} \Longleftrightarrow
\begin{cases}
	TP_a + FN_a = 1 - x \\
	TN_a + FP_a = x \\
	TP_b + FN_b = 1 - y \\
	TN_b + FP_b = y \\
	FP_a = \frac{x}{y} FP_b \\
	FN_a = \frac{x}{y} FN_b \\
	1 - x - FN_a = \frac{x}{y} ( 1 - y - FN_b)
\end{cases}
$$
$$ \Longleftrightarrow
\begin{cases}
	TP_a + FN_a = 1 - x \\
	TN_a + FP_a = x \\
	TP_b + FN_b = 1 - y \\
	TN_b + FP_b = y \\
	FP_a = \frac{x}{y} FP_b \\
	FN_a = \frac{x}{y} FN_b \\
	1 - x - \frac{x}{y} FN_b = \frac{x}{y} - x - \frac{x}{y} FN_b
\end{cases} \Longleftrightarrow
\begin{cases}
	TP_a + FN_a = 1 - x \\
	TN_a + FP_a = x \\
	TP_b + FN_b = 1 - y \\
	TN_b + FP_b = y \\
	FP_a = \frac{x}{y} FP_b \\
	FN_a = \frac{x}{y} FN_b \\
	1 = \frac{x}{y}
\end{cases} \Longleftrightarrow
\begin{cases}
	TP_a = 1 - x - FN_b \\
	TN_a = x - FP_b \\
	TP_b = 1 - y - FN_b \\
	TN_b = y - FP_b \\
	y = x \\
	FP_a = FP_b \\
	FN_a = FN_b 
\end{cases}
$$

\subsection{False positive parity, Predictive Parity and Overall accuracy Equality}
$$
\begin{cases}
	TP_a + FN_a = 1 - x \\
	TN_a + FP_a = x \\
	TP_b + FN_b = 1 - y \\
	TN_b + FP_b = y \\
	TN_a = \frac{x}{y} TN_b \\
	TP_a = \frac{x}{y} TP_b \\
	TN_a = \frac{x}{x-y}(TP_b-TP_a) \\
	TN_b = \frac{y}{x-y} (TP_b - TP_a)
\end{cases} \Longleftrightarrow
\begin{cases}
	TP_a + FN_a = 1 - x \\
	TN_a + FP_a = x \\
	TP_b + FN_b = 1 - y \\
	TN_b + FP_b = y \\
	TN_a = \frac{x}{y} TN_b \\
	TP_a = \frac{x}{y} TP_b \\
	TN_a = \frac{x}{x-y}(TP_b-\frac{x}{y} TP_b) \\
	TN_b = \frac{y}{x-y} (TP_b - \frac{x}{y} TP_b)
\end{cases}
\Longleftrightarrow
\begin{cases}
	TP_a + FN_a = 1 - x \\
	TN_a + FP_a = x \\
	TP_b + FN_b = 1 - y \\
	TN_b + FP_b = y \\
	TN_a = \frac{x}{y} TN_b \\
	TP_a = \frac{x}{y} TP_b \\
	TN_a = \frac{x}{x-y}(\frac{y-x}{y} TP_b) \\
	TN_b = \frac{y}{x-y} (\frac{y-x}{y} TP_b)
\end{cases}
$$

$$\Longleftrightarrow
\begin{cases}
	TP_a + FN_a = 1 - x \\
	TN_a + FP_a = x \\
	TP_b + FN_b = 1 - y \\
	TN_b + FP_b = y \\
	TN_a = \frac{x}{y} TN_b \\
	TP_a = \frac{x}{y} TP_b \\
	TN_a = -\frac{x}{y}TP_b \\
	TN_b = -TP_b
\end{cases} \Longleftrightarrow
\begin{cases}
	TP_a + FN_a = 1 - x \\
	TN_a + FP_a = x \\
	TP_b + FN_b = 1 - y \\
	TN_b + FP_b = y \\
	TP_a = \frac{x}{y} TP_b \\
	TN_a = -\frac{x}{y}TP_b \\
	TN_b = - TP_b \\
	-\frac{x}{y}TP_b = -\frac{x}{y} TP_b \\
\end{cases} \Longleftrightarrow
\begin{cases}
	FN_a = 1 - x\\
	FP_a = x \\
	FN_b = 1 - y \\
	FP_b = y \\
	TP_a = \frac{x}{y} TP_b = 0 \\
	TN_a = -\frac{x}{y}TP_b  = 0\\
	TN_b = -TP_b = 0\\
	1 = 1 \\
\end{cases}
$$

\subsection{False positive parity, Overall accuracy Equality and Predictive Equality/Treatment equality}
$$
\begin{cases}
	TP_a + FN_a = 1 - x \\
	TN_a + FP_a = x \\
	TP_b + FN_b = 1 - y \\
	TN_b + FP_b = y \\
	TN_a = \frac{x}{y} TN_b \\
	FN_a = \frac{x}{y} FN_b \\
	TN_a = \frac{x}{x-y}(TP_b-TP_a) \\
	TN_b = \frac{y}{x-y} (TP_b - TP_a)
\end{cases} \Longleftrightarrow
\begin{cases}
	TP_a + FN_a = 1 - x \\
	TN_a + FP_a = x \\
	TP_b + FN_b = 1 - y \\
	TN_b + FP_b = y \\
	TN_a = \frac{x}{y} TN_b \\
	TP_a = 1 -\frac{x}{y} + \frac{x}{y}TP_b \\
	TN_a = \frac{x}{x-y}(TP_b-1 + \frac{x}{y} - \frac{x}{y}TP_b) \\
	TN_b = \frac{y}{x-y} (TP_b - 1 + \frac{x}{y} - \frac{x}{y}TP_b)
\end{cases}
$$
$$ \Longleftrightarrow
\begin{cases}
	TP_a + FN_a = 1 - x \\
	TN_a + FP_a = x \\
	TP_b + FN_b = 1 - y \\
	TN_b + FP_b = y \\
	TN_a = \frac{x}{y} TN_b \\
	TP_a = 1 -\frac{x}{y} + \frac{x}{y}TP_b \\
	TN_a = \frac{x}{x-y}(\frac{x - y}{y} + \frac{y-x}{y}TP_b) \\
	TN_b = \frac{y}{x-y} (\frac{x - y}{y} + \frac{y-x}{y}TP_b)
\end{cases} \Longleftrightarrow
\begin{cases}
	TP_a + FN_a = 1 - x \\
	TN_a + FP_a = x \\
	TP_b + FN_b = 1 - y \\
	TN_b + FP_b = y \\
	TN_a = \frac{x}{y} TN_b \\
	TP_a = 1 -\frac{x}{y} + \frac{x}{y}TP_b \\
	TN_a = \frac{x}{y} - \frac{x}{y}TP_b \\
	TN_b = 1 - TP_b
\end{cases}
$$
$$\Longleftrightarrow
\begin{cases}
	TP_a + FN_a = 1 - x \\
	TN_a + FP_a = x \\
	TP_b + FN_b = 1 - y \\
	TN_b + FP_b = y \\
	TP_a = 1 -\frac{x}{y} + \frac{x}{y}TP_b \\
	TN_a = \frac{x}{y} - \frac{x}{y}TP_b \\
	TN_b = 1 - TP_b \\
	\frac{x}{y} - \frac{x}{y}TP_b = \frac{x}{y} (1 - TP_b) \\
\end{cases} \Longleftrightarrow
\begin{cases}
	FN_a = 1 - x - TP_a \\
	FP_a = x - TN_a \\
	FN_b = 1 - y - TP_b \\
	FP_b = y - TN_b \\
	TP_a = 1 -\frac{x}{y} + \frac{x}{y}TP_b \\
	TN_a = \frac{x}{y} - \frac{x}{y}TP_b \\
	TN_b = 1 - TP_b \\
	1 = 1 \\
\end{cases}\Longleftrightarrow
\begin{cases}
	FN_a = 0 \\
	FP_a = 0 \\
	FN_b = 0 \\
	FP_b = 0 \\
	TP_a = 1 -x \\
	TN_a = x\\
	TN_b = y \\
	TP_b = 1 - y \\
	1 = 1 \\
\end{cases}
$$

\subsection{Predictive Parity, Predictive Equality and Overall accuracy Equality}
\label{subsec:pp_pe_oae}
$$
\begin{cases}
	TP_a + FN_a = 1 - x \\
	TN_a + FP_a = x \\
	TP_b + FN_b = 1 - y \\
	TN_b + FP_b = y \\
	TP_a = \frac{TP_b * FP_a}{FP_b} \\
	FN_a = \frac{FN_b*TN_a}{TN_b} \\
	TP_a + TN_a = TP_b + TN_b
\end{cases} \Longleftrightarrow
\begin{cases}
	TP_a + FN_a = 1 - x \\
	TN_a + FP_a = x \\
	TP_b + FN_b = 1 - y \\
	TN_b + FP_b = y \\
	TP_a*(y-TN_b) = TP_b * (x-TN_a) \\
	(1-x-TP_a)*TN_b = (1-y-TP_b)*TN_a \\
	TP_a + TN_a = TP_b + TN_b
\end{cases}
$$
$$ \Longleftrightarrow
\begin{cases}
	TP_a + FN_a = 1 - x \\
	TN_a + FP_a = x \\
	TP_b + FN_b = 1 - y \\
	TN_b + FP_b = y \\
	TP_a= TP_b * \frac{x-TN_a}{y-TN_b} \\
	(1-x-TP_b * \frac{x-TN_a}{y-TN_b})*TN_b = (1-y-TP_b)*TN_a \\
	TP_a + TN_a = TP_b + TN_b
\end{cases}
$$

$$ \Longleftrightarrow
\begin{cases}
	TP_a + FN_a = 1 - x \\
	TN_a + FP_a = x \\
	TP_b + FN_b = 1 - y \\
	TN_b + FP_b = y \\
	TP_a= TP_b * \frac{x-TN_a}{y-TN_b} \\
	-TP_b TN_b \frac{x-TN_a}{y-TN_b} + TP_b TN_a = -TN_b + xTN_b + TN_a - yTN_a\\
	TP_a + TN_a = TP_b + TN_b
\end{cases}
$$
$$ \Longleftrightarrow
\begin{cases}
	TP_a + FN_a = 1 - x \\
	TN_a + FP_a = x \\
	TP_b + FN_b = 1 - y \\
	TN_b + FP_b = y \\
	TP_a= TP_b * \frac{x-TN_a}{y-TN_b} \\
	TP_b(TN_a - TN_b* \frac{x-TN_a}{y-TN_b}) = (x-1)TN_b + (1-y)TN_a\\
	TP_a + TN_a = TP_b + TN_b
\end{cases}
$$
$$ \Longleftrightarrow
\begin{cases}
	TP_a + FN_a = 1 - x \\
	TN_a + FP_a = x \\
	TP_b + FN_b = 1 - y \\
	TN_b + FP_b = y \\
	TP_a= TP_b * \frac{x-TN_a}{y-TN_b} \\
	TP_b(\frac{yTN_a - TN_aTN_b-xTN_b+TN_bTN_a}{y-TN_b}) = (x-1)TN_b + (1-y)TN_a\\
	TP_a + TN_a = TP_b + TN_b
\end{cases}
$$
$$ \Longleftrightarrow
\begin{cases}
	TP_a + FN_a = 1 - x \\
	TN_a + FP_a = x \\
	TP_b + FN_b = 1 - y \\
	TN_b + FP_b = y \\
	TP_a= TP_b * \frac{x-TN_a}{y-TN_b} \\
	TP_b(\frac{yTN_a-xTN_b}{y-TN_b}) = (x-1)TN_b + (1-y)TN_a\\
	TP_a + TN_a = TP_b + TN_b
\end{cases} \Longleftrightarrow
\begin{cases}
	TP_a + FN_a = 1 - x \\
	TN_a + FP_a = x \\
	TP_b + FN_b = 1 - y \\
	TN_b + FP_b = y \\
	TP_a= \frac{(x-1)(x-TN_a)}{yTN_a-xTN_b}TN_b + \frac{(1-y)(x-TN_a)}{yTN_a-xTN_b}TN_a\\
	TP_b = \frac{(x-1)(y-TN_b)}{yTN_a-xTN_b}TN_b + \frac{(1-y)(y-TN_b)}{yTN_a-xTN_b}TN_a\\
	TP_a + TN_a = TP_b + TN_b
\end{cases}
$$
$$ \Longleftrightarrow
\begin{cases}
	TP_a + FN_a = 1 - x \\
	TN_a + FP_a = x \\
	TP_b + FN_b = 1 - y \\
	TN_b + FP_b = y \\
	TP_a= \frac{(x-1)(x-TN_a)}{yTN_a-xTN_b}TN_b + \frac{(1-y)(x-TN_a)}{yTN_a-xTN_b}TN_a\\
	TP_b = \frac{(x-1)(y-TN_b)}{yTN_a-xTN_b}TN_b + \frac{(1-y)(y-TN_b)}{yTN_a-xTN_b}TN_a\\
	\frac{(x-1)(x-TN_a)}{yTN_a-xTN_b}TN_b + \frac{(1-y)(x-TN_a)}{yTN_a-xTN_b}TN_a + TN_a = \frac{(x-1)(y-TN_b)}{yTN_a-xTN_b}TN_b + \frac{(1-y)(y-TN_b)}{yTN_a-xTN_b}TN_a + TN_b
\end{cases}
$$
$$ \Longleftrightarrow
\begin{cases}
	TP_a + FN_a = 1 - x \\
	TN_a + FP_a = x \\
	TP_b + FN_b = 1 - y \\
	TN_b + FP_b = y \\
	TP_a= \frac{(1-x)(x-TN_a)}{yTN_a-xTN_b}TN_b + \frac{(1-y)(x-TN_a)}{yTN_a-xTN_b}TN_a\\
	TP_b = \frac{(1-x)(y-TN_b)}{yTN_a-xTN_b}TN_b + \frac{(1-y)(y-TN_b)}{yTN_a-xTN_b}TN_a\\

	(x^2-xTN_a-x+TN_a)TN_b + (x-xy-TN_a+yTN_a)TN_a + (yTN_a-xTN_b)TN_a = \\(xy - xTN_b-y+<tn-b)TN_b + (y-TN_b-y^2+yTN_b)TN_a + (yTN_a-xTN_b)TN_b	
	
\end{cases}
$$

$$ \Longleftrightarrow
\begin{cases}
	TP_a + FN_a = 1 - x \\
	TN_a + FP_a = x \\
	TP_b + FN_b = 1 - y \\
	TN_b + FP_b = y \\
	TP_a= \frac{(1-x)(x-TN_a)}{yTN_a-xTN_b}TN_b + \frac{(1-y)(x-TN_a)}{yTN_a-xTN_b}TN_a\\
	TP_b = \frac{(1-x)(y-TN_b)}{yTN_a-xTN_b}TN_b + \frac{(1-y)(y-TN_b)}{yTN_a-xTN_b}TN_a\\

	(xy - 2xTN_b - y + TN_b + yTN_a - x^2 +xTN_a + x - TN_a)TN_b = \\
	(x - xy - TN_a + 2yTN_a - xTN_b - y +TN_b + y ^2 -y TN_b)TN_a
	
\end{cases}
$$

$$ \Longleftrightarrow
\begin{cases}
	TP_a + FN_a = 1 - x \\
	TN_a + FP_a = x \\
	TP_b + FN_b = 1 - y \\
	TN_b + FP_b = y \\
	TP_a= \frac{(1-x)(x-TN_a)}{yTN_a-xTN_b}TN_b + \frac{(1-y)(x-TN_a)}{yTN_a-xTN_b}TN_a\\
	TP_b = \frac{(1-x)(y-TN_b)}{yTN_a-xTN_b}TN_b + \frac{(1-y)(y-TN_b)}{yTN_a-xTN_b}TN_a\\

	(-2x+1)TN_b^2 + (xy - y + x + 2yTN_a-x^2 +2xTN_a-2TN_a)TN_b + (xy - x + TN_a -2yTN_a + y - y^2)TN_a = 0
	
\end{cases}
$$

\subsection{Predictive Parity, Predictive Equality and Treatment Equality}
$$
\begin{cases}
	TP_a + FN_a = 1 - x \\
	TN_a + FP_a = x \\
	TP_b + FN_b = 1 - y \\
	TN_b + FP_b = y \\
	TP_a = \frac{TP_bFP_a}{FP_b} \\
	FN_a = \frac{FN_bTN_a}{TN_b} \\
	FN_a = \frac{1-x}{1-y} FN_b \\
	FP_a = \frac{1-x}{1-y} FP_b
\end{cases} \Longleftrightarrow
\begin{cases}
	TP_a + FN_a = 1 - x \\
	TN_a + FP_a = x \\
	TP_b + FN_b = 1 - y \\
	TN_b + FP_b = y \\
	\frac{1-x}{1-y} TP_b = \frac{TP_bFP_a}{FP_b} \\
	\frac{1-x}{1-y} FN_b = \frac{FN_bTN_a}{TN_b} \\
	FN_a = \frac{1-x}{1-y} FN_b \\
	FP_a = \frac{1-x}{1-y} FP_b
\end{cases} \Longleftrightarrow
\begin{cases}
	TP_a + FN_a = 1 - x \\
	TN_a + FP_a = x \\
	TP_b + FN_b = 1 - y \\
	TN_b + FP_b = y \\
	\frac{1-x}{1-y} FP_b = FP_a \lor TP_b = 0 \\
	TN_a = \frac{1-x}{1-y} TN_b \lor FN_b = 0 \\
	FN_a = \frac{1-x}{1-y} FN_b \\
	FP_a = \frac{1-x}{1-y} FP_b
\end{cases}
$$
$$\Longleftrightarrow
\begin{cases}
	TP_a + FN_a = 1 - x \\
	TN_a + FP_a = x \\
	TP_b + FN_b = 1 - y \\
	TN_b + FP_b = y \\
	TN_a = \frac{1-x}{1-y} TN_b \lor FN_b = 0 \\ 
	FN_a = \frac{1-x}{1-y} FN_b \\
	TN_a = x - \frac{1-x}{1-y} y + \frac{1-x}{1-y} TN_b
\end{cases} \Longleftrightarrow
\begin{cases}
	TP_a + FN_a = 1 - x \\
	TN_a + FP_a = x \\
	TP_b + FN_b = 1 - y \\
	TN_b + FP_b = y \\
	TN_a = x - \frac{1-x}{1-y} y + \frac{1-x}{1-y} TN_b \\
	TN_a = \frac{1-x}{1-y} TN_b \\
	FN_a = \frac{1-x}{1-y} FN_b
\end{cases} \lor 
\begin{cases}
	TP_a = 1 - x \\
	FP_a = x - TN_a \\
	TP_b = 1 - y \\
	FP_b = y - TN_b \\
	TN_a = x - \frac{1-x}{1-y} y + \frac{1-x}{1-y} TN_b \\
	FN_b = 0 \\
	FN_a = 0
\end{cases}
$$
$$
\Longleftrightarrow
\begin{cases}
	TP_a + FN_a = 1 - x \\
	TN_a + FP_a = x \\
	TP_b + FN_b = 1 - y \\
	TN_b + FP_b = y \\
	\frac{1-x}{1-y} TN_b  = x - \frac{1-x}{1-y} y + \frac{1-x}{1-y} TN_b \\
	TN_a = \frac{1-x}{1-y} TN_b \\
	FN_a = \frac{1-x}{1-y} FN_b
\end{cases} \lor 
\begin{cases}
	TP_a = 1 - x \\
	FP_a = x - TN_a \\
	TP_b = 1 - y \\
	FP_b = y - TN_b \\
	TN_a = x - \frac{1-x}{1-y} y + \frac{1-x}{1-y} TN_b \\
	FN_b = 0 \\
	FN_a = 0
\end{cases}
$$
$$
\Longleftrightarrow
\begin{cases}
	TP_a = 1 - x - FN_b \\
	FP_a = x - TN_b\\
	TP_b = 1 - y - FN_b \\
	FP_b = y - TN_b \\
	x = y  \\
	TN_a = TN_b \\
	FN_a = FN_b
\end{cases} \lor 
\begin{cases}
	TP_a = 1 - x \\
	FP_a = \frac{1-x}{1-y} y - \frac{1-x}{1-y} TN_b \\
	TP_b = 1 - y \\
	FP_b = y - TN_b \\
	TN_a = x - \frac{1-x}{1-y} y + \frac{1-x}{1-y} TN_b \\
	FN_b = 0 \\
	FN_a = 0
\end{cases}
$$

\subsection{Predictive Parity, Overall accuracy Equality and Treatment Equality}
$$
\begin{cases}
	TP_a + FN_a = 1 - x \\
	TN_a + FP_a = x \\
	TP_b + FN_b = 1 - y \\
	TN_b + FP_b = y \\
	FP_a = FP_b \\
	FN_a = FN_b \\
	TP_a = \frac{1-x}{1-y}TP_b \\
	FP_a = \frac{1-x}{1-y} FP_b
\end{cases} \Longleftrightarrow
\begin{cases}
	TP_a + FN_a = 1 - x \\
	TN_a + FP_a = x \\
	TP_b + FN_b = 1 - y \\
	TN_b + FP_b = y \\
	FP_a = FP_b \\
	1 - x - TP_a = 1 - y - TP_b\\
	TP_a = \frac{1-x}{1-y}TP_b \\
	FP_b = \frac{1-x}{1-y} FP_b
\end{cases} \Longleftrightarrow
\begin{cases}
	TP_a + FN_a = 1 - x \\
	TN_a + FP_a = x \\
	TP_b + FN_b = 1 - y \\
	TN_b + FP_b = y \\
	FP_a = FP_b \\
	TP_a = y - x + TP_b\\
	y - x - TP_b = \frac{1-x}{1-y}TP_b \\
	x = y \lor FP_b = 0
\end{cases}
$$
$$ \Longleftrightarrow
\begin{cases}
	TP_a + FN_a = 1 - x \\
	TN_a + FP_a = x \\
	TP_b + FN_b = 1 - y \\
	TN_b + FP_b = y \\
	FP_a = FP_b \\
	TP_a = y - x + TP_b\\
	y - x  = \frac{y-x}{1-y}TP_b \\
	x = y \lor FP_b = 0
\end{cases} \Longleftrightarrow
\begin{cases}
	TP_a + FN_a = 1 - x \\
	TN_a + FP_a = x \\
	TP_b + FN_b = 1 - y \\
	TN_b + FP_b = y \\
	FP_a = FP_b \\
	TP_a = y - x + TP_b\\
	TP_b = 1 - y \lor x=y \\
	x = y \lor FP_b = 0
\end{cases}
\Longleftrightarrow
\begin{cases}
	FN_a = 0 \\
	TN_a = x \\
	FN_b = 0 \\
	TN_b = y \\
	FP_a = 0 \\
	TP_a = 1 - x\\
	TP_b = 1 - y \\
	FP_b = 0
\end{cases} \lor
\begin{cases}
	FN_a = 1 - x - TP_b \\
	TN_a = x - FP_b\\
	FN_b = 1 - y - TP_b \\
	TN_b = y - FP_b \\
	FP_a = FP_b \\
	TP_a = TP_b\\
	x=y 
\end{cases}
$$

\subsection{Predictive Equality, Overall accuracy Equality and Treatment Equality}
$$
\begin{cases}
	TP_a + FN_a = 1 - x \\
	TN_a + FP_a = x \\
	TP_b + FN_b = 1 - y \\
	TN_b + FP_b = y \\
	FP_a = FP_b \\
	FN_a = FN_b \\
	TN_a = \frac{x}{y} TN_b \\
	FN_a = \frac{x}{y} FN_b
\end{cases} \Longleftrightarrow
\begin{cases}
	TP_a + FN_a = 1 - x \\
	TN_a + FP_a = x \\
	TP_b + FN_b = 1 - y \\
	TN_b + FP_b = y \\
	x - TN_a = y - TN_b \\
	FN_a = FN_b \\
	TN_a = \frac{x}{y} TN_b \\
	FN_b = \frac{x}{y} FN_b
\end{cases} \Longleftrightarrow
\begin{cases}
	TP_a + FN_a = 1 - x \\
	TN_a + FP_a = x \\
	TP_b + FN_b = 1 - y \\
	TN_b + FP_b = y \\
	TN_a =  x - y + TN_b \\
	FN_a = FN_b \\
	x - y + TN_b = \frac{x}{y} TN_b \\
	1 = \frac{x}{y} \lor FN_b = 0
\end{cases}
$$
$$ \Longleftrightarrow
\begin{cases}
	TP_a + FN_a = 1 - x \\
	TN_a + FP_a = x \\
	TP_b + FN_b = 1 - y \\
	TN_b + FP_b = y \\
	TN_a =  x - y + TN_b \\
	FN_a = FN_b \\
	x - y = \frac{x-y}{y} TN_b \\
	x = y \lor FN_b = 0
\end{cases} \Longleftrightarrow
\begin{cases}
	TP_a + FN_a = 1 - x \\
	TN_a + FP_a = x \\
	TP_b + FN_b = 1 - y \\
	TN_b + FP_b = y \\
	TN_a =  x - y + TN_b \\
	FN_a = FN_b \\
	1 = \frac{1}{y} TN_b \lor x=y \\
	x = y \lor FN_b = 0 
\end{cases} \Longleftrightarrow
\begin{cases}
	TP_a + FN_a = 1 - x \\
	TN_a + FP_a = x \\
	TP_b + FN_b = 1 - y \\
	TN_b + FP_b = y \\
	TN_a =  x - y + TN_b \\
	FN_a = FN_b \\
	TN_b = y \lor x=y \\
	x = y \lor FN_b = 0
\end{cases}
$$
$$ \Longleftrightarrow
\begin{cases}
	TP_a = 1 - x \\
	FP_a = 0 \\
	TP_b = 1 - y \\
	FP_b = 0 \\
	TN_a = x \\
	FN_a = 0 \\
	TN_b = y \\
	FN_b = 0
\end{cases} \lor
\begin{cases}
	TP_a = 1 - x - FN_b \\
	FP_a = x - TN_b \\
	TP_b = 1 - y - FN_b  \\
	FP_b = y - TN_b  \\
	TN_a = TN_b \\
	FN_a = FN_b \\
	x = y 
\end{cases}
$$

\end{document}